\makeatletter\@ifpackageloaded{mathpazo}\@tempswatrue\@tempswafalse
  \DeclareFontFamily{OT1}{pzc}{}
  \DeclareFontShape{OT1}{pzc}{m}{it}{<-> s * [1.15] pzcmi7t}{}
  \DeclareMathAlphabet{\mathpzc}{OT1}{pzc}{m}{it}
\makeatletter\@ifpackageloaded{biblatex}{%
  \usepackage{csquotes} 
  \bibliography{../../references}
  \renewbibmacro{in:}{%
    \ifentrytype{incollection}{\printtext{\bibstring{in}\intitlepunct}}{}}
  \renewbibmacro{publisher+location+date}{%
    \iflistundef{publisher}
      {\setunit*{\addcomma\space}}
      {\setunit*{\addcomma\space}}%
    \printlist{publisher}%
    \setunit*{\addcomma\space}%
    \printlist{location}%
    \setunit*{\addcomma\space}%
    \usebibmacro{date}%
    \newunit}
  \DeclareFieldFormat[article]{pages}{#1\isdot}
  \DeclareFieldFormat[article,incollection,inproceedings,unpublished,eprint]{title}{#1\isdot}
  \DeclareFieldFormat[thesis]{title}{\mkbibemph{#1\isdot}}
  \DeclareFieldFormat[unpublished]{date}{(#1)\isdot}
  \DeclareFieldFormat[unpublished]{note}{#1\nopunct} 
  \DeclareFieldFormat[eprint]{date}{(#1)\isdot}
  \DeclareFieldFormat[eprint]{note}{#1\nopunct} 
  \DeclareFieldFormat[article]{journaltitle}{\mkbibemph{#1\isdot}}
  
  \AtEveryBibitem{%
    \ifentrytype{book}{}{
      \clearname{editor}
    }
  }
  \newbibmacro*{bbx:parunit}{%
    \ifbibliography
      {\setunit{\bibpagerefpunct}\newblock
       \usebibmacro{pageref}%
       \clearlist{pageref}%
       \setunit{\adddot\par\nobreak}}
      {}
  }
  \renewbibmacro*{doi+eprint+url}{%
    \usebibmacro{bbx:parunit}
    \iftoggle{bbx:doi}
      {\printfield{doi}}
      {}%
    \iftoggle{bbx:eprint}
      {\usebibmacro{eprint}}
      {}%
    \iftoggle{bbx:url}
      {\usebibmacro{url+urldate}}
      {}
  }
  \renewbibmacro*{eprint}{%
    \usebibmacro{bbx:parunit}
    \iffieldundef{eprinttype}
      {\printfield{eprint}}
      {\printfield[eprint:\strfield{eprinttype}]{eprint}}
  }
  \renewbibmacro*{url+urldate}{%
    \usebibmacro{bbx:parunit}
    \printfield{url}%
    \iffieldundef{urlyear}
      {}
      {\setunit*{\addspace}%
       \printtext[urldate]{\printurldate}}
  }
}{}\makeatother
\declaretheorem[numberwithin=section,refname={theorem,theorems},Refname={Theorem,Theorems}]{theorem}
\declaretheorem[sibling=theorem,style=definition]{definition}
\declaretheorem[sibling=theorem,name=Lemma]{lemma}
\declaretheorem[sibling=theorem,name=Proposition]{proposition}
\declaretheorem[sibling=theorem,style=definition,name=Remark]{remark}
\declaretheorem[sibling=theorem,style=definition,name=Example]{example}
\declaretheorem[sibling=theorem,name=Conjecture]{conjecture}
\declaretheorem[numbered=no,name=Question]{question}
\makeatletter\@ifpackageloaded{hyperref}{%
  \usepackage{xcolor}
  \definecolor{dark-red}{rgb}{0.4,0.15,0.15}
  \definecolor{dark-blue}{rgb}{0.15,0.15,0.4}
  \definecolor{medium-blue}{rgb}{0,0,0.5}
  \hypersetup{
    colorlinks,
    linkcolor={dark-red},
    citecolor={dark-blue},
    urlcolor={medium-blue}%
  }

}{}\makeatother
\providecommand{\abs}[1]{\lvert#1\rvert}
\providecommand{\floor}[1]{\lfloor#1\rfloor}
\providecommand{\Floor}[1]{\left\lfloor#1\right\rfloor}
\newcommand{\slope}{\pi}
\newcommand{\infw}[1]{%
  \ifcat\noexpand#1\relax\bm{#1}
  \else\mathbf{#1}\fi}          
\newcommand{\Lang}[2][]{\mathcal{L}_{#1}(#2)}
\newcommand{\pat}[2][]{\mathcal{P}_{#1}(#2)}
\DeclareMathOperator{\psw}{psw}
\newcommand{\oa}{\mathfrak{a}}
\newcommand{\ob}{\mathfrak{b}}
\newcommand{\oc}{\mathfrak{c}}
\newcommand{\N}{\mathbb{N}}
\newcommand{\Z}{\mathbb{Z}}
\newcommand{\keywords}[1]{\par\noindent{\footnotesize{\em Keywords\/}: #1}}
\begin{document}
  \title{Standard words and solutions of the word equation $X_1^2 \dotsm X_n^2 = (X_1 \dotsm X_n)^2$}
  \author[,1,2,3]{Jarkko Peltomäki\footnote{Corresponding author.\\E-mail addresses: \href{mailto:r@turambar.org}{r@turambar.org} (J. Peltomäki), \href{mailto:amsaar@utu.fi}{amsaar@utu.fi} (A. Saarela).}}
  \affil[1]{The Turku Collegium for Science and Medicine TCSM, University of Turku, Turku, Finland}
  \affil[2]{Turku Centre for Computer Science TUCS, Turku, Finland}
  \affil[3]{University of Turku, Department of Mathematics and Statistics, Turku, Finland}
  \author[3]{Aleksi Saarela}
  \date{}
  \maketitle
  \vspace{-1.5em}
  \noindent
  \hrulefill
  \begin{abstract}
    \vspace{-1em}
    \noindent
    We consider solutions of the word equation $X_1^2 \dotsm X_n^2 = (X_1 \dotsm X_n)^2$ such that the squares $X_i^2$
    are minimal squares found in optimal squareful infinite words. We apply a method developed by the second author for
    studying word equations and prove that there are exactly two families of solutions: reversed standard words and
    words obtained from reversed standard words by a simple substitution scheme. A particular and remarkable
    consequence is that a word $w$ is a standard word if and only if its reversal is a solution to the word equation
    and $\gcd(\abs{w}, \abs{w}_1) = 1$. This result can be interpreted as a yet another characterization for standard
    Sturmian words.

    We apply our results to the symbolic square root map $\sqrt{\cdot}$ studied by the first author and M. A.
    Whiteland. We prove that if the language of a minimal subshift $\Omega$ contains infinitely many solutions to the
    word equation, then either $\Omega$ is Sturmian and $\sqrt{\cdot}$-invariant or $\Omega$ is a so-called SL-subshift
    and not $\sqrt{\cdot}$-invariant. This result is progress towards proving the conjecture that a minimal and
    $\sqrt{\cdot}$-invariant subshift is necessarily Sturmian.

    \vspace{1em}
    \keywords{word equation, symbolic square root map, standard word, Sturmian word, optimal squareful word}
    \vspace{-1em}
  \end{abstract}
  \hrulefill

  \section{Introduction}
  Recently the second author of this paper solved a long-standing open problem by proving that if the equality of words
  $X^k = X_1^k \dotsm X_n^k$ holds for three positive values of $k$, then the words $X_1$, $\ldots$, $X_n$ commute
  \cite{2017:word_equations_where_a_power_equals_a_product,2019:word_equations_with_kth_powers_ofvariables}. If the
  equation is satisfied for at most two values of $k$, then noncommuting, or \emph{nonperiodic}, solutions can exist.
  In relation to Sturmian words, it was shown by the first author and M. A. Whiteland in
  \cite{2017:a_square_root_map_on_sturmian_words} (see also \cite{diss:jarkko_peltomaki}) that reversed standard words
  form a large nonperiodic solution class when $k = 1, 2$. More precisely, the research of
  \cite{2017:a_square_root_map_on_sturmian_words} concerns solutions of the word equation
  \begin{equation}\label{eq:square}
    X_1^2 \dotsm X_n^2 = (X_1 \dotsm X_n)^2
  \end{equation}
  such that the words $X_i$ are among the following six words for some fixed integers $\oa \geq 1$ and $\ob \geq 0$:
  \begin{alignat}{2}\label{eq:min_squares}
    &S_1 = 0,                 && S_4 = 10^\oa, \nonumber \\
    &S_2 = 010^{\oa-1}, \quad && S_5 = 10^{\oa+1}(10^\oa)^\ob, \\
    &S_3 = 010^\oa,           && S_6 = 10^{\oa+1}(10^\oa)^{\ob+1}. \nonumber
  \end{alignat}
  For example, the word $01010010$ is a solution when $\oa = 1$ and $\ob = 0$ because
  $(S_2 S_1 S_6)^2 = (01010010)^2 = (01)^2 \cdot 0^2 \cdot (10010)^2 = S_2^2 S_1^2 S_6^2$.

  Let us define standard words. Let $(d_k)$ be a sequence of positive integers, and define a sequence $(s_k)$ of words
  as follows:
  \begin{equation*}
    s_{-1} = 1, \quad s_0 = 0, \quad s_1 = s_0^{d_1-1} s_{-1}, \quad s_k = s_{k-1}^{d_k} s_{k-2} \quad \text{for $k \geq 2$}.
  \end{equation*}
  The words $s_k$ obtained in this manner are called \emph{standard words}. If $(d_k) = (2, 1, 1, 1, \ldots)$, then the
  words of the sequence $(s_k)$ are called \emph{Fibonacci words}. Notice that the word $01010010$ above is a reversed
  Fibonacci word (i.e., a Fibonacci word read from right to left). Notice also that the words \eqref{eq:min_squares} are
  reversed standard words (see \autoref{sec:preliminaries} for additional details).
  
  One of the main results of
  \cite{2017:a_square_root_map_on_sturmian_words} is that reversed standard words are solutions to \eqref{eq:square}.
  Another related solution class was also determined in \cite{2017:a_square_root_map_on_sturmian_words}: words obtained
  from reversed standard words by a certain substitution scheme. For example, consider the word $LSS$ and substitute $S$
  by a reversed standard word $w$ and $L$ by the word obtained from $w$ by exchanging its first two letters. If
  $w = 01010010$, then the resulting word $10010010 01010010 01010010$ is no longer a reversed standard word, but it is
  nevertheless a solution to \eqref{eq:square}. The main result of this paper is that there is no third solution type,
  that is, if we insist that the words $X_i$ in \eqref{eq:square} are among the words \eqref{eq:min_squares}, then a
  solution to \eqref{eq:square} is either
  \begin{itemize}
    \item a reversed standard word (a solution of type I) or
    \item obtained from a reversed standard word by a substitution scheme (a solution of type II).
  \end{itemize}
  For the precise statement, see \autoref{thm:main}. The essential component in our proofs is the method of assigning
  numerical values to letters and studying sums of letters geometrically developed by the second author in
  \cite{2017:word_equations_where_a_power_equals_a_product,2019:word_equations_with_kth_powers_ofvariables}. See also
  \cite{2018:studying_word_equations_by_a_method_of_weighted} and especially
  \cite{2018:an_optimal_bound_on_the_solution_sets_of_one-variable,2018:one-variable_word_equations_and_three-variable_constant-free}
  where the method was used to solve long-standing open problems on word equations.

  Our main result \autoref{thm:main} has the following surprising corollary.

  \begin{theorem}\label{thm:corollary}
    A word is standard if and only if its reversal is a type I solution to \eqref{eq:square}. In other words, a word
    $w$ is standard if and only if its reversal is a solution to \eqref{eq:square} and $\gcd(\abs{w}, \abs{w}_1) = 1$.
  \end{theorem}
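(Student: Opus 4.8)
The plan is to derive everything from \autoref{thm:main}, which asserts that every solution of \eqref{eq:square} built from the words \eqref{eq:min_squares} is either of type I or of type II, together with the already-known fact that every reversed standard word is such a solution. The first assertion is then essentially a matter of unwinding definitions: since the type I solutions are precisely the reversed standard words, a word $w$ is standard if and only if $\mirror{w}$ is a reversed standard word, i.e.\ a type I solution. The entire content therefore lies in the second formulation, namely that the coprimality condition $\gcd(\abs{w}, \abs{w}_1) = 1$ is exactly what distinguishes type I solutions from type II solutions. Because reversal is a bijection on positions, it preserves both the length and the number of $1$'s, so I may freely replace $\abs{w}$ and $\abs{w}_1$ by $\abs{\mirror{w}}$ and $\abs{\mirror{w}}_1$ throughout.

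Concretely, I would isolate two facts. The first is that every standard word $s$ satisfies $\gcd(\abs{s}, \abs{s}_1) = 1$. Writing $q_k = \abs{s_k}$ and $p_k = \abs{s_k}_1$, the defining recurrence gives $q_k = d_k q_{k-1} + q_{k-2}$ and $p_k = d_k p_{k-1} + p_{k-2}$ for $k \geq 2$, and a direct computation shows that the continuant determinant $D_k = q_k p_{k-1} - q_{k-1} p_k$ satisfies $D_k = -D_{k-1}$ with $D_0 = 1$, whence $\abs{D_k} = 1$ and $\gcd(p_k, q_k) = 1$. This is a classical property of standard words and may instead simply be quoted from \autoref{sec:preliminaries}. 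It settles the forward direction at once: if $w$ is standard, then $\mirror{w}$ is a type I solution, hence a solution, and $\gcd(\abs{w}, \abs{w}_1) = 1$.

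The second fact, where the real work sits, is that every type II solution $v$ satisfies $\gcd(\abs{v}, \abs{v}_1) > 1$. By the description of the substitution scheme in \autoref{thm:main}, such a $v$ is obtained from a word $P$ over $\{L, S\}$ of length $t \geq 2$ by substituting $S$ with a reversed standard word $u$ and $L$ with the word $u'$ obtained from $u$ by exchanging its first two letters. The key observation is that exchanging two letters is a permutation of positions, so $\abs{u'} = \abs{u}$ and $\abs{u'}_1 = \abs{u}_1$; consequently $v$ is a concatenation of $t$ factors all of the same length $m := \abs{u}$ and all containing the same number $\ell := \abs{u}_1$ of $1$'s. Hence $\abs{v} = tm$ and $\abs{v}_1 = t\ell$, so $\gcd(\abs{v}, \abs{v}_1) = t\,\gcd(m, \ell) \geq t \geq 2$.

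Combining the two facts finishes the argument. If $\mirror{w}$ is a solution with $\gcd(\abs{w}, \abs{w}_1) = 1$, then by \autoref{thm:main} it is of type I or type II; the second fact rules out type II, so $\mirror{w}$ is a type I solution and $w$ is standard. I expect the only delicate point to be this second fact, whose validity rests entirely on reading off from the precise type II definition that the pattern has length at least $2$ and that the two substitution images share the same length and the same number of $1$'s. Granting the invariance of the $1$-count under the swap of the first two letters, the obstruction $t \geq 2$ dividing both $\abs{v}$ and $\abs{v}_1$ is immediate.
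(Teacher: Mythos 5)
Your route is the same as the paper's: the forward direction combines the fact that reversed standard words are solutions with the classical coprimality $\gcd(\abs{w},\abs{w}_1)=1$ of standard words, and the backward direction applies the dichotomy of \autoref{thm:main} and rules out type II by observing that the pattern length $t \geq 2$ divides both $\abs{v}$ and $\abs{v}_1$; this last computation is exactly the paper's (the paper writes $\gcd(\abs{w},\abs{w}_1) \geq \abs{u}$, forcing $\abs{u}=1$ and contradicting nontriviality).

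There is, however, one genuine gap. You paraphrase \autoref{thm:main} as asserting that \emph{every} solution to \eqref{eq:square} is of type I or type II, but the theorem states this dichotomy only for \emph{primitive} solutions; a nonprimitive solution is only guaranteed to be a power of a primitive one. Such solutions exist: for $\oa \geq 3$ the word $00 = S_1 S_1$ is a solution, and it is neither a reversed standard word nor of the form $\pat[S]{u}$ with $u$ nontrivial. Hence your concluding step, ``if $\mirror{w}$ is a solution with $\gcd(\abs{w},\abs{w}_1)=1$, then by \autoref{thm:main} it is of type I or type II,'' is not licensed as written: you must first show that such a $\mirror{w}$ is primitive. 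The paper inserts exactly this step --- if $\mirror{w}=v^k$ with $k\ge 2$, then $k$ divides both $\abs{w}$ and $\abs{w}_1$, contradicting coprimality --- which is the same divisibility trick you already deploy against type II, so the repair is a single sentence. But without it, the dichotomy you invoke is stronger than what \autoref{thm:main} actually says, and the argument as proposed does not go through for nonprimitive solutions.
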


  This is indeed remarkable given how strikingly different the definition of standard words is compared to
  \eqref{eq:square}.

  Standard words are the building blocks of the important and widely studied Sturmian words. Sturmian words are often
  defined as the infinite words having $n+1$ factors of each length $n$, but a more useful definition is that a
  Sturmian word is an infinite word that shares a language with a standard Sturmian word, and a standard Sturmian word
  is simply a limit of a sequence of standard words \cite[Proposition~2.2.24]{2002:algebraic_combinatorics_on_words}.
  Hence \autoref{thm:corollary} can be reinterpreted as follows.

  \begin{theorem}
    An infinite word is a standard Sturmian word if and only if it is a limit of the reversals of solutions $w$ to
    \eqref{eq:square} such that $\gcd(\abs{w}, \abs{w}_1) = 1$.\footnote{A tidier statement would be obtained if we reversed the words \eqref{eq:min_squares}, but then the interpretation of \eqref{eq:min_squares} as minimal squares appearing in optimal squareful words is lost; see one paragraph below.}
  \end{theorem}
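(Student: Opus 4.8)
The plan is to derive this statement directly from \autoref{thm:corollary} together with the cited characterization of standard Sturmian words as limits of standard words, so that the proof reduces to substituting one description of standard words into the other. First I would recall that, by \cite[Proposition~2.2.24]{2002:algebraic_combinatorics_on_words}, an infinite word is a standard Sturmian word if and only if it is the limit (in the prefix topology) of a sequence $(s_k)$ of standard words; here convergence is guaranteed because each $s_{k-1}$ is a prefix of $s_k = s_{k-1}^{d_k} s_{k-2}$, so the nested prefixes determine a well-defined infinite word.

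The key step is to replace the phrase ``standard word'' by its equivalent description from \autoref{thm:corollary}. That theorem asserts that a finite word $v$ is standard if and only if $v = \mirror{w}$ for some solution $w$ of \eqref{eq:square} with $\gcd(\abs{w}, \abs{w}_1) = 1$. Since reversal preserves both length and the number of occurrences of each letter, the gcd condition is unchanged whether imposed on $w$ or on $\mirror{w}$, and since reversal is an involution, \autoref{thm:corollary} sets up a bijection between the family of standard words and the family of solutions of \eqref{eq:square} satisfying the gcd condition. Consequently the collection of sequences of standard words coincides with the collection of sequences of reversals of such solutions, and therefore their sets of limits coincide as well.

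Combining the two steps, an infinite word is a standard Sturmian word if and only if it is the limit of a sequence of reversals of solutions $w$ to \eqref{eq:square} with $\gcd(\abs{w}, \abs{w}_1) = 1$, which is precisely the claim. I do not expect a genuine obstacle here: the substantive content is entirely carried by \autoref{thm:corollary}, and the remaining work is the bookkeeping needed to check that the reversal bijection is compatible with passing to limits. The one point demanding a little care is that ``limit'' must be read as the limit of a convergent, prefix-nested sequence of standard words, exactly as in the definition of standard Sturmian words, rather than as an arbitrary accumulation point; once this reading is fixed, the equivalence is immediate.
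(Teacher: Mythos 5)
Your proposal is correct and matches the paper's own treatment: the paper gives no separate proof, presenting this theorem as an immediate reinterpretation of \autoref{thm:corollary} via the characterization of standard Sturmian words as limits of standard words from \cite[Proposition~2.2.24]{2002:algebraic_combinatorics_on_words}, which is exactly your substitution argument. Your added care about reading ``limit'' as the limit of a prefix-nested sequence (as in the definition of standard Sturmian words) is a sensible clarification of the same approach, not a departure from it.
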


  This is a surprising connection between a priori unrelated objects: standard Sturmian words can be characterized as
  the infinite words having $n+1$ factors of each length $n$ that are left-special
  \cite[Proposition~2.1.22]{2002:algebraic_combinatorics_on_words}. An infinite word $\infw{w}$ is \emph{left-special}
  if $a\infw{w}$ has the same language as $\infw{w}$ for all letters $a$ in the alphabet of $\infw{w}$.

  The specific solutions to \eqref{eq:square} considered in this paper were originally considered as a tool to
  construct fixed points for the symbolic square root map acting on optimal squareful words. An \emph{optimal squareful
  word} $\infw{w}$ is an aperiodic word such that each position of $\infw{w}$ begins with a square and the number of
  minimal squares occurring in $\infw{w}$ is the least possible. K. Saari proves in
  \cite{2010:everywhere_alpha-repetitive_sequences_and_sturmian_words} that an optimal squareful word $\infw{w}$
  contains exactly six minimal squares and there exists $\oa \geq 1$ and $\ob \geq 0$ such that each minimal
  square $X^2$ in $\infw{w}$ is such that $X$ is among the words \eqref{eq:min_squares}. Let $\infw{w}$ be an optimal
  squareful word and write it as a product of minimal squares: $\infw{w} = X_1^2 X_2^2 \dotsm$. The \emph{square root}
  $\sqrt{\infw{w}}$ of $\infw{w}$ is the word $X_1 X_2 \dotsm$ obtained by removing half of each square $X_i^2$.

  Saari showed that Sturmian words are optimal squareful
  \cite[Thm.~20]{2010:everywhere_alpha-repetitive_sequences_and_sturmian_words}, and the first author and Whiteland
  showed in \cite[Thm.~9]{2017:a_square_root_map_on_sturmian_words} that the square root map preserves the language of
  Sturmian words: if $\infw{s}$ is a Sturmian word, then typically $\infw{s} \neq \sqrt{\infw{s}}$, but $\infw{s}$ and
  $\sqrt{\infw{s}}$ have the same set of factors. This raised the question if other, non-Sturmian, and optimal
  squareful words with such a peculiar property exist. Clearly words having arbitrarily long prefixes that are squares
  of solutions to \eqref{eq:square} are fixed points of the square root map. The type II solutions of \eqref{eq:square}
  give in this way rise to non-Sturmian fixed points. If $\infw{u}$ is such a fixed point and $\infw{v}$ has the same
  language as $\infw{u}$, then typically $\sqrt{\infw{v}}$ has the same language as $\infw{v}$. Therefore the solutions
  of \eqref{eq:square} can be used to construct infinite words having interesting dynamics with respect to the square
  root map. The dynamics of these non-Sturmian words is further studied in
  \cite{2020:more_on_the_dynamics_of_the_symbolic_square_root}.

  While non-Sturmian words whose language is preserved by the square root map exist, Sturmian words satisfy a strong
  property: a Sturmian subshift is $\sqrt{\cdot}$-invariant (a subshift is Sturmian if it consists of Sturmian words).
  This property is not satisfied by the minimal subshifts related to the infinite words constructed from type II
  solutions (see \autoref{prp:sl_periodic}), and no further examples are known. We formulate a question of
  \cite{2017:a_square_root_map_on_sturmian_words} as the following conjecture stating a characterization of Sturmian
  subshifts.

   \begin{conjecture}\label{conjecture}
    Let $\Omega$ be a minimal optimal squareful subshift. Then $\Omega$ is a Sturmian subshift if and only if
    $\sqrt{\Omega} \subseteq \Omega$.
  \end{conjecture}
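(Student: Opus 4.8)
The plan is to prove the two implications of the equivalence separately, reducing the hard direction to the dichotomy that is the paper's main theorem. The forward implication, that a Sturmian subshift $\Omega$ satisfies $\sqrt{\Omega} \subseteq \Omega$, is already available: Saari showed that every Sturmian word is optimal squareful, and Peltomäki and Whiteland showed that the square root map preserves the Sturmian language, so $\sqrt{\infw{s}}$ lies in the same subshift as $\infw{s}$ for every Sturmian $\infw{s} \in \Omega$. Hence $\sqrt{\Omega} \subseteq \Omega$, and nothing further is required in this direction.

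The substance lies in the reverse implication: assuming that $\Omega$ is a minimal optimal squareful subshift with $\sqrt{\Omega} \subseteq \Omega$, we must conclude that $\Omega$ is Sturmian. The strategy is to feed the invariance hypothesis into the dichotomy. Concretely, if one can show that $\sqrt{\cdot}$-invariance forces the language $\Lang{\Omega}$ to contain infinitely many solutions of \eqref{eq:square}, then the main theorem applies and $\Omega$ must be either a Sturmian subshift or an SL-subshift. The SL alternative is then eliminated by \autoref{prp:sl_periodic}, which asserts precisely that an SL-subshift is \emph{not} $\sqrt{\cdot}$-invariant, directly contradicting our hypothesis. Only the Sturmian case survives, which is exactly the desired conclusion.

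The hard part, and the reason the statement remains a conjecture rather than a theorem, is establishing the bridging implication ``$\sqrt{\Omega} \subseteq \Omega$ $\Rightarrow$ $\Lang{\Omega}$ contains infinitely many solutions of \eqref{eq:square}.'' The difficulty is one of matching registers: invariance is an asymptotic, dynamical condition on how $\sqrt{\cdot}$ moves points within $\Omega$, whereas the dichotomy demands the combinatorial input of arbitrarily long factors that are genuine squares $(X_1 \dotsm X_n)^2 = X_1^2 \dotsm X_n^2$ with each $X_i$ among the minimal squares \eqref{eq:min_squares}. To cross this gap I would study the action of $\sqrt{\cdot}$ together with minimality: since every factor of $\Omega$ recurs with bounded gaps, and since invariance means the square-root image of each word of $\Omega$ is again an optimal squareful word of $\Omega$, the minimal-square factorizations of $\infw{w}$ and of $\sqrt{\infw{w}}$ must be mutually compatible across the whole subshift. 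I would try to show that this compatibility propagates into ever longer square prefixes, i.e.\ into long factors of the required form, for instance by iterating the map and tracking how a long square factor in $\sqrt{\infw{w}}$ pulls back to a longer structured factor in $\infw{w}$.

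Making this propagation rigorous is the genuine obstacle. The delicate point is to exclude the possibility that invariance could hold for some exotic minimal subshift whose language contains only finitely many solutions of \eqref{eq:square}; in that case the dichotomy would be inapplicable and the argument would collapse. Ruling this out seems to require a quantitative control of the square-root dynamics that goes beyond the combinatorics of individual solutions classified in \autoref{thm:main}, and it is precisely this control that the present paper does not yet achieve, which is why only the conditional dichotomy is proved here.
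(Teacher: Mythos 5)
This statement is a conjecture in the paper: the authors do not prove it, and your proposal does not contain a proof either---as you yourself acknowledge. What you have written is a faithful reconstruction of exactly the partial progress the paper makes. The forward implication is \autoref{thm:sturmian_invariant} (together with Saari's result that Sturmian words are optimal squareful), just as you cite. For the converse, the conditional dichotomy you invoke is precisely \autoref{thm:main_square_root}: a minimal aperiodic subshift whose language contains infinitely many solutions to \eqref{eq:square} is either Sturmian and invariant, or an SL-subshift and not invariant, the SL alternative being incompatible with the invariance hypothesis via \autoref{prp:sl_periodic} (strictly, that proposition gives a point of $\Omega$ whose square root is purely periodic, and non-invariance then follows from aperiodicity of $\Omega$, which is automatic since optimal squareful words are aperiodic by definition). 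The bridging implication you single out as the obstacle---that $\sqrt{\Omega} \subseteq \Omega$ forces infinitely many solutions of \eqref{eq:square} into the language of $\Omega$---is exactly the case the paper leaves open: the authors state explicitly that the conjecture remains open when the language contains only finitely many solutions. So your assessment of what is known, what would suffice, and what is missing agrees with the paper's.

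One concrete caution about the propagation strategy you sketch for the bridge (iterating $\sqrt{\cdot}$ and pulling long square factors back to longer structured factors): the paper itself supplies an obstruction. \autoref{prp:no_square_prefix} constructs an aperiodic optimal squareful word that is a fixed point of the square root map yet has exactly one square prefix, so fixed-point or invariance hypotheses do not by themselves generate arbitrarily long square prefixes, much less squares of solutions to \eqref{eq:square}. The word in that construction generates a non-minimal subshift, so any successful bridging argument will have to use minimality (bounded-gap recurrence) in an essential, quantitative way rather than as a background assumption; this is precisely where your sketch would need a genuinely new idea.
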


  We apply our main result \autoref{thm:main} and make progress towards this conjecture by proving the following
  result.
  
  \begin{theorem}
   Let $\Omega$ be a minimal subshift whose language contains infinitely many solutions to \eqref{eq:square} such that
   $\sqrt{\Omega} \subseteq \Omega$. Then $\Omega$ is Sturmian.
  \end{theorem}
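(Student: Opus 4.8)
The plan is to derive the statement from the dichotomy implicit in \autoref{thm:main}: a minimal subshift whose language carries infinitely many solutions of \eqref{eq:square} is either Sturmian or an SL-subshift, and the SL alternative is then excluded by the failure of $\sqrt{\cdot}$-invariance recorded in \autoref{prp:sl_periodic}. As a preliminary reduction I would note that over the binary alphabet there are only finitely many words of each length, so a language containing infinitely many solutions contains solutions of unbounded length; moreover, since $\Omega$ is minimal, its language $\Lang{\Omega}$ is uniformly recurrent, and the minimal squares \eqref{eq:min_squares} building these solutions may be taken to share one fixed pair $\oa,\ob$. By \autoref{thm:main} each solution is either a reversed standard word (type~I) or is obtained from a reversed standard word by the substitution scheme (type~II), so by the pigeonhole principle at least one of the two classes already contributes words of unbounded length to $\Lang{\Omega}$.

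I would then analyse the two cases separately. Suppose first that $\Lang{\Omega}$ contains type~I solutions of unbounded length, that is, arbitrarily long reversed standard words. The key claim is that reversed standard words of unbounded length lying inside a single uniformly recurrent language are necessarily compatible: they must be the reversals of the standard words associated with one standard Sturmian word $\infw{s}$. Since the factor set of a Sturmian word is closed under reversal, the factors of these reversed standard words then exhaust the Sturmian language $\Lang{\infw{s}}$, whence $\Lang{\infw{s}} \subseteq \Lang{\Omega}$. Thus the Sturmian subshift generated by $\infw{s}$ is a nonempty subshift of $\Omega$, and minimality of $\Omega$ forces equality, so $\Omega$ is Sturmian. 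Suppose instead that $\Lang{\Omega}$ contains type~II solutions of unbounded length; here I would match the explicit substitution description of type~II solutions against the definition of SL-subshifts to conclude that $\Omega$ is an SL-subshift.

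Finally I would invoke \autoref{prp:sl_periodic}, according to which an SL-subshift is not $\sqrt{\cdot}$-invariant. Since the hypothesis $\sqrt{\Omega}\subseteq\Omega$ rules this out, the second case cannot occur, and therefore $\Omega$ is Sturmian. The main obstacle is the compatibility claim in the type~I case: showing that arbitrarily long reversed standard words occurring in one minimal language are automatically nested and thus pin down a unique Sturmian slope. This is where the genuine combinatorial work lies, resting on the rigidity of standard and central words in a uniformly recurrent setting (compare \autoref{thm:corollary}, which already links standard words, reversed solutions, and the coprimality condition). The type~II case should be comparatively routine, being essentially a translation between the substitution scheme of \autoref{thm:main} and the generating substitution of SL-subshifts.
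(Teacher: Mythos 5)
Your overall architecture is the same as the paper's (a Sturmian alternative driven by reversed standard words, and an SL alternative excluded via \autoref{prp:sl_periodic}), but your case analysis built on \autoref{thm:main} has genuine gaps. First, \autoref{thm:main} does \emph{not} say every solution is of type I or type II: it says every \emph{primitive} solution is, and a nonprimitive solution is a \emph{power} of a primitive one. So ``infinitely many solutions'' is compatible with all long solutions being powers $v^k$ of a single short primitive solution $v$; then neither of your two classes contains words of unbounded length, and in fact $v^\omega \in \Omega$, so that $\Omega = \sigma(v^\omega)$ is periodic. This scenario must be excluded by aperiodicity of $\Omega$, an assumption that the precise version of the statement (\autoref{thm:main_square_root}) makes explicitly and which the paper's proof invokes at exactly this point. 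Your proposal never uses aperiodicity, yet it is indispensable a second time at the end: \autoref{prp:sl_periodic} only produces a word of $\Omega$ whose square root is \emph{purely periodic}, and to conclude $\sqrt{\Omega} \not\subseteq \Omega$ one needs to know that purely periodic words do not belong to $\Omega$. (Without aperiodicity the statement itself fails: a periodic Sturmian subshift $\sigma(S^\omega)$, $S$ a reversed standard word, is minimal, its language contains the solutions $S^k$, it is invariant by the Remark in Section 5, yet it is not Sturmian in the paper's sense.)

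Second, your type II case is not ``comparatively routine''; as stated it is wrong. An SL-subshift is defined relative to one \emph{fixed} pair $(S, L(S))$, but type II solutions of unbounded length may have base words $S_n$ of unbounded length. Since a nontrivial primitive pattern word contains both letters $S$ and $L$, each such solution contains $S_n$ and $L(S_n)$ as factors, so $\Lang{\Omega}$ then contains infinitely many reversed standard words and the correct conclusion is that $\Omega$ is Sturmian, not an SL-subshift. This is exactly why the paper's dichotomy is not ``long type I versus long type II'' but ``infinitely many versus finitely many reversed standard words in $\Lang{\Omega}$'': in the second horn the bases of all sufficiently long solutions are confined to a finite set, hence to one fixed $S$ along a subsequence, and a limit of these solutions is an SL-word in $\Omega$ (aperiodicity again excludes the limit lying in $\sigma(S^\omega)$), whence $\Omega$ is an SL-subshift by minimality. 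Finally, your assessment of where the hard work lies is inverted: the ``compatibility'' of arbitrarily long reversed standard words, which you flag as the main obstacle, is not needed at all. The paper simply takes a limit point $\infw{u} \in \Omega$ of words beginning with longer and longer reversed standard words, notes that $\infw{u}$ is a standard Sturmian word, and lets minimality force $\Omega = \sigma(\infw{u})$; no rigidity argument about central words in a uniformly recurrent language is required.
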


  See \autoref{thm:main_square_root} for a slightly more general result. We leave \autoref{conjecture} open in the case
  that the language of $\Omega$ contains finitely many solutions to \eqref{eq:square}.

  The structure of the paper is as follows. The next section recalls preliminary notions and needed results. After
  this, we prove the main result \autoref{thm:main} in \autoref{sec:main}. In the following \autoref{sec:enumeration},
  we provide a formula for counting solutions to \eqref{eq:square} of length $n$. In \autoref{sec:applications},
  we apply the main results to the study of the square root map and prove \autoref{thm:main_square_root}. We end the
  paper by \autoref{sec:remarks} which contains additional results concerning the square root map.

  \section{Preliminaries}\label{sec:preliminaries}
  We use standard definitions and notation in combinatorics on words. The book
  \cite{2002:algebraic_combinatorics_on_words} is a standard reference for these concepts, and its second chapter is a
  standard reference for Sturmian words. Let $A$ be an \emph{alphabet}, i.e., a finite set of \emph{letters}, or
  \emph{symbols}. By concatenating the letters of $A$, we obtain the set of \emph{words} over $A$ denoted by $A^*$. The
  set $A^*$ contains the empty word $\varepsilon$, and we set $A^+ = A^* \setminus \{\varepsilon\}$. The length
  $\abs{w}$ of a word $w$ is the number of letters in $w$, and by $\abs{w}_a$ we mean the number of occurrences of the
  letter $a$ in $w$. A word $w$ is \emph{primitive} if $w = u^n$ only when $n = 1$. We often use the synchronization
  property of primitive word which states that a primitive word $w$ occurs in $w^2$ exactly twice: as a prefix and as a
  suffix. By a \emph{language} we simply mean a set of words, and by a language of a word we mean its set of factors.
  If $w$ is a word such that $\abs{w} \geq 2$, then by $L(w)$ we mean the word obtained from $w$ by exchanging its
  first two letters. A word $u$ is \emph{conjugate} to $v$ if there exists words $x$ and $y$ such that $u = xy$ and
  $v = yx$.
  
  We also consider infinite words over $A$ which are simply mappings $\N \to A$. An infinite word is \emph{purely
  periodic} if it is of the form $v^\omega$ and \emph{ultimately periodic} if it is of the form $uv^\omega$. An
  infinite word that is not ultimately periodic is called \emph{aperiodic}. If $\infw{w} = a_0 a_1 a_2 \dotsm$,
  $a_i \in A$, is an infinite word, then the \emph{shift} $T\infw{w}$ of $\infw{w}$ is the infinite word
  $a_1 a_2 \dotsm$. Let $\mathcal{L}$ be an extendable and factor-closed language. The set $\Omega$ of infinite words
  having language $\mathcal{L}$ is a \emph{subshift} with language $\mathcal{L}$. If $\mathcal{L}$ is the language of
  an infinite word $\infw{w}$, then we say that $\Omega$ is the subshift generated by $\infw{w}$, and we write
  $\Omega = \sigma(\infw{w})$. If every word in a subshift is aperiodic, then we call the subshift \emph{aperiodic}. A
  subshift is \emph{minimal} if it does not contain nonempty subshifts as proper subsets. A subshift is \emph{Sturmian}
  if all words in it are Sturmian words.

  We repeat the definition of optimal squareful words from the introduction. A square $w^2$ is \emph{minimal} if for
  each square prefix $u^2$ of $w^2$ it follows that $u = w$. Next we give the definition of squareful words; see
  \cite{2010:everywhere_alpha-repetitive_sequences_and_sturmian_words} for the more general definition of an
  everywhere $\alpha$-repetitive word.

  \begin{definition}
    An infinite word $\infw{w}$ is \emph{squareful} if each position of $\infw{w}$ begins with a square and the number
    of minimal squares occurring in $\infw{w}$ is finite. An infinite word $\infw{w}$ is \emph{optimal squareful} if it
    is aperiodic, squareful, and the number of distinct minimal squares in $\infw{w}$ is the least possible among
    aperiodic and squareful words.
  \end{definition}

  Saari proves in \cite[Thm.~16]{2010:everywhere_alpha-repetitive_sequences_and_sturmian_words} that if $\infw{w}$ is
  optimal squareful, then the minimal squares occurring in $\infw{w}$ are (up to renaming of letters) the squares of
  the words \eqref{eq:min_squares} for some fixed $\oa \geq 1$ and $\ob \geq 0$. In particular, a squareful word
  containing at most five distinct minimal squares is necessarily ultimately periodic. Saari characterizes optimal
  squareful words in \cite[Thm.~17]{2010:everywhere_alpha-repetitive_sequences_and_sturmian_words} as follows.

  \begin{proposition}\label{prp:optimal_squareful_characterization}
    An aperiodic infinite word $\infw{w}$ is optimal squareful if and only if (up to renaming of letters) there exist
    integers $\oa \geq 1$ and $\ob \geq 0$ such that $\infw{w}$ is an element of the language\footnote{The language $(u + v)^\omega$ consists of the infinitely long concatenations of the words $u$ and $v$.}
    \begin{equation*}
      0^* (10^\oa)^* (10^{\oa+1}(10^\oa)^\ob + 10^{\oa+1}(10^\oa)^{\ob+1})^\omega = S_1^* S_4^* (S_5 + S_6)^\omega.
    \end{equation*}
  \end{proposition}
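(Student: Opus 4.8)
The plan is to prove both implications using Saari's classification of the minimal squares (stated just above as his Theorem~16): in any optimal squareful word the minimal squares are exactly the six words $S_1^2, \dots, S_6^2$, and an aperiodic squareful word must contain at least six distinct minimal squares. The guiding idea is that every $S_i$ is built from the two blocks $10^\oa$ and $10^{\oa+1}$ (together with the $0$-initial words $S_1, S_2, S_3$), so the condition ``every position begins with a square'' can be converted into purely local constraints on the lengths of the maximal runs of $0$'s between consecutive $1$'s.

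For the forward implication I would argue as follows. Assume $\infw{w}$ is optimal squareful, so, after renaming, its minimal squares lie in $\{S_1^2, \dots, S_6^2\}$; in particular $\infw{w}$ is binary. First I would show that between two consecutive $1$'s the run of $0$'s has length $\oa$ or $\oa+1$: a run of length $<\oa$ or $>\oa+1$ would make it impossible for the position of the preceding $1$ to begin with any of $S_4^2, S_5^2, S_6^2$, the only minimal squares starting with $1$, contradicting squarefulness. Hence, after an initial block of $0$'s, $\infw{w}$ is a concatenation of blocks $10^\oa$ and $10^{\oa+1}$. Next I would look at a $1$ starting a block $10^{\oa+1}$: its minimal square is forced to be $S_5^2$ or $S_6^2$, and $S_5^2$ (respectively $S_6^2$) forces exactly $\ob$ (respectively $\ob+1$) blocks $10^\oa$ before the next block $10^{\oa+1}$. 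This is precisely the constraint defining the factor $(S_5 + S_6)^\omega$. Finitely many $10^\oa$ blocks can precede the first $10^{\oa+1}$ block (the factor $(10^\oa)^*$), and there must be infinitely many $10^{\oa+1}$ blocks, since otherwise $\infw{w}$ would be ultimately periodic, contradicting aperiodicity. Collecting these observations places $\infw{w}$ in $0^* (10^\oa)^* (S_5 + S_6)^\omega$.

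For the converse I would take $\infw{w}$ aperiodic and in the language and verify squarefulness position by position: a $0$ immediately followed by a $0$ begins $S_1^2$; the last $0$ of a run begins $S_2^2$ or $S_3^2$ depending on whether the following block is $10^\oa$ or $10^{\oa+1}$; a $1$ opening a block $10^\oa$ begins $S_4^2$; and a $1$ opening a block $10^{\oa+1}$ begins $S_5^2$ or $S_6^2$ according to the number of following $10^\oa$ blocks. All minimal squares that occur are among the six $S_i^2$, so $\infw{w}$ is squareful with at most six distinct minimal squares. Since $\infw{w}$ is aperiodic, Saari's result forbids five or fewer, so it has exactly six, which is the least possible; hence $\infw{w}$ is optimal squareful.

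The main obstacle is the minimality bookkeeping in the two block-counting steps, that is, checking not only that $S_5^2$ or $S_6^2$ is a square prefix at a $10^{\oa+1}$ position, but that it is the shortest one and that no shorter, block-unaligned square can occur there. Saari's Theorem~16 does the heavy lifting by guaranteeing a priori that the minimal square at such a position is one of the six $S_i^2$; without that input, classifying all minimal squares of an optimal squareful word from scratch would be the genuinely hard part.
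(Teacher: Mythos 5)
The paper itself offers no proof of this proposition: it is quoted as Saari's characterization (his Theorem~17), stated immediately after quoting his Theorem~16, so you are reconstructing Saari's result rather than paralleling an argument that appears in the paper. Your forward direction is sound. There Theorem~16 applies legitimately (the word is \emph{assumed} optimal squareful), every position begins with a minimal square and hence with one of the six $S_i^2$; the three squares beginning with $1$ pin every run of $0$'s between consecutive $1$'s to length $\oa$ or $\oa+1$; the squares $S_5^2$ and $S_6^2$ pin the number of $10^\oa$ blocks between consecutive $10^{\oa+1}$ blocks to exactly $\ob$ or $\ob+1$; and aperiodicity excludes the tails $0^\omega$ and $(10^\oa)^\omega$, giving membership in $0^*(10^\oa)^*(S_5+S_6)^\omega$.

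The genuine gap is in the converse. After exhibiting one of the six squares $S_i^2$ at every position of a word in the language, you assert that ``all minimal squares that occur are among the six $S_i^2$.'' That does not follow from what you proved: the minimal square at a position is the \emph{shortest} square beginning there, and nothing you have said excludes a shorter, block-unaligned square lying below your exhibited $S_i^2$. You flag this yourself as the main obstacle, but your proposed fix---invoking Saari's Theorem~16 to guarantee a priori that the minimal square at such a position is one of the six---is circular in this direction: the hypothesis of Theorem~16 is that the word is optimal squareful, which is exactly the conclusion you are trying to establish. Note that the gap affects only optimality, not squarefulness: since every position carries a square of length at most $2\abs{S_6}$, every minimal square of the word has length at most $2\abs{S_6}$, so only finitely many occur and the word is squareful; aperiodicity and Saari's bound then give \emph{at least} six minimal squares. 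But optimality needs \emph{at most} six, i.e.\ that your exhibited squares really are the minimal ones, and this must be checked by hand against the run structure. For instance, at the last $0$ of a run followed by $10^\oa$, a candidate $u = 010^j$ with $j \le \oa-2$ fails because $u^2$ demands a $1$ at position $j+3$ while the word still has a $0$ there; at a $1$ opening a block $10^{\oa+1}$, either $u = 10^j$ with $j \le \oa$ clashes with the first run of length $\oa+1$, or $u$ begins with $10^{\oa+1}$ and the second copy of $u$ would force a run of $\oa+1$ zeros after a $1$ lying in the $(10^\oa)$-part, where all runs have length exactly $\oa$. This finite case analysis is elementary, but it is the real content of the converse; without it (or an honest citation of Saari's Theorem~17, which is what the paper does), the proof is incomplete.
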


  Here the symbols $S_i$ refer to the words \eqref{eq:min_squares}, and we assume this throughout the paper.
  Similarly we always use $\oa$ and $\ob$ to refer to the parameters of \eqref{eq:min_squares}; the parameters are
  assumed to be fixed. Moreover, we often write ``minimal square'' to mean a square of one of the words
  \eqref{eq:min_squares}. We refer to the words \eqref{eq:min_squares} themselves as \emph{minimal square roots}.

  By \autoref{prp:optimal_squareful_characterization}, optimal squareful words may initially contain arbitrarily high
  powers of $S_1$ and $S_4$. In the setting of the papers
  \cite{2017:a_square_root_map_on_sturmian_words,2020:more_on_the_dynamics_of_the_symbolic_square_root} this does not
  happen, and we continue this tradition.

  \begin{definition}
    The language $\Lang{\oa,\ob}$ consists of all factors of the infinite words in the language
    \begin{equation*}
      (10^{\oa+1}(10^\oa)^\ob + 10^{\oa+1}(10^\oa)^{\ob+1})^\omega = (S_5 + S_6)^\omega.
    \end{equation*}
  \end{definition}

  \begin{definition}
    The language $\Pi(\oa,\ob)$ consists of all nonempty words in $\Lang{\oa,\ob}$ that are products of the squares of
    the words \eqref{eq:min_squares}.
  \end{definition}

  Recall the word equation \eqref{eq:square}:
  \begin{equation*}
    X_1^2 X_2^2 \dotsm X_n^2 = (X_1 X_2 \dotsm X_n)^2.
  \end{equation*}
  We define specific solutions to this equation with respect to the language $\Pi(\oa, \ob)$.

  \begin{definition}
    A nonempty word $w$ is a \emph{solution to \eqref{eq:square}} if $w^2 \in \Pi(\oa, \ob)$ and $w$ can be written as
    a product of minimal square roots $w = X_1 X_2 \dotsm X_n$ which satisfy the word equation \eqref{eq:square}. The
    solution $w$ is \emph{primitive} if $w$ is a primitive word.
  \end{definition}

  Notice that the factorization of a square of a solution as a product of minimal squares is unique because none of the
  minimal squares is a prefix of another minimal square.

  \begin{example}
    Let $w$ be the word $01010010$, $\oa = 1$, and $\ob = 0$. Then $w^2 = 01010010 01010010 = 010 S_5 S_6^2$, so $w^2$
    is a suffix of $S_6 S_5 S_6^2$ meaning that $w^2 \in \Lang{\oa,\ob}$. Moreover, we have $w^2 = S_2^2 S_1^2 S_6^2$,
    so $w^2 \in \Pi(\oa, \ob)$. Since $w = S_2 S_1 S_6$, we see that $S_1^2 S_1^2 S_6^2 = (S_2 S_1 S_6)^2$. Thus $w$ is
    a solution to \eqref{eq:square}. In fact, the word $w$ is a primitive solution to \eqref{eq:square} since $w$ is a
    primitive word.
  \end{example}

  Let us then define the symbolic square root map defined in the introduction.

  \begin{definition}
    Factorize a word $w$ in $\Pi(\oa, \ob)$ as a product of minimal squares: $w = X_1^2 \dotsm X_n^2$. The \emph{square
    root} $\sqrt{w}$ of $w$ is defined as the word $X_1 \dotsm X_n$.
  \end{definition}

  In other words, a word $w$ is a solution to \eqref{eq:square} if and only if $\sqrt{w^2} = w$.

  Next we introduce the substitution scheme mentioned in the introduction as a means to build new solutions out of
  known solutions.

  \begin{definition}
    Let $u = a_0 \dotsm a_{n-1}$ with $a_i \in \{S, L\}$ be a nonempty word over the alphabet $\{S, L\}$.  We say that
    the word $u$ is a \emph{pattern word} if $a_i = a_j$ whenever $i$ and $j$ are in the same orbit\footnote{The
    numbers $i$ and $j$ are in the same orbit if there exists $k_1$ and $k_2$ such that
    $2^{k_1}i \equiv 2^{k_2}j \pmod{n}$.} of the mapping $x \mapsto 2x \bmod n$. We say that $u$ is \emph{nontrivial}
    if $\abs{u} > 1$.

    If $w$ is a nonempty binary word and $u$ is a word over $\{S, L\}$, then by $\pat[w]{u}$ we mean the word obtained
    from $u$ by replacing $S$ by $w$ and $L$ by $L(w)$.
  \end{definition}

  \begin{example}
    If $u = LSS$ or $u = SLLSLSS$, then $u$ is a pattern word. Whenever $w$ is a suitable solution to
    \eqref{eq:square}, then $\pat[w]{u}$ is also a solution to \eqref{eq:square} for a pattern word $u$ (see
    \autoref{prp:pattern_solution} below). For example, when $u = LSS$ and $w = 01010010$, then
    $\pat[w]{u} = 100100100101001001010010$ is a solution.
  \end{example}

  In what follows, we often use the symbol $S$ and the word $w$ interchangeably, but it will always be clear if $S$
  stands for the letter $S$ or a binary word. Notice that the map $\mathcal{P}_S$ is injective when the first two
  letters of $S$ are distinct.

  The definition of standard words was already given in the introduction. By a \emph{reversed} standard word, we mean a
  standard word read from right to left. Standard words are primitive; see
  \cite[Lemma~2.2.3]{2002:algebraic_combinatorics_on_words}. Consider then an integer sequence $(d_k)$ and the
  corresponding sequence $(s_k)$ of standard words. Replacing $(d_k)$ by $(d_2 + 1, d_3, d_4, \ldots)$ if necessary, we
  may assume that $d_1 \geq 2$ so that $11$ does not occur in the words $s_k$. It is not difficult to see that $d_1$
  (resp. $d_1 - 1$) is the maximum (resp. minimum) number of occurrences of the letter $0$ between two letters $1$ in
  the corresponding standard words $s_k$. Similarly $d_2$ (resp. $d_2 - 1$) indicates the maximum (resp. minimum)
  number of occurrences of $10^{d_1-1}$ between two occurrences of $10^{d_1}$. Hence we see that the words $s_k$ belong
  to the language $\Lang{\oa,\ob}$ with $\oa = d_1 - 1$ and $\ob = d_2 - 1$. The limit $\infw{s}$ of $(s_k)$ is a
  standard Sturmian word, and it is an optimal squareful word with parameters $\oa$ and $\ob$. Whenever we mention the
  words $S_i$ of \eqref{eq:min_squares} in relation to a standard word $w$, we understand that the parameters $\oa$ and
  $\ob$ related to $S_i$ and $w$ are the same. Given an integer sequence $(\oa+1, \ob+1, \ldots)$, the corresponding
  sequence of standard words begins as follows:
  \begin{align*}
    s_{-1} &= 1, \\
    s_0    &= 0, \\
    s_1    &= 0^{\oa}1, \\
    s_2    &= (0^{\oa}1)^{\ob+1} 0.
  \end{align*}
  Notice that the reversals of $s_0$ and $s_1$ equal the words $S_1$ and $S_4$ of \eqref{eq:min_squares}. Moreover, we
  have $S_2 = L(S_4)$, but $S_2$ is also a reversed standard word corresponding to $(\oa, 1, \ldots)$. The word $S_3$
  is a reversed standard word corresponding to $(\oa+1, 1, \ldots)$, and the word $L(S_3)$ is a standard word
  corresponding to $(\oa+2, \ldots)$. Similarly reversals of the words $S_5$ and $S_6$ correspond to standard words in
  the sequences $(\oa+1, \ob, 1, \ldots)$ and $(\oa+1, \ob+1, 1, \ldots)$. Moreover, the words $L(S_5)$ and $L(S_6)$
  are standard words. Therefore the words \eqref{eq:min_squares} found in a standard Sturmian word with parameters
  $\oa$ and $\ob$ are all reversals of standard words and some of them are related by the mapping $L$. It is
  straightforward to see that all reversed standard word of length at least $2$ begin with two distinct letters.

  \section{Characterization of Solutions}\label{sec:main}
  We can now formulate the following theorem which is the main result of this paper.

  \begin{theorem}\label{thm:main}
    Let $w$ be a primitive solution to \eqref{eq:square}. Then
    \begin{enumerate}[(I)]
      \item $w$ is a reversed standard word or
      \item there exists a reversed standard word $S$ with $\abs{S} > \abs{S_6}$ and a nontrivial and primitive pattern
            word $u$ such that $w = \pat[S]{u}$.
    \end{enumerate}
    Conversely, if (I) or (II) holds for a word $w$, then $w$ is a primitive solution to \eqref{eq:square}. Moreover,
    if $w$ is a nonprimitive solution to \eqref{eq:square}, then $w$ is a power of a primitive solution to
    \eqref{eq:square}.
  \end{theorem}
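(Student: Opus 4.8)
The plan is to treat the three assertions separately: the converse (types~(I) and~(II) are solutions), the reduction of nonprimitive solutions, and the forward classification, which is the real content. I would dispatch the converse first, since it is essentially contained in the cited work. Reversed standard words are solutions by the results of \cite{2017:a_square_root_map_on_sturmian_words}, and a pattern substitution of a solution is again a solution by \autoref{prp:pattern_solution}; combining these shows that every word of type~(I) or type~(II) is a primitive solution, with primitivity of $\pat[S]{u}$ following from primitivity of the pattern word $u$ together with the fact that the first two letters of $S$ differ (so $\mathcal{P}_S$ is injective). For the nonprimitive statement, let $w$ be any solution and write $w = u^k$ with $u$ primitive and $k \geq 2$. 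Then $w^2 = u^{2k}$ is purely periodic of period $\abs{u}$, and since the minimal-square factorization of a square of a solution is unique (no minimal square is a prefix of another), this factorization is compatible with the period $\abs{u}$. Reading off $\sqrt{w^2} = w$ one then sees that $u^2$ is itself a product of minimal squares with $\sqrt{u^2} = u$, so $u$ is a primitive solution and $w = u^k$ is a power of one. This is the least difficult part.

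The core is the forward direction, and here I would use the weighted-sum method. Fix a primitive solution $w = X_1 \dotsm X_n$ with $L = \abs{w}$, put $p_k = \abs{X_1 \dotsm X_k}$, and assign to the letters the weights $f(0) = \abs{w}_1$ and $f(1) = -\abs{w}_0$, so that $f(w) = 0$. Let $g(p)$ denote the total weight of the length-$p$ prefix of $w^2$. Reading $w^2 = (X_1 \dotsm X_n)^2$ and using $f(w) = 0$ shows that the profile $g$ is periodic, $g(p + L) = g(p)$, whereas reading $w^2 = X_1^2 \dotsm X_n^2$ shows $g(2p_k) = 2g(p_k)$. Combining the two readings gives the doubling relation $g(2p_k \bmod L) = 2\,g(p_k)$, and, more sharply, that the square $X_k^2$ occurs cyclically in $w$ at position $2p_{k-1} \bmod L$; in particular each $2p_{k-1} \bmod L$ is again a block-start position. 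This is the geometric heart of the argument: the passage from the product side to the square side is governed by the doubling map $x \mapsto 2x \bmod n$, which is precisely the map appearing in the orbit condition defining a pattern word.

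From here I would extract the block structure by the usual contraction argument of the method. The values of $g$ on one period lie in a finite interval $[m,M]$, while the doubling relation, iterated along the breakpoint orbits $\beta \mapsto 2\beta \bmod L$, forces the prefix-sum values at the block-start positions into a very restricted set: a value with $g(\beta) \notin [m/2, M/2]$ is immediately impossible, and repeated doubling contracts the admissible values, with the extremes $m, M$ attained only at the occurrences of the rare letter~$1$. This pins down the positions of the $1$'s and hence reconstructs $w$. I expect two outcomes. Either these positions are forced into the single balanced (mechanical) pattern of a standard word, giving a reversed standard word and case~(I); or the doubling action splits into several orbits, so that $w$ decomposes uniformly as $w = B_0 \dotsm B_{n-1}$ with all $\abs{B_i}$ equal to a common length $\ell = \abs{S} > \abs{S_6}$, each $B_i \in \{S, L(S)\}$ for a shorter reversed standard word $S$. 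In the latter case the cyclic occurrence of $X_k^2 = B_{k-1}^2$ at position $2p_{k-1} \bmod L = (2(k-1) \bmod n)\ell$ forces $B_i = B_{2i \bmod n}$ for all $i$, i.e.\ the word over $\{S,L\}$ recording the $B_i$ is a pattern word $u$ with $w = \pat[S]{u}$, nontrivial and primitive because $n > 1$ and $w$ is primitive. I would organize the induction through the Sturmian renormalization: the derivation replacing the top-level $S_5, S_6$ blocks by a fresh binary alphabet sends a solution to a shorter solution with new parameters, preserves the pattern $u$, and replaces $S$ by a standard word one level down, so case~(I) and case~(II) are stable and the recursion terminates at the short reversed standard words among $S_1, \dotsc, S_6$.

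The main obstacle will be the step establishing the uniform block structure and, in case~(II), that the common base $S$ is genuinely a reversed standard word: one must rule out mixed-length decompositions and control the interaction of the two factorizations at positions where their block boundaries do not align, which is exactly where the bounded geometry of $g$ and the synchronization property of primitive words have to be combined. Carrying primitivity and the sharp bound $\abs{S} > \abs{S_6}$ through the renormalization, so that the two cases are genuinely exhaustive, is the remaining delicate point.
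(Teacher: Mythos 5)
Your outline gets the easy parts right (the converse via \autoref{prp:standard_solutions} and \autoref{prp:pattern_solution}, and the block-level doubling idea behind the pattern-word condition), but the ``geometric heart'' of your forward direction rests on a claim that is false. You assert that, since $X_k^2$ occurs in $w^2$ at position $2p_{k-1}$, ``each $2p_{k-1} \bmod L$ is again a block-start position,'' and you later iterate doubling along these ``breakpoint orbits.'' The paper's own running example refutes this: for $w = 01010010$ (with $\oa = 1$, $\ob = 0$) we have $w = S_2 S_1 S_6$, so the block starts of $w$ are $\{0, 2, 3\}$, whereas $w^2 = S_2^2 S_1^2 S_6^2$ places the squares at positions $0, 4, 6$, and neither $4$ nor $6$ is a block start of $w$. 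The ``in particular'' cannot be repaired, because in a squareful word \emph{every} position begins with some minimal square, so an occurrence of $X_k^2$ at a cyclic position carries no boundary information at all. The relation you do prove, $g(2p_k \bmod L) = 2g(p_k)$, is correct and is implicitly used in the paper, but only to bound the curve of $w$ — never to propagate block boundaries under the doubling map. (Relatedly, your later identification $X_k^2 = B_{k-1}^2$ conflates the minimal square roots $X_k$ with the $\{S, L(S)\}$-blocks $B_i$, which in type II are strictly longer than every $X_k$.)

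The second, larger gap is that the actual substance of the theorem — that the common block $S$ is a \emph{reversed standard word} — is never derived; your dichotomy (``mechanical pattern'' versus ``uniform blocks'') is the desired conclusion, announced as the outcome of a contraction argument that is not carried out. The paper's route here does concrete work: with weights normalized so that $\Sigma(w) = 0$ and $\omega_1 - \omega_0 = 1$, \autoref{lem:narrow_tube} confines the curve of any solution to $[\omega_0, \omega_1]$ by a case analysis resting on slope comparisons inside the six minimal square roots (\autoref{lem:six-slope}); \autoref{lem:SL_blocks} then cuts $w$ into zero-sum blocks and pins each block down by the explicit formula \eqref{eq:central}, which is exactly the construction of central words — this is the sole point where ``standard'' enters. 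Note there is no dichotomy at this stage: \emph{every} solution containing both letters lies in $\{S, L(S)\}^+$ for a unique reversed standard $S$ (\autoref{prp:sl_product}); type I versus type II is merely one block versus several, the pattern-word condition following from \autoref{lem:sl_square_root} and injectivity of $\mathcal{P}_S$. Two further holes: the bound $\abs{S} > \abs{S_6}$ in type II, which you only flag as delicate, is settled in the paper by listing the short reversed standard words and exhibiting a forbidden factor of $SL$ or $LS$ outside $\Lang{\oa,\ob}$; and in the nonprimitive case, your phrase ``the factorization is compatible with the period'' is precisely the non-trivial content of \autoref{lem:square_prefix}, not something that follows from uniqueness of the minimal-square factorization alone. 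The Sturmian renormalization induction you propose to organize all this is absent from the paper and unnecessary once the central-word identification is in hand.
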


  We respectively call the two solution types of \autoref{thm:main} solutions of type I and type II. Observe also that
  a primitive pattern word always has odd length.
  
  \autoref{thm:main} implies the remarkable characterization of standard words stated in \autoref{thm:corollary}.

  \begin{proof}[Proof of \autoref{thm:corollary}]
    If $w$ is a standard word, then its reversal is a solution to \eqref{eq:square} by \autoref{thm:main}. It is a
    well-known property of standard words (see, e.g., the proof of
    \cite[Lemma~2.2.3]{2002:algebraic_combinatorics_on_words}) that $\gcd(\abs{w}, \abs{w}_1) = 1$. Suppose on the
    other hand that $w$ is a solution to \eqref{eq:square} and $\gcd(\abs{w}, \abs{w}_1) = 1$. Then $w$ must be
    primitive for otherwise $\gcd(\abs{w}, \abs{w}_1) > 1$. If $w$ is a solution of type II, then \autoref{thm:main}
    implies that there exists a reversed standard word $S$ and  a nontrivial and primitive pattern word $u$ such that
    $w = \pat[S]{u}$. Since $\abs{S}_1 = \abs{L(S)}_1$, we see that $\abs{w} = \abs{u}\abs{S}$ and
    $\abs{w}_1 = \abs{u}\abs{S}_1$, so $\gcd(\abs{w}, \abs{w}_1) \geq \abs{u}$. Thus it must be that $\abs{u} = 1$, but
    this contradicts the fact that $u$ is nontrivial. Hence $w$ cannot be of type II, so it is of type I, that is, $w$
    is a reversed standard word.
  \end{proof}

  Notice that the preceding proof also shows that the sets of type I solutions and type II solutions are disjoint.

  Before showing that solutions to \eqref{eq:square} are of the claimed form, we present results showing that words
  satisfying (I) or (II) of \autoref{thm:main} are indeed solutions. The case (I) is handled by the following result.

  \begin{proposition}\label{prp:standard_solutions}\cite[Proposition~23]{2017:a_square_root_map_on_sturmian_words}
    If $w$ is a reversed standard word, then $w$ is a solution to \eqref{eq:square}.
  \end{proposition}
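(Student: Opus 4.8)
The plan is to prove the equivalent statement $\sqrt{w^2}=w$ for every reversed standard word $w$ by induction on the index $k$ of the underlying standard word $s_k$, where $w=\mirror{s_k}$. Reversing the recurrence $s_k = s_{k-1}^{d_k} s_{k-2}$ gives $\mirror{s_k} = \mirror{s_{k-2}}\,\mirror{s_{k-1}}^{\,d_k}$, so each reversed standard word is built from the two preceding ones. As in \autoref{sec:preliminaries} I assume $d_1 \geq 2$, so that $11$ does not occur. For the base cases, the shortest reversed standard words are $\mirror{s_0}=S_1$ and $\mirror{s_1}=S_4$; these are minimal square roots, and since $\sqrt{S_i^2}=S_i$ they are solutions outright. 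I would also record, as part of a strengthened induction hypothesis, the explicit factorization of $w$ into minimal square roots arising from the (unique) minimal-square factorization of $w^2$, together with its first and last roots, since this boundary data is exactly what the inductive step consumes.

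For the inductive step I would show that the minimal-square factorization of $\mirror{s_k}^2$ is assembled from the factorizations of $\mirror{s_{k-1}}$ and $\mirror{s_{k-2}}$ supplied by the induction hypothesis, by concatenating $d_k$ copies of the factorization of $\mirror{s_{k-1}}$ after that of $\mirror{s_{k-2}}$, subject to a controlled \emph{resplitting} at each junction. Concretely, a trailing root of one block is re-read across the boundary (for instance $S_3$ as $S_2 S_1$, or $S_6$ as $S_5 S_4$; note $S_3 = S_2 S_1$ and $S_6 = S_5 S_4$ as words, although their squares differ), and which resplitting occurs is dictated solely by the two letters straddling the boundary. Granting that this is the correct minimal-square factorization, the identity $\sqrt{w^2}=w$ is then immediate: the length matches automatically, since $\abs{\sqrt{v}} = \abs{v}/2$ for every $v \in \Pi(\oa,\ob)$, and the concatenation of the half-squares spells exactly $w$ by construction. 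Separately one must check the membership $w^2 \in \Pi(\oa,\ob)$; I would do this by exhibiting $w^2$ as a factor of $(S_5+S_6)^\omega$, i.e.\ $w^2 \in \Lang{\oa,\ob}$, using that $w$ occurs in the standard Sturmian limit with parameters $\oa,\ob$ (cf.\ \autoref{prp:optimal_squareful_characterization}), together with the observation that $w^2$ then decomposes fully into minimal squares.

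The main obstacle is precisely this junction analysis: proving that no minimal square straddles a boundary between blocks in an uncontrolled way, so that the factorization of $\mirror{s_k}^2$ really is the predicted concatenation-with-resplitting. Here I would use the synchronization property of primitive words (a primitive word occurs in its square only as a prefix and as a suffix) together with the detailed prefix and suffix structure of standard words—in particular the \emph{near-commutation} fact that $s_{k-1}s_k$ and $s_k s_{k-1}$ agree except for a transposition of their last two letters—to pin down the finitely many possible boundary configurations and the resplitting each forces. Carrying the first and last minimal square roots of every block through the induction is what makes this finite case analysis close up, and this is the step demanding the most care; by contrast the length bookkeeping and the deduction of $\sqrt{w^2}=w$ from the factorization are routine.
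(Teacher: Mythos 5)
The paper never proves this proposition itself: it is imported wholesale by citation from \cite[Proposition~23]{2017:a_square_root_map_on_sturmian_words}, so there is no internal proof to compare against and your argument must stand on its own. As it stands, it has a genuine gap: the entire mathematical content is deferred to the ``junction analysis,'' and the structural claims on which that analysis is premised are false. The induction hypothesis gives you $\mirror{s_{k-2}}^2 = Y_1^2 \dotsm Y_m^2$ and $\mirror{s_{k-1}}^2 = X_1^2 \dotsm X_n^2$, but the word you must factor is $(AB^d)^2 = AB^dAB^d$ with $A = \mirror{s_{k-2}}$, $B = \mirror{s_{k-1}}$, $d = d_k$. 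Since $A$ and $B$ do not commute, $(AB^d)^2 \neq A^2B^{2d}$, so no occurrence of $A^2$ or $B^2$ inside $w^2$ is aligned with your blocks, and single minimal squares of the true factorization straddle several blocks at once: for the paper's own example $w = \mirror{s_4} = \mirror{s_2}\mirror{s_3} = 010 \cdot 10010$ (where $\oa = 1$, $\ob = 0$), we have $w^2 = S_2^2 S_1^2 S_6^2$, and the single square $S_6^2$ covers the tail of the first $10010$, the entire second $010$, and the entire second $10010$. Hence the induction hypothesis cannot be invoked block-locally; ``concatenating the factorizations'' is meaningless at the level of squares, which is where the equation lives, and your plan offers no mechanism to verify that the predicted squares actually spell out $w^2$.

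The claimed locality of the resplitting is also wrong. Take $(d_k) = (2,2,2,\ldots)$, so $\oa = \ob = 1$ and $(S_1,\ldots,S_6) = (0, 01, 010, 10, 10010, 1001010)$. Then $\mirror{s_1} = 10 = S_4$, $\mirror{s_2} = 01010 = S_2S_1S_4$, and $\mirror{s_3} = \mirror{s_1}\mirror{s_2}^2$. The true factorization is $\mirror{s_3} = S_5 \cdot S_4 \cdot S_2 S_1 S_4$ (indeed $\mirror{s_3}^2 = S_5^2S_4^2S_2^2S_1^2S_4^2$): the \emph{entire} first block $S_4$ merges with the two leading roots $S_2S_1$ of the following block into $S_5 = S_4S_2S_1$, while the second junction---between the two copies of $\mirror{s_2}$, with exactly the same two straddling letters $0$ and $0$ as the first junction---forces no change whatsoever. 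So the rearrangement is neither confined to a trailing root nor dictated by the two boundary letters, and with short blocks it can swallow whole blocks; this is precisely why the paper's \autoref{lem:sl_square_root} carries the hypothesis $\abs{S} > \abs{S_6}$, a restriction your induction cannot impose since it must pass through the short words $\mirror{s_0}, \mirror{s_1}, \ldots$. Finally, the membership $w^2 \in \Pi(\oa,\ob)$ is not an ``observation'' on top of $w^2 \in \Lang{\oa,\ob}$: even-length factors of $(S_5+S_6)^\omega$ need not lie in $\Pi(\oa,\ob)$ (the factor $10$ already fails), so this too is part of what the deferred analysis would have to deliver. In short, the proposal is a plan whose central step is both missing and, as described, incorrect if executed literally.
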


  For the case (II) (see \autoref{prp:pattern_solution}), we need the following lemmas.

  \begin{lemma}\label{lem:sl_square_root}\cite[Lemma~22]{2017:a_square_root_map_on_sturmian_words}
    Let $S$ be a reversed standard word such that $\abs{S} > \abs{S_6}$, and set $L = L(S)$. Then
    $SS, SL, LS, LL \in \Pi(\oa, \ob)$, $\sqrt{SS} = \sqrt{SL} = S$, and $\sqrt{LL} = \sqrt{LS} = L$.
  \end{lemma}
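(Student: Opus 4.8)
The plan is to build everything on the single fact that $S$ is a solution and then to control how the square root reacts to a transposition of two adjacent, distinct letters. By \autoref{prp:standard_solutions} the reversed standard word $S$ is a solution, so $SS = S^2 \in \Pi(\oa,\ob)$ and $\sqrt{SS} = S$; this settles one of the four products and provides the explicit minimal-square factorization $SS = X_1^2 \dotsm X_n^2$ with $X_1 \dotsm X_n = S$ to work from. The decisive observation is that $L = L(S)$ differs from $S$ only in its first two letters, so the three remaining products arise from $SS$ by transpositions confined to two spots: swapping the first two letters of the whole word turns $SS$ into $LS$ and $SL$ into $LL$ (a \emph{leading move}), while swapping the two letters at the junction of the two blocks turns $SS$ into $SL$ and $LS$ into $LL$ (a \emph{seam move}). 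It therefore suffices to prove two principles: (A) a leading move flips the first two letters of the square root, $\sqrt{L(w)} = L(\sqrt{w})$; and (B) a seam move leaves the square root unchanged, $\sqrt{SL} = \sqrt{SS}$ and $\sqrt{LL} = \sqrt{LS}$. Granting these, from $\sqrt{SS} = S$ we read off $\sqrt{SL} = S$ by (B), then $\sqrt{LS} = L(S) = L$ by (A), and finally $\sqrt{LL} = \sqrt{LS} = L$ by (B), while the factorizations produced along the way certify that $SL, LS, LL \in \Pi(\oa,\ob)$.

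Both principles rest on an explicit description of the minimal-square factorizations of $SS$ and of $LS = L(SS)$, which I would establish together by induction on $k$, where $S = \mirror{s_k}$, using the recursion $\mirror{s_k} = \mirror{s_{k-2}}\,\mirror{s_{k-1}}^{\,d_k}$. The description should exhibit a common position $r$ with $2 \le r \le \abs{S}$ such that the length-$r$ prefixes of $SS$ and of $LS$ are products of minimal squares with root products $p$ and $L(p)$ (where $\abs{p} \ge 2$), while from position $r$ onwards the two factorizations coincide; this hinges on the fact, itself part of the induction, that a reversed standard word already has a proper prefix which is a product of minimal squares. The leading move is then immediate: $\sqrt{L(w)} = L(p)\cdot(\text{tail}) = L\bigl(p\cdot(\text{tail})\bigr) = L(\sqrt{w})$ for $w = SS$ (and analogously for $w = SL$), since the first block of $w$ equals $S$.

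For the seam move the relevant junction of $SS$ is the word $\mirror{s_{k-1}}\mirror{s_{k-2}}$, as the first block ends in $\mirror{s_{k-1}}$ and the second begins with $\mirror{s_{k-2}}$, and the transposed letters are the first two letters of the second block. Using the explicit shapes \eqref{eq:min_squares} together with the recursive structure of reversed standard words, one identifies the unique minimal square straddling the seam — always a copy of $S_5^2$ or $S_6^2$ — and the offset at which the transposition falls inside it; a direct computation then shows that the transposition resegments precisely this straddling square into shorter minimal squares whose roots multiply back to the original root, the squares on either side being untouched. Hence the seam move does not change the square root, which yields $\sqrt{SL} = \sqrt{SS}$ and $\sqrt{LL} = \sqrt{LS}$. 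Finally $SL, LS, LL \in \Pi(\oa,\ob)$ because each is by construction a product of minimal squares and the local blocks created by the resegmentations are exactly the factors $10^\oa$ and $10^{\oa+1}$ that occur in $(S_5 + S_6)^\omega$, so the words remain in $\Lang{\oa,\ob}$.

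I expect the seam analysis to be the main obstacle: one must verify uniformly in $\oa$, $\ob$, and the parity of $k$ which minimal square straddles the seam, where the transposition lands inside it, and that the resulting resegmentation preserves the root product. The resegmentation is genuinely parameter-dependent; for instance the straddling square $S_6^2$ breaks up as $S_4^2 S_3^2$ when $\ob = 0$ but as $S_4^2 S_2^2 S_1^2 S_4^2$ when $\ob = 1$. Packaging these case distinctions — governed by the last letters of $\mirror{s_{k-1}}$, the first letters of $\mirror{s_{k-2}}$, and ultimately the classical near-commutation property of standard words \cite{2002:algebraic_combinatorics_on_words} — into one clean induction rather than a welter of special cases is the crux. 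The hypothesis $\abs{S} > \abs{S_6}$ is exactly what guarantees enough room at the seam for a full straddling square and for the recombination in (A) to occur within the first block, which is why the shortest reversed standard words are excluded and handled as base cases.
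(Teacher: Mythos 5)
Your high-level reduction is sound as far as it goes: the four products do arise from $SS$ by your ``leading move'' and ``seam move'', and the identities of the lemma would follow formally from your principles (A) and (B). Note first, though, that this paper never proves \autoref{lem:sl_square_root} at all --- it is imported, with a citation, as Lemma~22 of \cite{2017:a_square_root_map_on_sturmian_words} --- so your attempt must stand on its own; and it does not, because the decisive step is both left unproven and, as described, factually wrong.

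The error sits in the seam analysis, exactly where you yourself locate the crux. You claim that the two letters exchanged by the seam move always lie inside a unique minimal square straddling the seam, that this square is ``always a copy of $S_5^2$ or $S_6^2$'', and that the move resegments that single square into shorter minimal squares. Take $\oa = 1$, $\ob = 0$ and $S = \mirror{s_5} = 1001001010010$, the reversed Fibonacci word of length $13 > \abs{S_6}$. The unique minimal-square factorization is $SS = S_5^2\, S_4^2\, S_2^2\, S_1^2\, S_6^2$, and the two exchanged letters (the $14$th and $15$th letters of $SS$) lie in \emph{two different} squares: the last letter of $S_2^2$ and the first letter of $S_1^2$. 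Moreover, the seam move here \emph{merges} squares rather than splitting one: $SL = S_5^2\, S_4^2\, S_3^2\, S_6^2$, the local rewriting being $S_2^2 S_1^2 \leftrightarrow S_3^2$ (with equal root products, since $S_2 S_1 = S_3 = 010$). So the induction you propose would have to handle rewritings in both directions, involving squares other than $S_5^2, S_6^2$, and transpositions not contained in any single square; none of this case analysis is carried out, and the inductive invariant behind (A) and (B) is never stated precisely, let alone verified through the recursion $\mirror{s_k} = \mirror{s_{k-2}}\,\mirror{s_{k-1}}^{d_k}$. Since (A) and (B) are the entire content of the lemma, what you have is a plausible plan resting on an incorrect local picture, not a proof. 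A smaller gap of the same kind: to conclude $SL, LS, LL \in \Pi(\oa, \ob)$ you must show membership in $\Lang{\oa,\ob}$, and being locally composed of blocks $10^\oa$ and $10^{\oa+1}$ is necessary but not sufficient --- one must also check that the number of factors $10^\oa$ between consecutive occurrences of $10^{\oa+1}$ stays in $\{\ob, \ob+1\}$.
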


  \begin{lemma}\label{lem:pat_primitive}
    Let $u$ be a primitive word over $\{S, L\}$ and $w$ a reversed standard word such that $\abs{w} > 1$. Then
    $\pat[w]{u}$ is primitive.
  \end{lemma}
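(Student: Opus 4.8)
The plan is to argue by contradiction. Suppose $\pat[w]{u}$ is not primitive and write $\pat[w]{u} = z^k$ with $z$ primitive and $k \geq 2$; then its minimal period $p = \abs{z}$ divides $N := \abs{w}\abs{u}$ and $\pat[w]{u}$ is cyclically $p$-periodic. The first thing I would do is dispose of the case $\abs{u} = 1$: here $\pat[w]{u}$ is $w$ or $L(w)$, both of which are primitive ($w$ is a reversed standard word and reversed standard words are primitive, and $L(w)$ is easily seen to be primitive as well). So I may assume $\abs{u} \geq 2$; since a constant word of length at least $2$ is never primitive, the primitive word $u$ must contain both letters $S$ and $L$.

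The key structural observation is that, writing $m = \abs{w}$ and comparing $\pat[w]{u}$ with the purely $m$-periodic word $w^{\abs{u}}$, the two agree except that in each length-$m$ block corresponding to a letter $L$ of $u$ the first two symbols are exchanged. As the alphabet is binary and $w$ begins with two distinct letters (true for every reversed standard word of length $>1$), exchanging the first two letters is the same as complementing them. Hence I can write $\pat[w]{u}_j = w_{j \bmod m} \oplus d_j$ over $\mathbb{F}_2$, where the marker sequence $d$ equals $1$ exactly at the first two positions of every $L$-block and $0$ elsewhere. Combining the $p$-periodicity of $\pat[w]{u}$ with the $m$-periodicity of the background $w^{\abs{u}}$ then yields the identity $d_j \oplus d_{j+p} = e_{j \bmod m}$ (indices mod $N$), where $e_r = w_r \oplus w_{(r+\delta)\bmod m}$ is $m$-periodic and $\delta := p \bmod m$.

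Next I would extract $\delta = 0$ from this identity using the sparsity of $d$. Evaluating it at the start of a block (position $im$) and, when needed, at position $im+1$, the term $d_{j+p}$ vanishes whenever the shifted position $\delta$ (respectively $1+\delta$) is not one of the two marked positions of a block. This immediately forces the marker pattern — equivalently the letters of $u$ — to be constant when $\delta \geq 2$, and also when $\delta = 1$ and $m \geq 3$; both outcomes contradict the fact that $u$ contains both $S$ and $L$. The only delicate case is $\delta = 1$ with $m = 2$, where the shift wraps onto the marker of the next block. I regard this as the main obstacle: the wrap-around interaction of the shift with the two-symbol marker. I would resolve it by combining the two relations coming from $r=0$ and $r=1$ and invoking $w_0 \neq w_1$ (so that $e_0 = e_1$), which again forces $u$ to be constant, a contradiction. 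Hence $\delta = 0$, i.e. $m \mid p$.

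Finally, with $m \mid p$ I would conclude using injectivity. Writing $p = mq$ with $q = \abs{u}/k < \abs{u}$, the prefix $z$ of length $p$ equals $\pat[w]{u'}$, where $u'$ is the length-$q$ prefix of $u$, so $\pat[w]{u'^{\,k}} = z^k = \pat[w]{u}$. Since $w$ and $L(w)$ are distinct words of equal length, $\mathcal{P}_w$ is injective, giving $u = u'^{\,k}$ with $k \geq 2$ and contradicting the primitivity of $u$. This completes the contradiction and shows that $\pat[w]{u}$ is primitive.
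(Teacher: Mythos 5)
Your proof is correct, but it takes a genuinely different route from the paper's. The paper disposes of the lemma with a short counting argument: since $\abs{L(w)}=\abs{w}$ and $\abs{L(w)}_1=\abs{w}_1$, one computes $\gcd(\abs{\pat[w]{u}},\abs{\pat[w]{u}}_1)=\abs{u}\gcd(\abs{w},\abs{w}_1)=\abs{u}$, using the classical fact that $\gcd(\abs{w},\abs{w}_1)=1$ for (reversed) standard words; on the other hand, for $\pat[w]{u}=v^k$ this gcd equals $k\gcd(\abs{v},\abs{v}_1)$, so $k\mid\abs{u}$, and cutting $u$ into $k$ factors of equal length and invoking injectivity of $\mathcal{P}_w$ gives $u=u_1^k$, forcing $k=1$. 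You instead run a periodicity analysis: you write $\pat[w]{u}$ as the $\abs{w}$-periodic background $w^{\abs{u}}$ perturbed by a sparse marker over $\mathbb{F}_2$ supported on the first two positions of each $L$-block (valid because a reversed standard word of length greater than $1$ begins with two distinct letters, so exchanging equals complementing), and you force $\abs{w}\mid p$ by a case analysis on $\delta=p\bmod\abs{w}$; your handling of all cases, including the wrap-around case $\abs{w}=2$ with $p$ odd, is sound, and only the final step (injectivity plus $u=u'^k$) coincides with the paper's. What your route buys is generality: in the main case $\abs{u}\geq 2$ you never use the gcd property, so you actually prove that $\pat[w]{u}$ is primitive for \emph{any} binary word $w$ whose first two letters differ --- a statement the paper's counting argument cannot reach, since for such $w$ one may have $\gcd(\abs{w},\abs{w}_1)>1$. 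What the paper's route buys is brevity and uniformity: no case analysis, and the case $\abs{u}=1$ is covered for free, whereas your dismissal of that case (``$L(w)$ is easily seen to be primitive'') does secretly require a standard-word property --- e.g.\ the same gcd fact, or the conjugacy of $w$ and $L(w)$ --- because exchanging the first two letters of a general primitive word can destroy primitivity, as in $1001\mapsto 0101=(01)^2$.
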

  \begin{proof}
    Let $\pat[w]{u} = v^k$ for a primitive word $v$ and integer $k \geq 1$. From $\abs{w} = \abs{L(w)}$ and
    $\abs{w}_1 = \abs{L(w)}_1$, it follows that $\abs{\pat[w]{u}} = \abs{u} \cdot \abs{w}$ and
    $\abs{\pat[w]{u}}_1 = \abs{u} \cdot \abs{w}_1$. Because $w$ is a reversed standard word, we have
    $\gcd(\abs{w}, \abs{w}_1) = 1$, and therefore
    \begin{equation*}
      \gcd(\abs{\pat[w]{u}}, \abs{\pat[w]{u}}_1)
      = \gcd(\abs{u} \cdot \abs{w}, \abs{u} \cdot \abs{w}_1)
      = \abs{u} \gcd(\abs{w}, \abs{w}_1) = \abs{u}.
    \end{equation*}
    On the other hand,
    \begin{equation*}
      \gcd(\abs{\pat[w]{u}}, \abs{\pat[w]{u}}_1)
      = \gcd(\abs{v^k}, \abs{v^k}_1)
      = \gcd(k \abs{v}, k \abs{v}_1)
      = k \gcd(\abs{v}, \abs{v}_1).
    \end{equation*}
    Thus $\abs{u}$ is a multiple of $k$, and we can write $u = u_1 \dotsm u_k$ with $\abs{u_1} = \ldots = \abs{u_k}$.
    Then
    \begin{equation*}
      v^k = \pat[w]{u} = \pat[w]{u_1} \dotsm \pat[w]{u_k}
      \qquad \text{and} \qquad
      \abs{\pat[w]{u_1}} = \ldots = \abs{\pat[w]{u_k}},
    \end{equation*}
    and therefore $v = \pat[w]{u_i}$ for all $i$. By the injectivity of $\mathcal{P}_w$, we see that
    $u_1 = \ldots = u_k$ and $u = u_1^k$. Because $u$ is primitive, it must be that $k = 1$. Since $\pat[w]{u} = v^k$,
    we conclude that $\pat[w]{u}$ is primitive.
  \end{proof}

  Parts of the following result and its proof appear in less general form in
  \cite[Lemma~39]{2017:a_square_root_map_on_sturmian_words}, \cite[Lemma~40]{2017:a_square_root_map_on_sturmian_words},
  and \cite[Lemma~2.8]{2020:more_on_the_dynamics_of_the_symbolic_square_root}.

  \begin{proposition}\label{prp:pattern_solution}
    Let $S$ be a reversed standard word such that $\abs{S} > \abs{S_6}$. Then $\pat[S]{u}$ is a solution to
    \eqref{eq:square} for any pattern word $u$. Moreover, if $u$ is primitive, then $\pat[S]{u}$ is primitive.
  \end{proposition}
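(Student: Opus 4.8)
The plan is to reduce everything to the length-two facts recorded in \autoref{lem:sl_square_root}, together with the uniqueness of the minimal-square factorization, and to check only at the very end that the defining constraint of a pattern word is exactly what is needed. The primitivity assertion is immediate: since $\abs{S} > \abs{S_6} > 1$, \autoref{lem:pat_primitive} applied with $w = S$ shows that $\pat[S]{u}$ is primitive whenever $u$ is primitive, so only the first claim requires work.

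First I would set $L = L(S)$ and write $\pat[S]{u}$ as a block word $c_0 c_1 \dotsm c_{n-1}$ with each $c_i \in \{S, L\}$, so that $(\pat[S]{u})^2$ is the concatenation of the $2n$ blocks $e_0 \dotsm e_{2n-1}$ with $e_i = c_{i \bmod n}$. Grouping these into the $n$ consecutive pairs $P_j = e_{2j} e_{2j+1}$ for $0 \le j \le n-1$, \autoref{lem:sl_square_root} tells me that each $P_j$ is one of $SS, SL, LS, LL$ and hence lies in $\Pi(\oa, \ob)$; in particular each $P_j$ is a product of minimal squares. Concatenating these factorizations exhibits $(\pat[S]{u})^2$ as a product of minimal squares, and since no minimal square is a prefix of another, this is the unique such factorization. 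Consequently $\sqrt{\cdot}$ is multiplicative over the pairs: $\sqrt{(\pat[S]{u})^2} = \prod_{j=0}^{n-1}\sqrt{P_j}$.

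The square root of each pair is governed by \autoref{lem:sl_square_root}, which yields $\sqrt{XY} = X$ for all $X, Y \in \{S, L\}$; that is, $\sqrt{P_j} = e_{2j} = c_{2j \bmod n}$. Reading off the result, the $j$-th block of $\sqrt{(\pat[S]{u})^2}$ is $c_{2j \bmod n}$. At this point the hypothesis that $u$ is a pattern word enters, and it says precisely that $a_i = a_{2i \bmod n}$, equivalently $c_i = c_{2i \bmod n}$, for every $i$. Hence the $j$-th block of $\sqrt{(\pat[S]{u})^2}$ equals $c_j$, so $\sqrt{(\pat[S]{u})^2} = c_0 c_1 \dotsm c_{n-1} = \pat[S]{u}$, which is the assertion that $\pat[S]{u}$ is a solution to \eqref{eq:square}.

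The one step that is not bookkeeping is certifying that $(\pat[S]{u})^2 \in \Pi(\oa, \ob)$, specifically that it lies in $\Lang{\oa, \ob}$, i.e. is a factor of $(S_5 + S_6)^\omega$. Membership in $\Lang{\oa,\ob}$ is a global condition that does not follow formally from each pair $P_j$ lying in $\Lang{\oa,\ob}$, and I expect this to be the main obstacle. I would handle it by a synchronization argument: \autoref{lem:sl_square_root} guarantees that every two-block combination is realized as a factor of a concatenation of $S_5$'s and $S_6$'s, and because $S$ is primitive and strictly longer than $S_6$, the positions of the $S_5/S_6$ boundaries forced inside any occurrence of a block $S$ or $L$ are uniquely determined. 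Thus the realizations of two overlapping consecutive pairs $P_j$ and $P_{j+1}$ agree on their shared block and can be spliced; chaining the splices builds an occurrence of the whole word $(\pat[S]{u})^2$ (indeed of the infinite word $\pat[S]{u^\omega} = (\pat[S]{u})^\omega$) inside $(S_5 + S_6)^\omega$. This is exactly the kind of gluing carried out in the special cases treated in the cited lemmas, and it is where the bulk of the care is needed.
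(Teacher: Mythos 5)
Your proof is correct and follows essentially the same route as the paper's: group the blocks of $u^2$ into consecutive pairs, apply \autoref{lem:sl_square_root} to each pair together with the uniqueness of the minimal-square factorization, use the pattern-word condition $c_j = c_{2j \bmod n}$ to identify the square root with $\pat[S]{u}$, and invoke \autoref{lem:pat_primitive} for primitivity. Your final paragraph on membership of $(\pat[S]{u})^2$ in $\Lang{\oa,\ob}$ makes explicit a point the paper leaves implicit (it is folded into \autoref{lem:sl_square_root} and the lemmas cited from the earlier papers), and your splicing sketch is sound once ``overlapping consecutive pairs'' is read as the overlapping two-block windows $e_i e_{i+1}$ and $e_{i+1} e_{i+2}$ rather than the disjoint pairs $P_j$, with the forced $S_5/S_6$ boundaries coming from the occurrences of $10^{\oa+1}$ inside each block.
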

  \begin{proof}
    Let $u$ be a pattern word. Write $u^2$ as blocks of two letters:
    $u^2 = A_0 B_0 \cdot A_1 B_1 \dotsm A_{\abs{u}-1} B_{\abs{u}-1}$. Then
    $\smash[t]{\sqrt{\pat[S]{A_i B_i}} = \pat[S]{A_i}}$ for all $i$ by \autoref{lem:sl_square_root}. Consequently
    \begin{equation}\label{eq:foo}
      \sqrt{\pat[S]{u^2}} = \sqrt{\pat[S]{A_0 B_0 \dotsm A_i B_i \dotsm A_{\abs{u}-1} B_{\abs{u}-1}}} = \pat[S]{A_0 \dotsm A_i \dotsm A_{\abs{u}-1}}.
    \end{equation}
    The letter $A_i$ is the $2i$th letter of $u^2$. Since $u$ is a pattern word, it follows that $A_i$ equals the $i$th
    letter of $u$. Thus \eqref{eq:foo} states that $\sqrt{\pat[S]{u^2}} = \pat[S]{u}$. This means that $\pat[S]{u}$ is a
    solution to \eqref{eq:square}.
    
    %

    If $u$ is primitive, then $\pat[S]{u}$ is primitive by \autoref{lem:pat_primitive}.
  \end{proof}

  Let us then begin proving the converse of \autoref{prp:standard_solutions} and \autoref{prp:pattern_solution}. The
  method used is to assign numerical values to letters as mentioned in the introduction.

  If we assign distinct real weights to the letters $0$ and $1$, say $\omega_0$ and $\omega_1$, then we may define the
  \emph{sum} $\Sigma(u)$ of $u = a_1 \dotsm a_n \in \{0, 1\}^*$ to be the real number
  \begin{equation}\label{eq:sigma}
      \Sigma(u) = \omega_{a_1} + \dotsm + \omega_{a_n}
      = \abs{u}_0 \cdot \omega_0 + \abs{u}_1 \cdot \omega_1.
  \end{equation}
  Different weights $\omega_0$ and $\omega_1$ obviously give different sum functions $\Sigma$. We want to choose the
  weights $\omega_0$ and $\omega_1$ so that $\Sigma(w) = 0$ for a certain fixed word $w$ containing both letters $0$
  and $1$. Moreover, we want to normalize the weights so that $\omega_1 - \omega_0 = 1$. Both of these conditions are
  satisfied if we choose
  \begin{equation}\label{eq:omega}
    \omega_0 = -\frac{\abs{w}_1}{\abs{w}}
    \quad \text{and} \quad
    \omega_1 = \frac{\abs{w}_0}{\abs{w}}.
  \end{equation}
  In what follows, we fix a word $w$ that is a solution to \eqref{eq:square}, and then assume that $\omega_0$,
  $\omega_1$, and $\Sigma$ have been defined as in \eqref{eq:omega} and \eqref{eq:sigma}.

  The \emph{slope} of $u \in \{0, 1\}^+$ is $\slope(u) = \abs{u}_1 / \abs{u}$. We can represent $\Sigma$ also with the
  help of the function $\slope$:
  \begin{equation}\label{eq:weight-slope}
    \Sigma(u) = \abs{u} (\slope(u) - \slope(w)).
  \end{equation}
  This is shown by the following computation:
  \begin{align*}
    \Sigma(u)
    &= -\abs{u}_0 \cdot \frac{\abs{w}_1}{\abs{w}} + \abs{u}_1 \cdot \frac{\abs{w}_0}{\abs{w}}
    = -\abs{u}_0 \slope(w) + \abs{u}_1 \cdot \frac{\abs{w} - \abs{w}_1}{\abs{w}}\\
    &= -\abs{u}_0 \slope(w) + \abs{u}_1 - \abs{u}_1 \slope(w)
    = \abs{u}_1 - (\abs{u}_0 + \abs{u}_1) \slope(w)
    = \abs{u} \slope(u) - \abs{u} \slope(w).
  \end{align*}

  Let $u = a_1 \dotsm a_n$ be a word over $\{0, 1\}$. We define the \emph{prefix sum word} $\psw(u)$ of $u$ to be the
  word $\psw(u) = b_1 \dotsm b_n$, where $b_i = \Sigma(a_1 \dotsm a_i)$ for all $i$. This is a word over some alphabet
  that is a subset of the rational numbers. Naturally, we can denote the largest and smallest letters in $\psw(u)$ by
  $\max(\psw(u))$ and $\min(\psw(u))$, respectively. The word $u$ has a graphical representation as a plane curve that
  we get by connecting the points $(0, 0)$, $(1, b_1)$, $\ldots$, $(n, b_n)$.

  \begin{example}
    Let $w = 01010010$. Then $\omega_0 = -3 / 8$, $\omega_1 = 5 / 8$, and
    \begin{equation*}
        \psw(w) =
        \frac{-3}{8}, \frac{2}{8}, \frac{-1}{8}, \frac{4}{8},
        \frac{1}{8}, \frac{-2}{8}, \frac{3}{8}, \frac{0}{8},
    \end{equation*}
    where we have used commas between the letters and the same denominator $8$ in every letter for clarity. See
    \autoref{fig:curve} for a graphical representation.
  \end{example}

  \begin{figure}[th]
    \centering
    \begin{tikzpicture}[scale=1.2]
        \draw[thick]
            (0, 0) node{$\bullet$}
            -- ++(1, -0.375) node{$\bullet$}
            -- ++(1, 0.625) node{$\bullet$}
            -- ++(1, -0.375) node{$\bullet$}
            -- ++(1, 0.625) node{$\bullet$}
            -- ++(1, -0.375) node{$\bullet$}
            -- ++(1, -0.375) node{$\bullet$}
            -- ++(1, 0.625) node{$\bullet$}
            -- ++(1, -0.375) node{$\bullet$}
            ;
        \draw[->] (-1, 0) -- (9, 0);
        \draw[->] (0, -1) -- (0, 1);
    \end{tikzpicture}
    \caption{The curve of the word $01010010$.}\label{fig:curve}
  \end{figure}
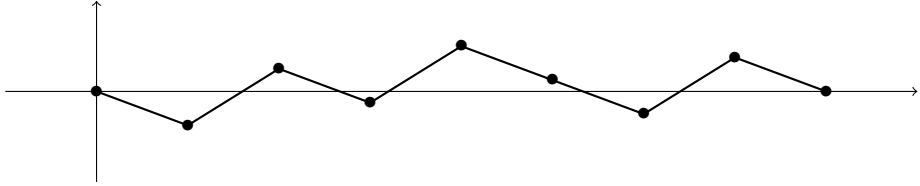
 
  \begin{lemma}\label{lem:six-slope}
    Let $x \in \{S_1, S_2, S_3, S_4, S_5, S_6\}$.
    \begin{enumerate}
      \item If $x = y10z$ and $y \ne \varepsilon$, then $\slope(y) < \slope(10z)$.
      \item If $x = y01z$ and $y \ne \varepsilon$, then $\slope(y) > \slope(01z)$.
    \end{enumerate}
  \end{lemma}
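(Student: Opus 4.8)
The plan is to verify the two inequalities separately for each of the six words, exploiting that the ``baseline'' slope $\slope(10^\oa) = 1/(\oa+1)$ governs every case. First I would dispose of the easy words. Since $S_1 = 0$ and $S_4 = 10^\oa$ contain no factor $10$ preceded by a nonempty prefix and no factor $01$ at all, both claims are vacuous for them. For $S_2 = 010^{\oa-1}$ and $S_3 = 010^\oa$ the only occurrence of $01$ is the leading one (forcing $y = \varepsilon$), so (2) is vacuous; and every occurrence of $10$ has $y = 0$, whence $\slope(y) = 0 < \slope(10z)$ trivially. Thus all the content lies in $S_5$ and $S_6$.

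For these I would write $x = S_5$ or $x = S_6$ uniformly as $x = 10^{\oa+1}(10^\oa)^m$, where $m = \ob$ for $S_5$ and $m = \ob+1$ for $S_6$, and decompose it into blocks $B_0 = 10^{\oa+1}$ and $B_1 = \dotsb = B_m = 10^\oa$. The point is that every $1$ starts a block and is followed by a $0$, so the occurrences of $10$ are exactly the block starts, while the occurrences of $01$ are exactly the block boundaries; this makes the prefixes $y$ and the tails completely explicit. For part (1) the admissible prefixes (discarding the block start at position $1$, which gives $y = \varepsilon$) are $y = 10^{\oa+1}(10^\oa)^j$ with $0 \le j \le m-1$, and the corresponding tail is $10z = (10^\oa)^{m-j}$, of slope $1/(\oa+1)$. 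The required inequality $\slope(y) < \slope(10z)$ then reduces, after clearing denominators, to $(j+1)(\oa+1) < (\oa+2) + j(\oa+1)$, that is, to $-1 < 0$; the strictness comes precisely from the single extra $0$ carried by the long block $B_0$.

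For part (2), at the boundary between $B_i$ and $B_{i+1}$ (with $0 \le i \le m-1$) the tail is $01z = 0\,(10^\oa)^{m-i}$, of slope $(m-i)/(1+(m-i)(\oa+1))$, which is strictly below $1/(\oa+1)$ because $(m-i)(\oa+1) < 1 + (m-i)(\oa+1)$. The key observation is that the matching prefix $y$ (everything up to but not including the last $0$ of $B_i$) always has slope exactly $1/(\oa+1)$: a direct count gives $\abs{y}_1 = i+1$ and $\abs{y} = (i+1)(\oa+1)$, since each block contributes one $1$ while the truncated block $B_i$ contributes length $\oa$ rather than $\oa+1$. Hence $\slope(y) = 1/(\oa+1) > \slope(01z)$, as needed.

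I expect the only real obstacle to be organizational rather than computational: correctly enumerating the occurrences of $10$ and $01$, keeping track of the $y \ne \varepsilon$ hypothesis that removes the degenerate leading block, and establishing the two slope identities (prefix slope $< 1/(\oa+1)$ in part (1) and prefix slope $= 1/(\oa+1)$ in part (2)). Once these are in place, every inequality collapses to a difference of $1$ under cross-multiplication. As an alternative packaging one could first note that $\slope(x)$ is the mediant of $\slope(y)$ and the tail slope, so that (1) and (2) become equivalent to $\slope(y) < \slope(x)$ and $\slope(y) > \slope(x)$ respectively; but the direct block computation above already makes the strict inequalities transparent.
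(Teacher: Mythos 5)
Your proposal is correct and follows essentially the same route as the paper's proof: both enumerate the possible factorizations $y\cdot 10z$ and $y\cdot 01z$ of the six words and compare slopes against the benchmark $1/(\oa+1)$, with the same key facts (in part (1) the tail $(10^\oa)^{m-j}$ has slope exactly $1/(\oa+1)$ while the prefix falls strictly below it; in part (2) the prefix has slope exactly $1/(\oa+1)$ while the tail $0(10^\oa)^{m-i}$ falls strictly below it). One cosmetic remark: your justification ``the truncated block $B_i$ contributes length $\oa$ rather than $\oa+1$'' is stated for the case $i\ge 1$ and does not literally apply when $i=0$ (there the truncated $B_0=10^\oa$ contributes $\oa+1$ rather than $\oa+2$), but the counts $\abs{y}_1=i+1$ and $\abs{y}=(i+1)(\oa+1)$ that you assert are correct in both cases, so nothing breaks.
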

  \begin{proof}
    In the first case, either $y = 0$ and $10z = 10^i$ for some $i$ or $y = 10^{\oa+1} (10^\oa)^i$ and
    $10z = (10^\oa)^j$ for some $i, j$. The former option is clear: $\slope(y) = 0 < \slope(10z)$. In the latter case,
    we have $\slope(10z) = 1/(\oa+1) > (i + 1)/( (i+1)(\oa+1) + 1) = \slope(y)$.

    In the second case, the only possibility is $y \in \{10^\oa, 10^{\oa+1} (10^\oa)^i 10^{\oa-1}\}$ and
    $01z = 0 (1 0^\oa)^j$ for some $i, j$, and then $\slope(y) = 1 / (\oa + 1) = \slope(1z) > \slope(01z)$.
  \end{proof}

  The following lemma states that the curve of a solution $w$ is contained in the rather small space between the lines
  $y = \omega_0$ and $y = \omega_1$.

  \begin{lemma}\label{lem:narrow_tube}
    Let $w$ be a solution to \eqref{eq:square} containing both letters $0$ and $1$. Then
    $\omega_0 \leq \min(\psw(w)) \leq \max(\psw(w)) \leq \omega_1$.
  \end{lemma}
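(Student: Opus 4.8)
The plan is to recast the two-sided bound as a pair of one-sided conditions on the prefix heights, feed those into a trivial induction, and then obtain the conditions by promoting the local slope comparisons of \autoref{lem:six-slope} to a statement about all of $w$. First I would record the facts that make the band natural: by definition $\omega_0 = \Sigma(0)$ and $\omega_1 = \Sigma(1)$, while $\Sigma(w) = 0$, so the curve of $w$ starts and ends at height $0$, and since $w$ contains both letters we have $\omega_0 < 0 < \omega_1$. Writing $H_t = \Sigma(a_1 \dotsm a_t)$ for the prefix heights, I claim that $\omega_0 \le H_t \le \omega_1$ for all $t$ holds if and only if $H_{t-1} \le 0$ whenever $a_t = 1$ and $H_{t-1} \ge 0$ whenever $a_t = 0$. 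The forward direction is immediate, and the converse is a one-line induction: a $1$-step starting from height $\le 0$ cannot overshoot $\omega_1$ and stays above $\omega_0$, while a $0$-step starting from height $\ge 0$ cannot undershoot $\omega_0$ and stays below $\omega_1$. By \eqref{eq:weight-slope} these two conditions say exactly that every prefix of $w$ followed by $1$ has slope at most $\slope(w)$, and every prefix followed by $0$ has slope at least $\slope(w)$; dually, every suffix beginning with $1$ has slope $\ge \slope(w)$ and every suffix beginning with $0$ has slope $\le \slope(w)$.

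These slope conditions are precisely the global analogue of \autoref{lem:six-slope}, which establishes the very same comparison but between a prefix and the complementary suffix inside a \emph{single} minimal square root (case (1) at a $10$-split, case (2) at a $01$-split). I would therefore separate the behaviour inside a block from the behaviour across blocks. For a prefix of $w$ ending inside some $X_i$, writing $X_i = y\,z$, \autoref{lem:six-slope} already compares $\slope(y)$ with the slope of the rest of $X_i$; what remains is to control the block-boundary heights $\Sigma(X_1 \dotsm X_j)$ and to combine them with these within-block comparisons, anchoring everything to $\Sigma(w) = 0$. Controlling the boundary heights is where the solution hypothesis must enter, through the factorization $w^2 = X_1^2 \dotsm X_n^2$ together with $\sqrt{w^2} = w$: the requirement that taking first halves of the squares $X_i^2$ realign exactly with $w$ is a strong constraint on the sequence $X_1, \dotsc, X_n$, and hence on how the block slopes may vary along the word.

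The main obstacle is exactly this globalization, and it is genuinely where the solution hypothesis is indispensable: the band can fail for words that merely lie in $\Lang{\oa,\ob}$. For instance, with $\oa = 1$ the word $S_6 S_5 = 10010100$ has $\slope = 3/8$, yet its prefix $S_6$ has slope $2/5 > 3/8$, so the height after the following $1$ reaches $6/8 > \omega_1 = 5/8$; and indeed $S_6 S_5$ is not a solution, since $\sqrt{(S_6 S_5)^2} = S_6 S_2 S_1 \ne S_6 S_5$. Thus no purely local argument can work, and the chaining step must rule out such ``dense-prefix-then-sparse'' arrangements by invoking $\sqrt{w^2} = w$, which forbids exactly this kind of slope reversal between consecutive blocks. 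I expect the bulk of the work, and most of the casework, to sit in the across-block comparisons and in the degenerate minimal square roots — such as $S_2$ when $\oa = 1$, the only one of the six ending in $1$ — where the direction of the relevant slope inequality must be checked by hand.
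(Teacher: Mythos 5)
Your setup is sound: the equivalence between the band $\omega_0 \leq \min(\psw(w)) \leq \max(\psw(w)) \leq \omega_1$ and the sign conditions ($\Sigma(p) \leq 0$ for every prefix $p$ followed by $1$, $\Sigma(p) \geq 0$ for every prefix followed by $0$) is correct, and your example $S_6 S_5$ showing that membership in $\Lang{\oa,\ob}$ alone cannot suffice is accurate and genuinely illuminating. But the proposal stops exactly where the proof begins: you assert that $\sqrt{w^2} = w$ must ``forbid this kind of slope reversal'' and constrain ``how the block slopes may vary,'' without exhibiting any mechanism by which it does. The paper's mechanism is a single concrete doubling trick that you never find. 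Since $\Sigma(w) = 0$, the curve of $w^2$ is two translated copies of the curve of $w$, so $\max(\psw(w^2)) = \max(\psw(w))$. On the other hand, reading the same word $w^2$ through the factorization $X_1^2 \dotsm X_n^2$ shows that $X_1^2 \dotsm X_r^2 u$ is a prefix of $w^2$ whenever $X_1 \dotsm X_r u$ is a prefix of $w$ (with $u$ a prefix of $X_{r+1}$), and its sum is $2\Sigma(X_1 \dotsm X_r) + \Sigma(u)$: the block-boundary heights are doubled. Hence if the maximum of $\psw(w)$ is attained at $\Sigma(X_1 \dotsm X_r u)$ and some adjacent boundary sum were positive, this doubled prefix (or the prefix $X_1^2 \dotsm X_r^2 X_{r+1} u$) would exceed $\max(\psw(w^2))$, a contradiction. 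This forces $\Sigma(X_1 \dotsm X_r) \leq 0$ and $\Sigma(X_1 \dotsm X_{r+1}) \leq 0$ at the maximum, and symmetrically $\geq 0$ at the minimum; this is precisely, and only, where the solution hypothesis enters.

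Note also that once these two boundary signs are secured, no global chaining or per-position casework of the kind you anticipate is needed. The paper analyses only the positions where the extrema are attained: at the maximum the split inside $X_{r+1}$ must have the form $y1$ followed by $0z$ (else the position is not maximal), \autoref{lem:six-slope} gives $\slope(y) < \slope(10z)$, hence $\slope(y) \leq \slope(w)$ or $\slope(10z) \geq \slope(w)$, and either alternative combined via \eqref{eq:weight-slope} with the corresponding nonpositive boundary sum bounds the maximum by $\omega_1$ (degenerate splits, including the case $X_{r+1} = S_2 = 01$ with $\oa = 1$ that worries you, fall under $\abs{u} \in \{0, 1, \abs{X_{r+1}}\}$ and are trivial). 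So your proposal correctly identifies the two ingredients (\autoref{lem:six-slope} locally, the solution hypothesis globally) and proves the easy equivalence, but the essential idea --- comparing the two readings of $w^2$ to control the boundary sums adjacent to the extrema --- is missing, and without it the argument does not close.
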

  \begin{proof}
    Write $w^2$ as a product of minimal squares: $w^2 = X_1^2 \dotsm X_n^2$. First we want to argue that if the maximum
    of $\psw(w)$ occurs at a position between $\abs{X_1 \dotsm X_r}$ and $\abs{X_1 \dotsm X_r X_{r+1}}$, then
    $\Sigma(X_1 \dotsm X_r) \leq 0$ and $\Sigma(X_1 \dotsm X_{r+1}) \leq 0$. Say
    $\max(\psw(w)) = \Sigma(X_1 \dotsm X_r u)$ and $X_{r+1} = uv$ (we allow here $r = 0$, and then
    $X_1 \dotsm X_r = \varepsilon$). Assume for a contradiction that $\Sigma(X_1 \dotsm X_r) > 0$. Since $w$ is
    zero-sum, we have $\max(\psw(w^2)) = \max(\psw(w))$. The word $X_1^2 \dotsm X_r^2 u$ is a prefix of $w^2$. Now
    \begin{equation*}
      \Sigma(X_1^2 \dotsm X_r^2 u) = 2\Sigma(X_1 \dotsm X_r) + \Sigma(u) > \Sigma(X_1 \dotsm X_r) + \Sigma( u) = \max(\psw(w)),
    \end{equation*}
    which contradicts the fact that $\max(\psw(w^2)) = \max(\psw(w))$. Hence $\Sigma(X_1 \dotsm X_r) \leq 0$. Suppose
    then that $\Sigma(X_1 \dotsm X_{r+1}) > 0$. The word $X_1^2 \dotsm X_r^2 X_{r+1} u$ is a prefix of $w^2$. We
    compute:
    \begin{equation*}
      \Sigma(X_1^2 \dotsm X_r^2 X_{r+1} u)
      = \Sigma(X_1 \dotsm X_{r+1}) + \Sigma(X_1 \dotsm X_r u)
      > \Sigma(X_1 \dotsm X_r u) = \max(\psw(w)).
    \end{equation*}
    This is impossible, so $\Sigma(X_1 \dotsm X_{r+1}) \leq 0$.

    A symmetric argument shows that if $\psw(w)$ attains its minimum value at $X_1 \dotsm X_r u$, then
    $\Sigma(X_1 \dotsm X_r) \geq 0$ and $\Sigma(X_1 \dotsm X_{r+1}) \geq 0$ when $r$ and $u$ are defined like above.

    Let us proceed to consider the case of the maximum value $\max(\psw(w))$. Let
    $\max(\psw(w)) = \Sigma(X_1 \dotsm X_r u)$ and $X_{r+1} = uv$. We have shown above that
    $\Sigma(X_1 \dotsm X_r) \leq 0$ and $\Sigma(X_1 \dotsm X_{r+1}) \leq 0$. If
    $\abs{u} \in \{0, 1, \abs{X_{r + 1}}\}$, then it is easy to see that $\max(\psw(w)) \leq \omega_1$. Otherwise, we
    can write $u = ya$ and $v = bz$, where $a, b \in \{0, 1\}$ and $y \ne \varepsilon$. If $a = 0$, then
    \begin{equation*}
      \Sigma(X_1 \dotsm X_r u) = \Sigma(X_1 \dotsm X_r y) + \omega_0 < \Sigma(X_1 \dotsm X_r y),
    \end{equation*}
    which contradicts $\max(\psw(w)) = \Sigma(X_1 \dotsm X_r u)$, so it must be $a = 1$. Similarly, if $b = 1$, then
    \begin{equation*}
        \Sigma(X_1 \dotsm X_r u1) = \Sigma(X_1 \dotsm X_r u) + \omega_1 > \Sigma(X_1 \dotsm X_r u)
    \end{equation*}
    contradicts $\max(\psw(w)) = \Sigma(X_1 \dotsm X_r u)$, so it must be $b = 0$. By \autoref{lem:six-slope}, we have
    $\slope(y) < \slope(10z)$. Therefore, both $\slope(w) < \slope(y)$ and $\slope(10z) < \slope(w)$ cannot hold. In
    other words, at least one of $\slope(y) \leq \slope(w)$ and $\slope(10z) \geq \slope(w)$ is true. In the former
    case, $\Sigma(y) \leq 0$ by \eqref{eq:weight-slope}, and thus
    \begin{equation*}
      \max(\psw(w)) = \Sigma(X_1 \dotsm X_r) + \Sigma(y1) \leq \omega_1.
    \end{equation*}
    In the latter case, $\Sigma(10z) \geq 0$ by \eqref{eq:weight-slope}, and so
    \begin{equation*}
      \max(\psw(w)) = \Sigma(X_1 \dotsm X_{r+1}) - \Sigma(0z) \leq \omega_1.
    \end{equation*}

    Consider then the minimum $\min(\psw(w))$. Using the above notation, we have shown that
    $\Sigma(X_1 \dotsm X_r) \geq 0$ and $\Sigma(X_1 \dotsm X_{r+1}) \geq 0$. If
    $\abs{u} \in \{0, 1, \abs{X_{r+1}}\}$, then it is again easy to see that $\min(\psw(w)) \geq \omega_0$. Otherwise,
    we can write $u = y0$ and $v = 1z$, where $y \ne \varepsilon$. Again \autoref{lem:six-slope} implies that
    $\slope(y) \geq \slope(w)$ or $\slope(01z) \leq \slope(w)$. In the former case, $\Sigma(y) \geq 0$ by
    \eqref{eq:weight-slope}, and hence
    \begin{equation*}
      \min(\psw(w)) = \Sigma(X_1 \dotsm X_r) + \Sigma(y0) \geq \omega_0.
    \end{equation*}
    In the latter case, $\Sigma(01z) \leq 0$ by \eqref{eq:weight-slope}, and thus
    \begin{equation*}
      \max(\psw(w)) = \Sigma(X_1 \dotsm X_{r+1}) - \Sigma(1z) \geq \omega_0.
    \end{equation*}
  \end{proof}

  \begin{lemma}\label{lem:SL_blocks}
    Let $w$ be a solution to \eqref{eq:square} containing both letters $0$ and $1$. Let $\slope(w) = c/d$, where $c$
    and $d$ are relatively prime positive integers. Then $w \in \{01u, 10u\}^+$, where $u = a_1 \dotsm a_{d-2}$ and
    \begin{equation}\label{eq:central}
      a_j = \Floor{\frac{c (j + 1)}{d}}
            - \Floor{\frac{c j}{d}}
    \end{equation}
    for all $j$.
  \end{lemma}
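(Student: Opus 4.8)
The plan is to combine the narrow-tube estimate of \autoref{lem:narrow_tube} with elementary arithmetic of the prefix sums. Write $\slope(w) = c/d$ with $\gcd(c,d) = 1$; since $w$ contains both letters, $0 < c < d$, so $d \geq 2$ and $u$ has nonnegative length $d-2$. With $\omega_0 = -c/d$ and $\omega_1 = (d-c)/d$, the closed interval $[\omega_0,\omega_1]$ has length $\omega_1 - \omega_0 = 1$ and lies strictly inside $(-1,1)$, so the only integer it contains is $0$. Moreover $d \mid \abs{w}$ (from $\abs{w}_1 d = \abs{w} c$ and $\gcd(c,d)=1$), so I can write $\abs{w} = kd$ and split $w$ into $k$ consecutive blocks of length $d$.

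First I would show that the curve returns to height $0$ at every block boundary. If $p_i$ denotes the number of $1$'s in the length-$i$ prefix of $w$, then by \eqref{eq:weight-slope} the sum of that prefix equals $p_i - ci/d$; at $i = md$ this is the \emph{integer} $p_{md} - cm$, which by \autoref{lem:narrow_tube} lies in $[\omega_0,\omega_1]$ and hence equals $0$. Thus $p_{md} = cm$ for all $m$, the prefix sum vanishes at every multiple of $d$, and each block contains exactly $c$ ones. Because $\Sigma$ is additive and vanishes at the left endpoint of each block, the height inside a block coincides with the height of $w$ measured from that endpoint, so \autoref{lem:narrow_tube} applies verbatim inside each block.

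Next I would pin down the letters within a single block $b_1 \dotsm b_d$. Writing $q_j = \abs{b_1 \dotsm b_j}_1$, the tube condition $\omega_0 \leq q_j - cj/d \leq \omega_1$ rearranges to
\[
  \frac{c(j-1)}{d} \;\leq\; q_j \;\leq\; \frac{c(j-1)}{d} + 1 .
\]
This window has length $1$, so it contains a unique integer whenever $c(j-1)/d \notin \Z$, i.e.\ whenever $d \nmid (j-1)$; as $1 \leq j \leq d$, this holds for every $j \geq 2$. Hence $q_j = \Floor{c(j-1)/d} + 1$ is forced for $2 \leq j \leq d$ (in particular $q_2 = 1$ and $q_d = c$), while only $q_1 \in \{0,1\}$ remains free. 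Recovering letters via $b_j = q_j - q_{j-1}$, the value $q_2 = 1$ gives $b_1 b_2 \in \{01,10\}$ according to the choice of $q_1$, and for $j \geq 3$ a telescoping of the two floor terms gives $b_j = \Floor{c(j-1)/d} - \Floor{c(j-2)/d} = a_{j-2}$, matching \eqref{eq:central}. Thus each block equals $01u$ (if $q_1 = 0$) or $10u$ (if $q_1 = 1$), and concatenating the $k$ blocks yields $w \in \{01u,10u\}^+$.

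The main obstacle, and the step deserving the most care, is the reduction to a unit-width window and the resulting uniqueness of the $q_j$: everything hinges on the three coincidences $\omega_1 - \omega_0 = 1$, block length $=$ slope denominator $d$, and $\gcd(c,d)=1$ (which makes $c(j-1)/d$ non-integral precisely for $1 < j \leq d$). I would also check the boundary indices $j \in \{1,d\}$ separately—$q_1$ is genuinely free and $q_d = c$ restores the correct letter count—and confirm that the forced values are realizable by letters in $\{0,1\}$, i.e.\ that consecutive forced $q_j$ differ by $0$ or $1$, which follows from $c/d < 1$. The identification of the floor differences with the $a_j$ of \eqref{eq:central} is then a routine telescoping computation.
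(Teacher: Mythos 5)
Your proof is correct, and while it rests on the same two pillars as the paper's — \autoref{lem:narrow_tube}, plus the fact that the tube confines each prefix's $1$-count to a window of length exactly $1$ whose endpoints are non-integral precisely because $\gcd(c,d)=1$ — you organize the block decomposition genuinely differently. The paper factorizes $w$ into minimal zero-sum blocks $z_i$ (no nonempty proper prefix of $z_i$ has sum zero), so the block length is unknown a priori; it must then show each $z_i$ starts with $01$ or $10$ (via $\Sigma(00)<\omega_0$ and $\Sigma(11)>\omega_1$), force the next $d-2$ letters with the window argument, compute $\slope(01u)=c/d$ so that $\Sigma(01u)=\Sigma(10u)=0$, and finally invoke minimality of the blocks to conclude $z_i\in\{01u,10u\}$ exactly. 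You instead cut $w$ into blocks of length exactly $d$ from the outset, justified by an integrality trick absent from the paper: the prefix sum at any multiple of $d$ is an integer lying in $[\omega_0,\omega_1]\subset(-1,1)$, hence zero. This makes the zero-sum block structure available immediately, lets the window argument run uniformly for $2\le j\le d$ (so the $01$/$10$ dichotomy falls out of the forced value $q_2=1$ together with the free choice of $q_1$), and removes the need for both the closing slope computation and the appeal to minimality of the blocks. The two routes are of comparable length; yours trades the paper's first-return-to-zero bookkeeping for an upfront arithmetic observation, which is arguably a bit cleaner, while the paper's version has the minor advantage of never needing to know in advance that $d$ divides $\lvert w\rvert$.
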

  \begin{proof}
    The point of the proof is that after the initial letter of $w$ is chosen, the remaining letters are uniquely
    determined by the number $\slope(w)$. Indeed, if $v$ is a prefix of $w$ and $\Sigma(v) > 0$, then $v0$ is a prefix
    of $w$ or otherwise \autoref{lem:narrow_tube} is violated. If $\Sigma(v) < 0$, then $v1$ is a prefix of $w$.

    Write $w = z_1 \dotsm z_m$ where for all $i$ we have $\Sigma(z_i) = 0$ and $\Sigma(p) \neq 0$ for all nonempty
    proper prefixes $p$ of $z_i$. We are going to show that every $z_i$ is in $\{01u, 10u\}$, which proves the theorem.
    By \autoref{lem:narrow_tube}, $\omega_0 \leq \Sigma(p) \leq \omega_1$ for all prefixes $p$ of $z_i$. By
    \eqref{eq:weight-slope}, $\Sigma(z_i) = 0$ is equivalent to $\slope(z_i) = \slope(w)$ which implies
    $\abs{z_i} \geq d \geq 2$. Because $\Sigma(00) < \omega_0$ and $\Sigma(11) > \omega_1$, $z_i$ must begin with
    either $01$ or $10$. Let the following letters after that be $b_1, \dots, b_{d - 2}$. Let
    $k_j = \abs{b_1 \dotsm b_j}_1$ for all $j$. Then
    \begin{align*}
           & \omega_0 \leq \Sigma(01 b_1 \dotsm b_j) \leq \omega_1 \\
      \iff & \omega_0 \leq (j - k_j + 1) \omega_0 + (k_j + 1) \omega_1 \leq \omega_1 \\
      \iff & -(j + 1)\omega_0 + \omega_0 - \omega_1 \leq k_j (\omega_1 - \omega_0) \leq -(j + 1)\omega_0 \\
      \iff & (j + 1) \dfrac{c}{d} - 1 \leq k_j \leq (j + 1) \dfrac{c}{d}.
    \end{align*}
    Here we have used the facts $\omega_1 - \omega_0 = 1$ and $\omega_0 = -\slope(w) = -c/d$. If $j \leq d - 2$, then
    $(j + 1) c / d$ is not an integer, and thus $k_j = \floor{(j + 1) c / d}$. It follows that
    $b_j = k_j - k_{j - 1} = a_j$ for all $j \in \{1, \dots, d - 2\}$. Consequently, we have shown that $z_i$ begins
    with either $01u$ or $10u$. Further, we have
    \begin{equation*}
      \slope(01u)
      = \frac{1 + k_{d - 2}}{d}
      = \frac{1 + \floor{(d - 1) c / d}}{d}
      = \frac{c}{d},
    \end{equation*}
    so $\Sigma(01 u) = \Sigma(10 u) = 0$ by \eqref{eq:weight-slope}. Because $z_i$ does not have nonempty proper
    prefixes with zero sum, it must be that $z_i \in \{01u, 10u\}$.
  \end{proof}

  The formula \eqref{eq:central} for the word $u$ of \autoref{lem:SL_blocks} matches exactly the construction of
  so-called central words. Usually it is defined that a binary word $w$ is \emph{central} if $w01$ and $w10$ are
  standard words. In \cite[Ch.~2.2.1]{2002:algebraic_combinatorics_on_words}, a central word of length $d$ containing
  $c$ occurrences of $1$ with $c$, $d$ relatively prime is constructed. This construction uses the same formula as
  \eqref{eq:central}, so \cite[Prop.~2.2.12]{2002:algebraic_combinatorics_on_words}, which proves the validity of the
  construction, shows that the word $u$ of \autoref{lem:SL_blocks} is a central word. Thus $u01$ and $u10$ are standard
  words. Since central words are palindromes (see \cite[Thm.~2.2.4]{2002:algebraic_combinatorics_on_words}), it follows
  that the words $01u$ and $10u$ are reversed standard words. We have thus proved the following result.

  \begin{proposition}\label{prp:sl_product}
    Let $w$ be a solution to \eqref{eq:square} containing both letters $0$ and $1$. Then there exists a unique reversed
    standard word $S$ such that $w \in \{S, L(S)\}^+$.
  \end{proposition}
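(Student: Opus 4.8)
The plan is to obtain the proposition directly from \autoref{lem:SL_blocks}, reinterpreting the block $u$ produced there as a central word. Writing $\slope(w) = c/d$ in lowest terms (note $0 < c < d$ since $w$ contains both letters, so $d \geq 2$), \autoref{lem:SL_blocks} already gives $w \in \{01u, 10u\}^+$ with $u = a_1 \dotsm a_{d-2}$ defined by \eqref{eq:central}. The first step is therefore one of recognition: formula \eqref{eq:central} coincides with the construction of the central word of length $d - 2$ and slope $c/d$ in \cite[Ch.~2.2.1]{2002:algebraic_combinatorics_on_words}, so \cite[Prop.~2.2.12]{2002:algebraic_combinatorics_on_words} certifies that $u$ is central, whence $u01$ and $u10$ are standard words. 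Since central words are palindromes by \cite[Thm.~2.2.4]{2002:algebraic_combinatorics_on_words}, reversing these shows that both $01u$ and $10u$ are reversed standard words. Taking $S = 01u$ gives $L(S) = 10u$, and thus $w \in \{01u, 10u\}^+ = \{S, L(S)\}^+$, which settles existence.

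For uniqueness I would show that $w$ itself pins down the relevant length and slope. Suppose $w \in \{S', L(S')\}^+$ for some reversed standard word $S'$. Because $L$ only exchanges the first two letters, $\abs{S'} = \abs{L(S')}$ and $\abs{S'}_1 = \abs{L(S')}_1$, so in any such factorization every block has length $\abs{S'}$ and the same number of ones; consequently the common slope of the blocks equals $\slope(w) = c/d$. As $S'$ is a reversed standard word we have $\gcd(\abs{S'}, \abs{S'}_1) = 1$, so the fraction $\abs{S'}_1 / \abs{S'}$ is already in lowest terms and $\abs{S'} = d$. A reversed standard word of length at least two begins with two distinct letters, so $S' = 01u'$ or $S' = 10u'$ with $u'$ a central word of length $d - 2$ and slope $c/d$; the uniqueness of central words of a given slope forces $u' = u$. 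Hence $\{S', L(S')\} = \{01u, 10u\}$, so the factorization alphabet is determined by $w$.

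The main thing to get right is the precise sense of ``unique'': the two reversed standard words $01u$ and $10u$ are interchanged by $L$ and both satisfy the defining condition, so the uniqueness being proved is that of the unordered pair $\{S, L(S)\}$ (equivalently, of the central word $u$), which is exactly what the slope argument delivers. Beyond this bookkeeping I expect no obstacle, since the substantive work --- the tube estimate of \autoref{lem:narrow_tube} and the slope-driven reconstruction of the letters in \autoref{lem:SL_blocks} --- is already complete, and \autoref{prp:sl_product} amounts to translating \autoref{lem:SL_blocks} into the vocabulary of reversed standard words by way of the cited facts about central words.
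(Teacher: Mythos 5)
This is correct and essentially the paper's own argument: the paper likewise obtains the proposition by recognizing \eqref{eq:central} in \autoref{lem:SL_blocks} as the central-word construction of \cite[Ch.~2.2.1]{2002:algebraic_combinatorics_on_words}, invoking \cite[Prop.~2.2.12]{2002:algebraic_combinatorics_on_words} and the palindromicity of central words (\cite[Thm.~2.2.4]{2002:algebraic_combinatorics_on_words}) to conclude that $01u$ and $10u$ are reversed standard words. Your explicit uniqueness argument---and your reading of ``unique'' as uniqueness of the pair $\{S, L(S)\}$ rather than of $S$ itself---is sound and consistent with the paper, which leaves this step implicit and later (in \autoref{sec:enumeration}) refers to it precisely as uniqueness up to an application of $L$.
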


  We need one small result before we can prove \autoref{thm:main}. When we use this result in the proof of
  \autoref{thm:main}, the word $w'$ will actually be $w^2$.

  \begin{lemma}\label{lem:square_prefix}
    Let $w'$ in $\Pi(\oa, \ob)$ be a word such that $\sqrt{w'}$ is a prefix of $w'$. If $u^2$ is a prefix of $w'$, then
    $u$ is a solution to \eqref{eq:square}.
  \end{lemma}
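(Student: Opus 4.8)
The plan is to reduce the statement to a single claim about lengths, and then to isolate the one genuinely combinatorial step. Write $w' = X_1^2 \dotsm X_n^2$ for the unique factorization of $w'$ into minimal squares, so that $\sqrt{w'} = X_1 \dotsm X_n$, and set $m_k = \abs{X_1 \dotsm X_k}$, so that $\abs{X_1^2 \dotsm X_k^2} = 2m_k$ and $\abs{\sqrt{w'}} = m_n = \abs{w'}/2$. Because $\sqrt{w'}$ is a prefix of $w'$ of length exactly $\abs{w'}/2$, it is the first half of $w'$; hence every prefix of $w'$ of length at most $\abs{w'}/2$ is a prefix of $X_1 \dotsm X_n$, and the prefix of $w'$ of length $m_k$ is $X_1 \dotsm X_k$. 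The key reduction is that it suffices to prove the \emph{alignment} claim $2\abs{u} = 2m_m$ for some $m$ (equivalently, that the square prefix $u^2$ ends at a boundary of the minimal-square factorization). Indeed, if $2\abs{u} = 2m_m$, then the prefix of $w'$ of length $2\abs{u}$ is $X_1^2 \dotsm X_m^2 \in \Pi(\oa,\ob)$, so $u^2 = X_1^2 \dotsm X_m^2$; and $u$, being the prefix of $w'$ of length $m_m$, equals $X_1 \dotsm X_m = \sqrt{u^2}$, whence $u$ is a solution. Notice that the square-root-prefix hypothesis is exactly what converts alignment into the identity $\sqrt{u^2} = u$.

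The engine for the alignment claim is the minimality of minimal squares: since each $X_i^2$ is a minimal square, its only square prefix is $X_i^2$ itself. I would argue by induction on $\abs{w'}$. If $u^2 = w'$, then $u = \sqrt{w'}$ and we are done. If $\abs{u^2} \leq \abs{X_1^2}$, then $u^2$ is a square prefix of $X_1^2$, so minimality forces $u^2 = X_1^2$ and $u = X_1$; this is the base case. Otherwise let $m$ be largest with $X_1^2 \dotsm X_m^2$ a prefix of $u^2$; the goal is to show that $u^2$ stops exactly there and does not straddle into $X_{m+1}^2$.

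A helpful observation is that the hypothesis propagates to every aligned prefix: since $X_1 \dotsm X_k$ and $X_1^2 \dotsm X_k^2$ are both prefixes of $w'$ and the former is the shorter, $\sqrt{X_1^2 \dotsm X_k^2}$ is a prefix of $X_1^2 \dotsm X_k^2$, which therefore again satisfies the lemma's hypothesis. Consequently, when $u^2$ straddles $X_{m+1}^2$ with $m+1 < n$, applying the induction hypothesis to the strictly shorter word $X_1^2 \dotsm X_{m+1}^2$ (of which $u^2$ is a prefix) already shows that $u$ is a solution. The only remaining case is thus $m+1 = n$: the word $X_1^2 \dotsm X_{n-1}^2$ is a prefix of $u^2$, and $u^2$ ends strictly inside the final square $X_n^2$.

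The main obstacle is ruling out this last straddling case. Writing $u = X_1 \dotsm X_{n-1} q$ with $q$ a proper nonempty prefix of $X_n$ (legitimate because $u$ is a prefix of $\sqrt{w'}$), the equality of the two halves of the square $u^2$ pins the straddling remainder — a proper prefix of $X_n^2$ of length $2\abs{q}$ — to a factor of $w'$ that can be read off from the first half $u$. Comparing this forced factor against the rigid shapes of the minimal square roots in \eqref{eq:min_squares}, aided by the slope inequalities of \autoref{lem:six-slope}, should exhibit a square prefix of $X_n^2$ strictly shorter than $X_n^2$, contradicting minimality. I expect this comparison to need a short case split according to whether the midpoint of $u^2$ lies inside $X_1^2 \dotsm X_{n-1}^2$ or inside $X_n^2$, and to be the only part of the argument that touches the explicit forms of the six minimal squares; everything else is soft and driven purely by unique factorization and the prefix hypothesis.
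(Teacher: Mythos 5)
Your skeleton is the same as the paper's: reduce to showing that $u^2$ ends on a boundary of the minimal-square factorization, and rule out straddling by producing a square prefix of a minimal square. Everything through the reduction, the hypothesis-propagation observation, and the induction is correct. The gap is precisely the step you defer: you stop at ``the equality of the two halves of $u^2$ pins the straddling remainder'' and predict that the contradiction will come from a case analysis against the explicit shapes \eqref{eq:min_squares}, aided by \autoref{lem:six-slope} and a split on where the midpoint of $u^2$ falls. That prediction is a misdirection, and as written the only genuinely combinatorial step of the lemma is left unproved. The finish is soft, uses none of the six shapes, and takes two lines with the notation you already set up: you have $u = X_1 \dotsm X_{n-1} q$ with $q$ a nonempty proper prefix of $X_n$, and $u^2 = X_1^2 \dotsm X_{n-1}^2 z$ where $z$ is a prefix of $X_n^2$ of length $2\abs{q}$. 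The first half of $z$ is the prefix of $X_n^2$ of length $\abs{q} < \abs{X_n}$, hence the prefix of $X_n$ of length $\abs{q}$, which is $q$; the second half of $z$ is the suffix of $u^2$ of length $\abs{q}$, hence the suffix of $u$ of length $\abs{q}$, which is again $q$. So $z = q^2$ is a nonempty square prefix of the minimal square $X_n^2$ with $q \neq X_n$ --- the desired contradiction, with no case split and no slope inequalities.

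Once this is seen, your scaffolding collapses: the identical two-line argument rules out straddling into any $X_{m+1}^2$, not just the last one, and this is exactly how the paper argues. It takes $m$ maximal with $X_1^2 \dotsm X_m^2$ a prefix of $u^2$, writes $u^2 = X_1^2 \dotsm X_m^2 z$ with $z = xy$ and $\abs{x} = \abs{y}$, uses the prefix hypothesis (both $u$ and $X_1 \dotsm X_m X_{m+1}$ are prefixes of $w'$) to get $u = X_1 \dotsm X_m x$, deduces $y = x$ by comparing suffixes, and then concludes $z = \varepsilon$ from minimality of $X_{m+1}^2$ together with maximality of $m$. So your induction on $\abs{w'}$ and the passage of the hypothesis to aligned prefixes, while correct, are extra machinery that the uniform argument makes unnecessary.
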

  \begin{proof}
    Write $w'$ as a product of minimal squares: $w' = X_1^2 \dotsm X_n^2$, and let $m$ be the largest index such that
    $X_1^2 \dotsm X_m^2$ is a prefix of $u^2$. Write $u^2 = X_1^2 \dotsm X_m^2 z$ for a prefix $z$ of $X_{m+1}^2$. It
    follows that $\abs{z}$ is even, so we may write $z = xy$ with $\abs{x} = \abs{y}$. Since $u$ and
    $X_1 \dotsm X_m X_{m+1}$ are prefixes of $w'$, $x$ is a prefix of $X_{m+1}$, and
    $\abs{u} = \abs{X_1 \dotsm X_m x}$, we see that $u = X_1 \dotsm X_m x$. Hence $x$ is a suffix of $u$ and $y = x$,
    that is, $z = x^2$. Since the square $X_{m+1}$ is minimal and the index $m$ is maximal, the only option is that $z$
    is empty. Thus $u^2 = X_1^2 \dotsm X_m^2$ and $u = X_1 \dotsm X_m$. In other words, the word $u$ is a solution to
    \eqref{eq:square}.
  \end{proof}

  \begin{proof}[Proof of \autoref{thm:main}]
    Let $w$ be a primitive solution to \eqref{eq:square}. If $\abs{w} = 1$, then $w = 0$ and $w$ is a reversed standard
    word. We may thus suppose that $\abs{w} > 1$. Since $w$ is primitive, this means that both letters $0$ and $1$
    occur in $w$. By \autoref{prp:sl_product}, there exists a reversed standard word $S$ such that $w \in \{S, L\}^+$
    where $L = L(S)$. The word $w$ can be understood as a word over the alphabet $\{S, L\}$; we denote this word by
    $u$. If $\abs{u} = 1$, then the case (I) applies, so assume that $\abs{u} > 1$. For the claim, it suffices to
    establish that $u$ is a primitive pattern word and $\abs{S} > \abs{S_6}$. Suppose first that $\abs{S} > \abs{S_6}$.
    Group the letters of $u^2$ as blocks of two: $u^2 = A_0 B_0 \cdot A_1 B_1 \dotsm A_{\abs{u}-1} B_{\abs{u}-1}$. If
    $AB$ is such a block, then $\smash[t]{\sqrt{\pat[S]{AB}} = \pat[S]{A}}$ by \autoref{lem:sl_square_root}. Then
    \begin{equation*}
      \sqrt{\pat[S]{A_0 B_0 \dotsm A_i B_i}} = \pat[S]{A_0 \dotsm A_i}
    \end{equation*}
    for all $i$. Since $w$ is a solution, the word $\pat[S]{A_0 \dotsm A_i}$ is a prefix of $w$. Since $\mathcal{P}_S$
    is injective, we conclude that the $i$th letter of $u$ equals its $2i$th letter when the indices are understood
    modulo $\abs{u}$. Therefore $u$ is a pattern word. The word $u$ must be primitive because otherwise $\pat[S]{u}$ is
    not primitive.

    The next part of the proof consists of showing that $w^2$ is not in $\Pi(\oa, \ob)$ if $\abs{S} \leq \abs{S_6}$
    meaning that $w$ is not a solution. Recall from \autoref{sec:preliminaries} the construction of reversed standard
    words having parameters $\oa$ and $\ob$ and the fact that the words \eqref{eq:min_squares} are reversed standard
    words. It is straightforward to see that if $S$ is a reversed standard word having parameters $\oa$ and $\ob$ and
    containing both letters $0$ and $1$ and $\abs{S} \leq \abs{S_6}$, then $S$ is of the form
    \begin{equation*}
      10^\oa, \quad 0(10^\oa)^\ell \quad \text{with $1 \leq \ell \leq \ob + 1$}, \quad 10^{\oa+1} (10^\oa)^{\ob+1}
    \end{equation*}
    up to an application of the mapping $L$.

    Since the word $u$ is primitive and $\abs{u} > 1$, both letters $S$ and $L$ occur in $u$. It follows that both $SL$
    and $LS$ occur in $u^2$. If $S = 10^\oa$, then $L = 010^{\oa-1}$ and $LS$ contains the factor $010^{\oa-1} 1$
    showing that $\pat[S]{LS} \notin \Lang{\oa, \ob}$. By symmetry, the same happens if $S = 010^{\oa-1}$. Similar
    reasoning in the remaining cases shows that $LS$ or $SL$ must contain a factor that contradicts with
    $w^2 \in \Lang{\oa, \ob}$. Consequently $w^2 \notin \Pi(\oa, \ob)$.

    If (I) or (II) holds for $w$, then \autoref{prp:standard_solutions} and \autoref{prp:pattern_solution} respectively 
    imply that $w$ is a primitive solution to \eqref{eq:square}. Let finally $w$ be a nonprimitive solution to
    \eqref{eq:square}, and write $w = v^k$ for a primitive word $v$ and integer $k \geq 2$. By
    \autoref{lem:square_prefix}, the word $v$ is a solution to \eqref{eq:square}. Thus $w$ is a power of a primitive
    solution to \eqref{eq:square}.
  \end{proof}

  \section{Enumeration of Solutions}\label{sec:enumeration}
  In this section, we give a formula for counting solutions to \eqref{eq:square} of length $n$ for all possible values
  of $\oa$ and $\ob$.

  As before, we assume that a solution never contains the factor $11$, that is, we count the solutions up to the
  isomorphism $0 \mapsto 1$, $1 \mapsto 0$. Since $L(w)$ is a solution whenever $w$ is a solution and $\abs{w} \geq 2$,
  we may count the solutions up to the application of the mapping $L$ as well. Let us first recall an important result.

  \begin{proposition}\label{prp:euler}\cite[Cor.~2.2.16]{2002:algebraic_combinatorics_on_words}
    The number of standard words of length $n$ up to the isomorphism $0 \mapsto 1$, $1 \mapsto 0$ is given by
    $\varphi(n)$, where $\varphi$ is Euler's totient function.
  \end{proposition}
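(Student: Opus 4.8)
The plan is to count standard words by first passing to central words and then counting central words through their slopes, using the correspondence already exposed in \autoref{lem:SL_blocks} and the surrounding discussion. I would begin by disposing of the degenerate case $n = 1$: the only standard words of length $1$ are $0$ and $1$, and these form a single class under the exchange $0 \mapsto 1$, $1 \mapsto 0$, which matches $\varphi(1) = 1$. For $n \geq 2$ I would invoke the classical correspondence between standard and central words (\cite[Ch.~2.2.1]{2002:algebraic_combinatorics_on_words}): a word $w$ with $\abs{w} = n$ is standard if and only if $w = u01$ or $w = u10$ for a central word $u$ of length $n - 2$. Since $u01$ and $u'10$ never coincide (their last two letters differ) while $u01 = u'01$ forces $u = u'$, the assignment $w \mapsto u$ is a two-to-one surjection from the standard words of length $n$ onto the central words of length $n - 2$. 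Hence the number of standard words of length $n$ equals twice the number of central words of length $n - 2$.

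The next step is to count central words of length $n - 2$. The construction \eqref{eq:central}, whose validity is ensured by \cite[Prop.~2.2.12]{2002:algebraic_combinatorics_on_words} and which already underlies \autoref{lem:SL_blocks}, attaches to each fraction $c/d$ in lowest terms with $d = n$ and $0 < c < n$ a unique central word of length $n - 2$ whose slope is $c/n$; conversely, every central word of length $n - 2$ is recovered in this way from its own slope. Thus central words of length $n - 2$ are in bijection with the integers $c \in \{1, \dots, n - 1\}$ satisfying $\gcd(c, n) = 1$, of which there are exactly $\varphi(n)$. Together with the reduction of the previous paragraph, this shows that there are $2\varphi(n)$ standard words of length $n$.

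It then remains to quotient by the letter exchange. The map $0 \mapsto 1$, $1 \mapsto 0$ sends standard words to standard words (it carries $u01$ to $\overline{u}\,10$, where $\overline u$ is again central of complementary slope $(n - c)/n$), and it is an involution with no fixed point, since it changes every letter of a word. A fixed-point-free involution on a set of cardinality $2\varphi(n)$ has precisely $\varphi(n)$ orbits, and these orbits are exactly the classes of standard words of length $n$ up to the exchange. This yields the claimed value $\varphi(n)$.

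I expect the main obstacle to be the central-word step rather than the counting itself: one must know both that \eqref{eq:central} produces a genuine central word for \emph{every} coprime slope and that \emph{every} central word arises from its slope in this manner, as well as the standard-to-central reduction used in the first paragraph. These are precisely the facts packaged in \cite[Ch.~2.2.1, Prop.~2.2.12]{2002:algebraic_combinatorics_on_words}; in the present setting the existence-and-uniqueness direction is already available through \autoref{lem:SL_blocks}, so the outstanding work is chiefly the verification of the two-to-one correspondence and of the fixed-point-freeness of the exchange.
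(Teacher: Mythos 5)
This proposition is not proved in the paper at all: it is quoted verbatim from \cite[Cor.~2.2.16]{2002:algebraic_combinatorics_on_words}, so there is no internal proof to compare against. Your argument is a correct reconstruction of the classical counting proof, and it is consistent with the ingredients the paper itself touches on around \autoref{lem:SL_blocks}. The three steps are all sound: (1) for $n \geq 2$ the standard words of length $n$ are exactly the words $u01$ and $u10$ with $u$ central of length $n-2$, giving a two-to-one map onto central words; (2) the central words of length $n-2$ are in bijection with the residues $c \in \{1, \dots, n-1\}$ coprime to $n$ via the formula \eqref{eq:central} (existence by \cite[Prop.~2.2.12]{2002:algebraic_combinatorics_on_words}, injectivity because the word determines $c$ through its number of ones, surjectivity because every central word is recovered from its own slope); and (3) letter exchange is a fixed-point-free involution on the resulting $2\varphi(n)$ standard words (it sends $u01$ to the word obtained by complementing $u$ followed by $10$, and complementation corresponds to $c \mapsto n-c$ on slopes), so there are exactly $\varphi(n)$ orbits. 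Two small remarks: you should be explicit that what you are doing is re-deriving the cited corollary from \emph{earlier} results of the same chapter of \cite{2002:algebraic_combinatorics_on_words} (the standard--central correspondence and Prop.~2.2.12), so no circularity arises; and the heavy lifting really does sit in step (2), exactly as you flag in your closing paragraph --- \autoref{lem:SL_blocks} by itself only concerns words that are already solutions of \eqref{eq:square}, so it cannot substitute for the general existence-and-uniqueness statement about central words, which must be taken from the book.
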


  Before counting the number of solutions, we need to count the number of pattern words of length $\ell$ that begin
  with the letter $S$. From the definition of a pattern word, it is clear that this number is
  $2^{\mathcal{O}(\ell) - 1}$ where $\mathcal{O}(\ell)$ is the number of orbits of the mapping
  $x \mapsto 2x \bmod{\ell}$. The number $\mathcal{O}(\ell)$ depends heavily on the arithmetic nature of $\ell$, and we
  do not study it in more detail in this paper. Suffice it to say that the first values of $\mathcal{O}(\ell)$ are $1$,
  $1$, $2$, $1$, $2$, $2$, $3$, $1$, $3$, $2$, $2$, $2$, $2$, $3$, $5$, $1$, $3$, $3$, $2$, $2$, $6$ (see the entry
  \href{https://oeis.org/A000374}{A000374} in Sloane's \emph{On-Line Encyclopedia of Integer Sequences} \cite{oeis}),
  and the following lemma provides a formula for $\mathcal{O}(\ell)$.

  \begin{lemma}
    Let $\ell$ be a positive integer and $2^i$ the largest power of $2$ dividing $\ell$, and set $\ell' = \ell / 2^i$.
    We have
    \begin{equation*}
      \mathcal{O}(\ell) = \sum_{d \mid \ell'} \frac{\varphi(d)}{\operatorname{ord}(2, d)}
    \end{equation*}
    where $\operatorname{ord}(2, d)$ is the order of the element $2$ in the group $\Z^*_d$.
  \end{lemma}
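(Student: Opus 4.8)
The plan is to reinterpret $\mathcal{O}(\ell)$ as the number of weakly connected components of the functional graph of the map $f(x) = 2x \bmod \ell$ on $\Z/\ell\Z$, since two indices lie in the same orbit precisely when they lie in the same component of this graph. First I would record the standard fact that in any functional graph the number of weakly connected components equals the number of directed cycles: every forward orbit $x, f(x), f^2(x), \dots$ eventually enters a unique cycle $C(x)$, and the relation ``$i$ and $j$ are in the same orbit'' is exactly ``$C(i) = C(j)$''. Indeed, if $2i \equiv j \pmod{\ell}$ then $f(i) = j$, so $f^k(i) = f^{k-1}(j)$ and the forward orbits of $i$ and $j$ merge; hence $x \mapsto C(x)$ is constant on each class. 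Conversely all points of a single cycle are mutually equivalent, and every point is equivalent to the cycle it eventually reaches. Therefore $\mathcal{O}(\ell)$ equals the number of cycles of $f$.

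Next I would identify the periodic points of $f$, i.e.\ those lying on cycles. A point $x$ is periodic iff $(2^k - 1)x \equiv 0 \pmod{\ell}$ for some $k \geq 1$. Writing $\ell = 2^i \ell'$ and using the Chinese remainder theorem, the congruence modulo $2^i$ forces $x \equiv 0 \pmod{2^i}$ because $2^k - 1$ is odd; conversely, once $2^i \mid x$, taking $k$ equal to the multiplicative order of $2$ modulo $\ell'$ satisfies the congruence modulo $\ell'$ as well. Thus the periodic points are exactly the multiples of $2^i$ in $\Z/\ell\Z$, which form a cyclic subgroup of order $\ell'$; the isomorphism $2^i t \mapsto t$ carries $f$ to the doubling map $t \mapsto 2t \bmod \ell'$ on $\Z/\ell'\Z$. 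Since $\ell'$ is odd this doubling map is a bijection, so its cycles coincide with its orbits, and we conclude that $\mathcal{O}(\ell)$ equals the number of orbits of $t \mapsto 2t$ on $\Z/\ell'\Z$.

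Finally I would count these orbits by stratifying $\Z/\ell'\Z$ according to $\gcd$. For each divisor $d \mid \ell'$, the elements $x$ with $\gcd(x, \ell') = \ell'/d$ are the $\varphi(d)$ elements of the form $x = (\ell'/d)u$ with $\gcd(u, d) = 1$, and doubling preserves this stratum. For such $x$ one has $2^k x \equiv x \pmod{\ell'}$ iff $2^k \equiv 1 \pmod{d}$, so the orbit of $x$ has size $\operatorname{ord}(2, d)$; hence the stratum splits into $\varphi(d)/\operatorname{ord}(2,d)$ orbits. Summing over all $d \mid \ell'$ yields $\mathcal{O}(\ell) = \sum_{d \mid \ell'} \varphi(d)/\operatorname{ord}(2,d)$, as claimed, with the convention $\operatorname{ord}(2,1) = 1$ accounting for the fixed point $0$ (the stratum $d = 1$).

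I expect no single step to be a serious obstacle; the one point requiring care is the passage from orbits to cycles, namely the observation that the weakly connected components of the functional graph biject with its cycles, together with the clean reduction to the odd part $\ell'$ via the characterization of periodic points. Once that reduction is in place, the $\gcd$-stratification of the bijective doubling map on $\Z/\ell'\Z$ is entirely routine.
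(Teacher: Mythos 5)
Your proposal is correct and takes essentially the same route as the paper's proof: reduce from counting orbits to counting cycles, show the periodic points are exactly the multiples of $2^i$ so that everything reduces to the bijective doubling map on $\Z/\ell'\Z$, and then stratify by $\gcd$ so that each stratum of size $\varphi(d)$ splits into orbits of length $\operatorname{ord}(2, d)$. The only cosmetic difference is that the paper phrases the per-stratum count as counting cosets of the subgroup $\langle 2 \rangle$ in $\Z^*_{\ell/d}$, which is the same computation as your orbit-size argument.
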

  \begin{proof}
    Each orbit of the mapping $x \mapsto 2x \bmod{\ell}$ contains a unique cycle, so it suffices to compute the number
    of cycles. If $x$, $2x$, $\ldots$, $2^k x$ is a cycle (that is, $2^{k+1} x = x$), then
    $(2^{k+1} - 1)x \equiv 0 \pmod{\ell}$, and it must be that $2^i$ divides $x$. By dividing $x$ by $2^i$, we thus
    obtain a cycle of length $k + 1$ modulo $\ell / 2^i$. Conversely every cycle modulo $\ell / 2^i$ produces a cycle
    of the same length modulo $\ell$ so, in order to count the cycles, it suffices to consider the case $i = 0$.

    Let $d = \gcd(x, \ell)$. Then $x/d$, $2x/d$, $\ldots$, $2^k x/d$ is a cycle of length $k+1$ modulo $\ell/d$, and it
    corresponds to a coset of the subgroup $\langle 2 \rangle$ generated by $2$. Thus
    $k + 1 = \operatorname{ord}(2, \ell/d)$. Conversely every coset of $\langle 2 \rangle$ yields a cycle of length
    $\operatorname{ord}(2, \ell/d)$ modulo $\ell$. The number of such cosets is
    $\abs{\Z^*_{\ell/d}}/\operatorname{ord}(2, \ell/d)$. As $\abs{\Z^*_{\ell/d}} = \varphi(\ell/d)$, the claim
    follows by summing over the divisors of the number $\ell$.
  \end{proof}

  The sequence $(\operatorname{ord}(2, 2n + 1))_n$ is given as the sequence \href{https://oeis.org/A002326}{A002326} in
  the OEIS \cite{oeis}; see also \href{https://oeis.org/A037226}{A037226}.
  
  Let $n$ be an integer such that $n > 2$ and $w$ be a solution of \eqref{eq:square} of length $n$ for some values of
  $\oa$ and $\ob$ such that $11$ is not a factor of $w$ and $w$ begins with $0$. The main message of \autoref{thm:main}
  is that for each solution to \eqref{eq:square} there exists a pattern word (perhaps a trivial one) $u$ and a reversed
  standard word $v$ such that $w = \pat[v]{u}$. The word $w$ can be constructed in this manner in two ways. Indeed, if
  we denote by $L(u)$ the word we obtain from the pattern word $u$ by exchanging the letters $S$ and $L$, then clearly
  $w = \pat[L(v)]{L(u)}$. However, there is no third possibility since \autoref{prp:sl_product} ensures that $v$ is
  unique up to an application of $L$. Our conclusion is that $u$ and $v$ are unique if we insist that $w$ begins with
  $0$ and $u$ begins with $S$.

  \begin{figure}
    \centering
    \includegraphics[width=\textwidth]{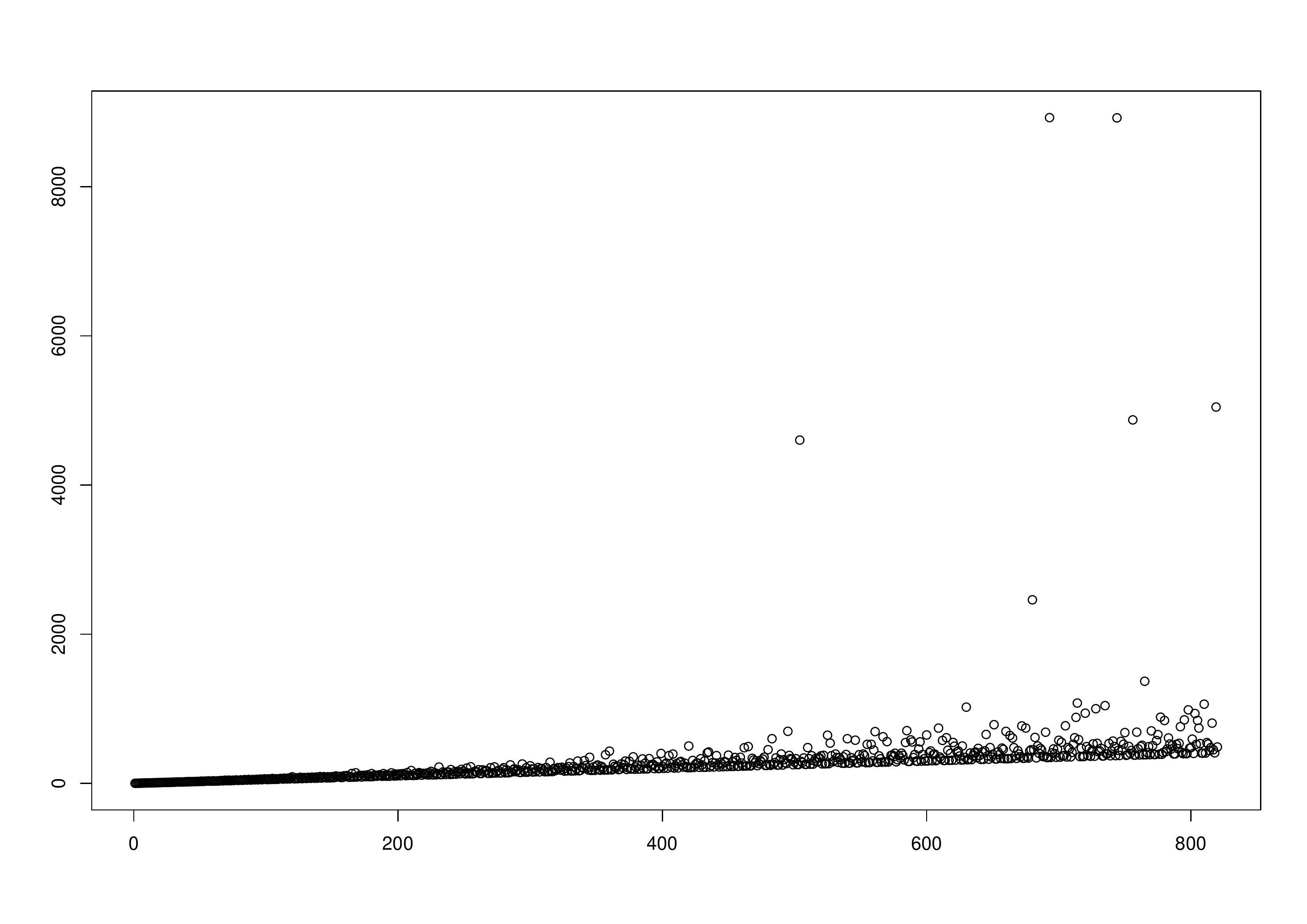}
    \caption{Plot of the number of solutions to \eqref{eq:square} of length $n$.}\label{fig:plot}
  \end{figure}

  Therefore in order to find all solutions of length $n$, we need to find all reversed standard words $v$ of length $d$
  that begin with $0$ and all pattern words $u$ of length $n/d$ that begin with $S$ for all divisors $d$ of $n$. By
  \autoref{thm:main}, we also need to require that the reversed words $v$ satisfy $\abs{v} > \abs{S_6}$ whenever the
  pattern word is nontrivial and primitive. Say $v$ is a reversed standard word such that $v$ has length $d$, $v$
  begins with $0$, and $v$ has parameters $\oa$ and $\ob$. Then $\abs{v} \leq \abs{S_6}$ if and only if
  $v = 0(10^\oa)^\ell$ for some $\ell$ such that $0 \leq \ell \leq \ob + 2$. The conclusion is that
  $\abs{v} \leq \abs{S_6}$ if and only if $v = 0$ or $\oa + 1$ and $\ell$ are divisors of $d - 1$. There is thus a
  total of $\sigma(d - 1) - 1$ reversed standard words we need to exclude ($\sigma(k)$ is the number of distinct
  divisors of $k$ with $1$ and $k$ included).
  The preceding arguments and \autoref{prp:euler} together now
  show that if $d$ is a divisor of $n$ and $d > 2$, then the contribution of reversed standard words of length $d$ to
  the total number of solutions of length $n$ is
  \begin{equation}\label{eq:phi_formula}
    \frac{\varphi(d)}{2} + \mathcal{H}(n, d),
  \end{equation}
  where
  \begin{equation}\label{eq:h_formula}
    \mathcal{H}(n, d) = (2^{\mathcal{O}(n/d) - 1} - 1) \left( \varphi(d)/2 - \sigma(d - 1) + 1 \right)
  \end{equation}
  Here the first summand of \eqref{eq:phi_formula} represents the contribution of all reversed standard words of length
  $d$ that begin with $0$ applied to the pattern word $S^{n/d}$.

  If $d = 1$ or $d = 2$, then there is only a single reversed standard word that has prefix $0$. Overall, we have that
  the number of solutions to \eqref{eq:square} of length $n$, $n > 2$, is given by
  \begin{equation*}
    \sum_{\substack{d \mid n\\2 < d \leq n}} \left( \frac{\varphi(d)}{2} + \mathcal{H}(n, d) \right) + 1 +
    \begin{cases}
        1, \text{if $n$ is even} \\
        0, \text{if $n$ is odd}.
    \end{cases}
  \end{equation*}
  By recalling the well-known identity $\sum_{d \mid n} \varphi(d) = n$, we obtain the following result.

  \begin{theorem}\label{thm:enumeration}
    The number of solutions to \eqref{eq:square} of length $n$ up to isomorphism and application of $L$ is
    \begin{equation*}
      \Floor{\frac{n}{2}} + 1 + \sum_{\substack{d \mid n\\2 < d \leq n}} \mathcal{H}(n, d)
    \end{equation*}
    where $\mathcal{H}$ is given by \eqref{eq:h_formula}.
  \end{theorem}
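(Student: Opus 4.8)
The plan is to take the explicit count of solutions assembled in the discussion preceding the statement and collapse it into the stated closed form using the classical identity $\sum_{d \mid n} \varphi(d) = n$. The analysis before the statement already sets up a bijection between the solutions of length $n$ (counted up to isomorphism and the map $L$, with the representative normalized so that $w$ begins with $0$ and the associated pattern word begins with $S$) and pairs $(v, u)$, where $v$ is a reversed standard word of length $d \mid n$ beginning with $0$ and $u$ is a pattern word of length $n/d$ beginning with $S$, subject to the length restriction $\abs{v} > \abs{S_6}$ recorded in $\mathcal{H}$. Summing the per-divisor contributions \eqref{eq:phi_formula} over the divisors $d > 2$, adding $1$ for the unique length-$1$ reversed standard word (the divisor $d = 1$), and adding a further $1$ when $n$ is even (the divisor $d = 2$), I would first record the total as
\begin{equation*}
  N(n) = \sum_{\substack{d \mid n \\ 2 < d \leq n}} \left( \frac{\varphi(d)}{2} + \mathcal{H}(n,d) \right) + 1 + \begin{cases} 1, & \text{$n$ even,} \\ 0, & \text{$n$ odd.} \end{cases}
\end{equation*}
The task then reduces to showing that the $\mathcal{H}$-free part of this expression equals $\Floor{n/2} + 1$.

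Next I would isolate the totient sum $\sum_{2 < d \mid n} \varphi(d)/2$. Using $\sum_{d \mid n} \varphi(d) = n$ together with $\varphi(1) = \varphi(2) = 1$, the divisors $d \in \{1,2\}$ contribute $1$ always, respectively a further $1$ exactly when $2 \mid n$, so that
\begin{equation*}
  \sum_{\substack{d \mid n \\ 2 < d \leq n}} \varphi(d) = n - 1 - \epsilon, \qquad \epsilon = \begin{cases} 1, & \text{$n$ even,} \\ 0, & \text{$n$ odd.} \end{cases}
\end{equation*}
Hence $\sum_{2 < d \mid n} \varphi(d)/2 = (n - 1 - \epsilon)/2$.

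I would then combine this with the two constant terms. The $\mathcal{H}$-free part becomes
\begin{equation*}
  \frac{n - 1 - \epsilon}{2} + 1 + \epsilon = \frac{n + 1 + \epsilon}{2},
\end{equation*}
and a short parity split finishes the argument: if $n$ is even then $\epsilon = 1$ and this equals $(n+2)/2 = \Floor{n/2} + 1$, while if $n$ is odd then $\epsilon = 0$ and it equals $(n+1)/2 = \Floor{n/2} + 1$. In both cases $N(n) = \Floor{n/2} + 1 + \sum_{2 < d \mid n} \mathcal{H}(n,d)$, which is exactly the claim.

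The substantive work lies entirely in the combinatorial analysis leading up to the statement — establishing the $(v,u)$-bijection, counting the reversed standard words of length $d$ up to isomorphism and $L$ as $\varphi(d)/2$, counting the pattern words beginning with $S$ as $2^{\mathcal{O}(\ell)-1}$, and identifying the $\sigma(d-1) - 1$ short words to exclude. Given all of these, the proof of the theorem itself is pure bookkeeping, and the only place that demands care is the correct handling of the small divisors $d \in \{1, 2\}$ and the parity-dependent constant they produce, which is precisely what the $\epsilon$ case distinction is designed to track.
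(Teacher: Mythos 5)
Your proposal is correct and follows exactly the paper's own route: the paper likewise takes the per-divisor count $\varphi(d)/2 + \mathcal{H}(n,d)$ over divisors $d > 2$ plus the constants for $d \in \{1,2\}$, and collapses it via $\sum_{d \mid n} \varphi(d) = n$ into $\Floor{n/2} + 1 + \sum \mathcal{H}(n,d)$, the only difference being that you write out the parity bookkeeping with $\epsilon$ explicitly where the paper leaves it to the reader.
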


  Clearly the value of the formula of \autoref{thm:enumeration} is always at least $\floor{n/2} + 1$, and this lower
  bound is attained infinitely often (for example when $n$ is a prime). It seems that typically the value is close to
  $\floor{n/2}+1$ (see \autoref{fig:plot}), but for a suitable $n$ the difference can be large: the value is $1050644$
  when $n = 1736$, for example. The explanation is that $217$ is a factor of $1736$ and $\mathcal{O}(217) = 21$. Some
  additional values are given in \autoref{tbl:solutions}. The values are recorded as the sequence
  \href{https://oeis.org/A330878}{A330878} in the OEIS \cite{oeis}.

  \begin{table}
  \centering  
  \begin{tabular}{| c | c || c | c || c | c || c | c || c | c || c | c |}
    \hline
    $n$ & $S(n)$ & $n$  & $S(n)$ & $n$  & $S(n)$ & $n$  & $S(n)$ & $n$  & $S(n)$ & $n$  & $S(n)$ \\ \hline
    $1$ & $1$    & $7$  & $4$    & $13$ & $7$    & $19$ & $10$   & $25$ & $13$   & $31$ & $16$   \\
    $2$ & $2$    & $8$  & $5$    & $14$ & $8$    & $20$ & $11$   & $26$ & $14$   & $32$ & $17$   \\
    $3$ & $2$    & $9$  & $5$    & $15$ & $8$    & $21$ & $11$   & $27$ & $14$   & $33$ & $19$   \\
    $4$ & $3$    & $10$ & $6$    & $16$ & $9$    & $22$ & $12$   & $28$ & $15$   & $34$ & $18$   \\
    $5$ & $3$    & $11$ & $6$    & $17$ & $9$    & $23$ & $12$   & $29$ & $15$   & $35$ & $18$   \\
    $6$ & $4$    & $12$ & $7$    & $18$ & $10$   & $24$ & $14$   & $30$ & $16$   & $36$ & $20$   \\
    \hline
  \end{tabular}
  \caption{Number of solutions of length $n$, denoted by $S(n)$, for $n = 1, \ldots, 36$.}\label{tbl:solutions}
  \end{table}
  
  \section{Application to the Square Root Map}\label{sec:applications}
  The original reason why the specific solutions to the equation \eqref{eq:square} were studied was to construct fixed
  points of the square root map and large sets of words whose language is preserved by this mapping. In this section,
  we continue this study and apply \autoref{thm:main} to obtain a characterization of minimal subshifts whose languages
  contain arbitrarily long solutions to \eqref{eq:square} (\autoref{thm:main_square_root}).
  
  Recall that if $\infw{w}$ is an optimal squareful word with parameters $\oa$ and $\ob$ written as a product of the
  minimal squares, $\infw{w} = X_1^2 X_2^2 \dotsm$, then $\sqrt{\infw{w}} = X_1 X_2 \dotsm$. An infinite word
  $\infw{w}$ is a \emph{fixed point} if $\sqrt{\infw{w}} = \infw{w}$.

  In the paper \cite{2017:a_square_root_map_on_sturmian_words}, the first author and Whiteland showed that Sturmian
  words have the curious property that the square root map preserves their language. Alternatively phrased, we have the
  following result.

  \begin{theorem}\label{thm:sturmian_invariant}\cite[Thm.~9]{2017:a_square_root_map_on_sturmian_words}
    Let $\infw{w}$ be a Sturmian word and $\Omega$ be the Sturmian subshift generated by it. Then the subshift is
    invariant, that is, $\sqrt{\Omega} \subseteq \Omega$.
  \end{theorem}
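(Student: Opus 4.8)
Since $\Omega$ is minimal, it coincides with the set of all infinite words whose factors all lie in its language $\mathcal{L}(\Omega)$, and every such word has language exactly $\mathcal{L}(\Omega)$. Thus, fixing $\infw{v}\in\Omega$, it suffices to prove that every factor of $\sqrt{\infw{v}}$ belongs to $\mathcal{L}(\Omega)$. Here $\sqrt{\infw{v}}$ is well defined because Sturmian words are optimal squareful (Saari), so $\infw{v}$ has a unique factorization into minimal squares $\infw{v}=X_1^2 X_2^2\dotsm$ and $\sqrt{\infw{v}}=X_1 X_2\dotsm$. I first record that the slope is automatically preserved: halving each square halves both the number of $1$s and the total length of every prefix $X_1^2\dotsm X_n^2$, so $\sqrt{\infw{v}}$ has the same letter frequencies as $\infw{v}$. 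The entire difficulty is therefore to upgrade ``same slope'' to ``same language'', i.e.\ to show that $\sqrt{\infw{v}}$ is genuinely Sturmian.

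The engine I would use is that reversed standard words are type~I solutions (\autoref{prp:standard_solutions}) and, by reversal-closure of Sturmian languages, the reversed standard words $\mirror{s_k}$ lie in $\mathcal{L}(\Omega)$ and have arbitrary length. The key local observation is that a solution halves cleanly: if $S=\mirror{s_k}$ has minimal-square factorization $S=Y_1\dotsm Y_m$, then $S^2=Y_1^2\dotsm Y_m^2$, so wherever $S^2$ occurs in $\infw{v}$ aligned with the global factorization, uniqueness of the minimal-square factorization (no minimal square is a prefix of another) forces the corresponding blocks to satisfy $X_i\dotsm X_{i+m-1}=Y_1\dotsm Y_m=S$. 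Hence that segment of $\sqrt{\infw{v}}$ equals $S\in\mathcal{L}(\Omega)$; this is the same mechanism as in \autoref{lem:sl_square_root} and the proof of \autoref{prp:pattern_solution}, namely $\sqrt{S^2}=S$.

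To finish I would argue coverage: given an arbitrary factor $f$ of $\sqrt{\infw{v}}$, produce an aligned occurrence of $\mirror{s_k}^2$ with $k$ so large that the resulting $S$-segment of $\sqrt{\infw{v}}$ contains this occurrence of $f$; then $f\in\mathcal{L}(S)\subseteq\mathcal{L}(\Omega)$, and letting $f$ range over all factors gives $\mathcal{L}(\sqrt{\infw{v}})\subseteq\mathcal{L}(\Omega)$ and thus $\sqrt{\infw{v}}\in\Omega$. Two inputs are needed here: that $\mirror{s_k}^2\in\mathcal{L}(\Omega)$ for arbitrarily large $k$, which I would obtain from the fact that $\Omega$ contains a fixed point of $\sqrt{\cdot}$ whose arbitrarily long prefixes are squares of reversed standard words, and that such squares recur with controlled gaps, which follows from uniform recurrence of the minimal subshift $\Omega$.

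The main obstacle is the synchronization buried in the word ``aligned'': an occurrence of the long square $\mirror{s_k}^2$ in $\infw{v}$ need not, a priori, sit on the boundaries of the global minimal-square factorization, and only aligned occurrences produce a clean $S$-segment of the square root. I would resolve this using that the minimal-square code has bounded synchronization delay---again because no minimal square is a prefix of another, the device already exploited in \autoref{lem:square_prefix}---so that a sufficiently long occurrence of $\mirror{s_k}^2$ agrees with the global factorization except within a bounded margin at its ends; for $k$ large the aligned core still covers $f$. An alternative route, cleaner on the slope question but requiring separate machinery, represents $\infw{v}$ as a coding of the rotation by the slope $\alpha$, writes $\sqrt{\cdot}$ as an explicit transformation of the intercept, and checks that the angle $\alpha$ is unchanged; then $\sqrt{\infw{v}}$ is Sturmian of slope $\alpha$ and so lies in $\Omega$ directly.
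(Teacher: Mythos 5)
Your reduction (minimality lets you test membership in $\Omega$ factor by factor) and your local step are both sound: since no minimal square is a prefix of another, the minimal-square factorization of a suffix of $\infw{v}$ beginning at a boundary of the global factorization is forced, so an occurrence of $\mirror{s_k}^2$ that starts at such a boundary does produce an $\mirror{s_k}$-segment of $\sqrt{\infw{v}}$. The fatal gap is the synchronization step. The prefix-code property exploited in \autoref{lem:square_prefix} gives forward determinism of factorizations that share a starting point; it says nothing about how the global factorization of $\infw{v}$ interacts with an occurrence of $\mirror{s_k}^2$ sitting at an arbitrary offset, and the ``bounded synchronization delay'' you invoke is in fact false. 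There is a parity obstruction: all six minimal squares have even length, so every boundary of the global factorization of $\infw{v}$ is at an even position, whereas an occurrence of $\mirror{s_k}^2$ beginning at an odd position has all its internal boundaries at odd positions; such an occurrence shares no boundary with the global factorization and can never synchronize with it, no matter how large $k$ is. These occurrences exist: in the Fibonacci word $f=010010100100101001010\dotsm$ (so $\oa=1$, $\ob=0$) the global factorization is $S_3^2S_5^2S_4^2S_2^2S_1^2S_6^2\dotsm$ with boundaries $0,6,12,16,20,22,\dots$, while $\mirror{s_4}^2=(01010010)^2=S_2^2S_1^2S_6^2$ occurs at position $3$ with internal boundaries $3,7,9,19$, agreeing nowhere; and for every $k$ the square $\mirror{s_k}^2$ occurs at odd positions, because the two return words of a factor of a Sturmian word have coprime lengths, hence not both even. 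Consequently the coverage step collapses as well: uniform recurrence yields occurrences of $\mirror{s_k}^2$ with bounded gaps, but only occurrences starting at global boundaries help you, and nothing in the proposal produces those at controlled positions --- controlling where the boundaries fall is essentially the content of the theorem itself.

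The route that works is the one you relegate to an afterthought, and it is exactly how the cited source proves \autoref{thm:sturmian_invariant} (the present paper imports the theorem rather than reproving it). Writing $\infw{v}$ as a rotation coding $s_{\rho,\alpha}$ of slope $\alpha$ and intercept $\rho$, the six minimal squares correspond to six subintervals of the circle, and one checks that if $s_{\rho,\alpha}$ begins with the minimal square $X^2$ then $s_{\psi(\rho),\alpha}$ begins with $X$, where $\psi(\rho)=\tfrac{1}{2}(\rho+1-\alpha)$; since $\psi(\rho+\abs{X^2}\alpha)=\psi(\rho)+\abs{X}\alpha$, induction along the factorization gives $\sqrt{s_{\rho,\alpha}}=s_{\psi(\rho),\alpha}$. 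The slope is unchanged, so $\sqrt{\infw{v}}$ is Sturmian of slope $\alpha$ and lies in $\Omega$. If you want a complete proof, develop that argument; the combinatorial covering strategy cannot be patched without it.
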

  
  \begin{remark}
    Optimal squareful words are by definition aperiodic. Many periodic words nevertheless have a well-defined square
    root. For instance, any periodic word in $(S_5 + S_6)^\omega$ is expressible as a product of minimal squares.
    Another example is the class of so-called periodic Sturmian words. An infinite word $\infw{w}$ is a \emph{periodic
    Sturmian word} if it equals a shift of $S^\omega$ for a reversed standard word $S$.
    \autoref{thm:sturmian_invariant} is in fact true for periodic Sturmian words as well. If $\abs{S} > \abs{S_6}$,
    then the claim follows by \cite[Remark~2.4]{2020:more_on_the_dynamics_of_the_symbolic_square_root}. If
    $\abs{S} \leq \abs{S_6}$ then, strictly speaking, the parameters $\oa$ and $\ob$ might not exist as $S^\omega$ can
    contain arbitrarily large powers of $0$ or $10^\oa$, but \autoref{thm:sturmian_invariant} still holds. To see this,
    we can adapt the proof of \cite[Lemma~11]{2017:a_square_root_map_on_sturmian_words} appropriately or use a
    case-by-case analysis. For example, if $S = 10^\oa$ and $\infw{w} \in \sigma(S^\omega)$, then
    $\infw{w} = 0^\ell (10^\oa)^\omega$ for $\ell$ such that $0 \leq \ell \leq \oa$. If $\ell$ is even, then
    $\sqrt{\infw{w}} = 0^{\ell/2} \sqrt{(10^\oa)^\omega} = 0^{\ell/2}(10^\oa)^\omega$. If $\ell$ is odd, then
    $\sqrt{\infw{w}} = 0^{(\ell-1)/2} \sqrt{(010^{\oa-1})^\omega} = 0^{(\ell-1)/2}0(10^\oa)^\omega$. Thus
    $\sqrt{\infw{w}}$ has the same language as $\infw{w}$.

    Extending the results of this section to allow periodic words is mostly a matter of minor adjustments to details
    and definitions, so we focus only on the aperiodic case.
  \end{remark}

  \autoref{thm:sturmian_invariant} is somewhat unexpected as intuitively it could be expected that the square root map
  changes the language of a typical optimal squareful word. This raised the question if there are other large sets or
  subshifts than Sturmian subshifts that are invariant under the square root map. A natural way to find candidates of
  such sets is to pick a fixed point of the square root map and study the dynamics of the square root map in the
  subshift generated by this word. This is what was done in
  \cite{2017:a_square_root_map_on_sturmian_words,2020:more_on_the_dynamics_of_the_symbolic_square_root} for fixed
  points that are so-called SL-words (see below).

  How to find fixed points then? One way is to use solutions to \eqref{eq:square}. Suppose that $w$ is a solution to
  \eqref{eq:square}, that is, suppose that $\sqrt{w^2} = w$. Therefore if $(Z_n)$ is a sequence of solutions to
  \eqref{eq:square} such that its limit $\infw{w}$ has infinitely many squares $Z_n^2$ as prefixes, then
  $\sqrt{\infw{w}} = \infw{w}$. If solutions of type I are used, then $\infw{w}$ and the subshift generated by
  $\infw{w}$ are Sturmian. In
  \cite{2017:a_square_root_map_on_sturmian_words,2020:more_on_the_dynamics_of_the_symbolic_square_root}, specific type
  II solutions were considered resulting in the following theorem.

  \begin{theorem}\label{thm:sl_main_result}\cite[Thm.~2.10]{2020:more_on_the_dynamics_of_the_symbolic_square_root}
    Let $S$ be a reversed standard word with $\abs{S} > \abs{S_6}$ and $\oc$ a positive integer. Let $(Z_n)$ be the
    sequence defined by setting $Z_0 = S$ and $Z_{n+1} = L(Z_n) Z_n^{2\oc}$ for $n \geq 0$. Let $\infw{w}$ be a limit
    of $(Z_n)$ and $\Omega$ the subshift generated by $\infw{w}$. Then $\infw{w}$ is a fixed point, $\Omega$ is
    aperiodic, and for all $\infw{u} \in \Omega$ either $\sqrt{\infw{u}} \in \Omega$ or $\sqrt{\infw{u}}$ is purely
    periodic with minimum period conjugate to $S$.
  \end{theorem}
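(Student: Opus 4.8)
The plan is to analyze everything through the $\{S,L\}$-block structure, writing $L = L(S)$ and viewing $\infw{w}$ as an infinite word over the two blocks $S$ and $L(S)$. \textbf{Setting up the recursion.} First I would record the identities behind the recursion. Since $L$ only swaps the first two letters, applying it to $L(Z_n)Z_n^{2\oc}$ restores the first two letters of $Z_n$, giving $L(Z_{n+1}) = Z_n^{2\oc+1}$ and hence $Z_{n+2} = L(Z_{n+1})Z_{n+1}^{2\oc} = Z_n^{2\oc+1} Z_{n+1}^{2\oc}$. In particular $Z_{2k}$ is a prefix of $Z_{2k+2}$, so the even-indexed subsequence converges to a right-infinite word $\infw{w}$ beginning with $Z_{2k}^{2\oc+1}$ for every $k$. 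Using that $\mathcal{P}_S$ is a morphism and that $L(\pat[S]{p})$ equals $\pat[S]{p'}$ where $p'$ is $p$ with its first letter exchanged between $S$ and $L$, the recursion becomes $Z_n = \pat[S]{p_n}$ with $p_0 = S$ and $p_{n+1} = \overline{p_n}\,p_n^{2\oc}$, where $\overline{p_n}$ flips the first letter of $p_n$; thus $p_{n+1}$ is $p_n^{2\oc+1}$ with only position $0$ altered, and $\abs{p_n} = (2\oc+1)^n$.

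\textbf{Fixed point.} I would prove by induction that $p_n$ is a pattern word: since $\abs{p_n} = (2\oc+1)^n$ is odd, the only position fixed by $x \mapsto 2x$ is $0$ and no other position maps to it, so altering position $0$ preserves the orbit condition while the rest is covered by the inductive hypothesis. By the computation in \autoref{prp:pattern_solution}, $p_n$ being a pattern word together with $\abs{S} > \abs{S_6}$ gives $\sqrt{Z_n^2} = Z_n$, so every $Z_n$ is a solution to \eqref{eq:square}. Since $Z_{2k}^2$ is a prefix of $\infw{w}$ and the minimal-square factorization is unique, $\sqrt{\infw{w}}$ agrees with $\sqrt{Z_{2k}^2} = Z_{2k}$ on a prefix of length $\abs{Z_{2k}} \to \infty$, whence $\sqrt{\infw{w}} = \infw{w}$.

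\textbf{Aperiodicity.} For primitivity I would observe that $\abs{p_n}_L \equiv \pm 1 \pmod{2\oc+1}$ for $n \geq 1$, since the recursion adds or removes exactly one $L$ modulo $2\oc+1$; hence $\gcd(\abs{p_n}_L, (2\oc+1)^n) = 1$, so $p_n$ and therefore $Z_n = \pat[S]{p_n}$ (by \autoref{lem:pat_primitive}) are primitive. If $\infw{w}$ were ultimately periodic with period $q$, then for large $k$ the prefix $Z_{2k}^{2\oc+1}$ would carry both period $q$ and period $\abs{Z_{2k}} > q$; Fine--Wilf then forces a period $\gcd(q,\abs{Z_{2k}}) < \abs{Z_{2k}}$ dividing $\abs{Z_{2k}}$, contradicting the primitivity of $Z_{2k}$. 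Thus $\infw{w}$ is aperiodic, and since every word of $\Omega$ shares its language and an ultimately periodic word has bounded factor complexity while $\mathcal{L}(\infw{w})$ has complexity at least $n+1$, no word of $\Omega$ can be ultimately periodic; hence $\Omega$ is aperiodic.

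\textbf{Square roots and the main obstacle.} Every $\infw{u} \in \Omega$ has language $\mathcal{L}(\infw{w})$, so it is optimal squareful and $\sqrt{\infw{u}}$ is defined. Because $\abs{S} > \abs{S_6}$, the decomposition of any word of $\mathcal{L}(\infw{w})$ into the blocks $S$ and $L$ is synchronized, giving a factor map onto the block subshift $\Omega_{SL}$ generated by $\infw{p} = \lim_k p_{2k}$. If the decomposition of $\infw{u}$ starts at a block boundary, I would pair consecutive blocks and invoke \autoref{lem:sl_square_root}, which gives $\sqrt{q_{2i}q_{2i+1}} = q_{2i}$, so that $\sqrt{\infw{u}}$ equals the decimation $q_0 q_2 q_4 \dotsm$; the self-similarity of $\infw{p}$ (a pattern-word limit fixed by decimation, as above) then keeps this in $\Omega_{SL}$, i.e.\ $\sqrt{\infw{u}} \in \Omega$. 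The hard part is the offset case, where the decomposition begins in the middle of a block: here the minimal-square factorization no longer respects block boundaries, and I would track how the offset propagates through the factorization, showing that it either resynchronizes (reducing to the aligned case) or locks the factorization onto a single repeated square, forcing $\sqrt{\infw{u}}$ to be purely periodic with repeated unit a cyclic rotation of $S$. Proving this clean dichotomy, and that the periodic unit is exactly a conjugate of $S$, is where I expect the main difficulty to lie; I anticipate needing the synchronization property of the primitive word $S$ together with a careful case analysis of how minimal squares sit across block boundaries inside $\Lang{\oa,\ob}$.
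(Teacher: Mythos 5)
The paper offers no proof of this statement to compare against: \autoref{thm:sl_main_result} is imported verbatim from the 2020 reference cited next to it and is used here only as a black box (together with \autoref{prp:sl_periodic}) in the proof of \autoref{thm:main_square_root}. Judged on its own, your treatment of the first two claims is essentially sound: the identity $L(Z_{n+1}) = Z_n^{2\oc+1}$, the encoding $Z_n = \pat[S]{p_n}$ with $p_{n+1} = \overline{p_n}\,p_n^{2\oc}$, the induction that each $p_n$ is a pattern word (though you should spell out that doubling modulo $(2\oc+1)^{n+1}$ projects onto doubling modulo $(2\oc+1)^n$ and that positions divisible by $(2\oc+1)^n$ form their own union of orbits, so flipping position $0$ causes no clash), the use of \autoref{prp:pattern_solution}, and the prefix argument giving $\sqrt{\infw{w}} = \infw{w}$ all work, and $\gcd(\abs{p_n}_L, \abs{p_n}) = 1$ does yield primitivity of $Z_n$ via \autoref{lem:pat_primitive}. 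Two repairable slips: an ultimately periodic word carries a period only on a tail, so Fine--Wilf must be applied to a suffix of $Z_{2k}^3$ beyond the preperiod rather than to the prefix itself; and aperiodicity of \emph{every} word of $\Omega$ does not follow from the complexity of $\infw{w}$ alone --- the orbit closure of an aperiodic word can contain periodic words --- so you must additionally show that no nonempty $v$ has all its powers in the language of $\infw{w}$ (e.g., via bounded powers in the block word), or prove minimality.

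The genuine gap is the third claim, which is the actual content of the theorem, and you concede it yourself (``is where I expect the main difficulty to lie''). Even your aligned case is circular: the relation $\infw{p}[i] = \infw{p}[2i]$ controls the decimation of $\infw{p}$ only, whereas an arbitrary $\infw{u} \in \Omega$ has a block word $\infw{q}$ whose factors merely occur \emph{somewhere} in $\infw{p}$; halving positions transfers a decimated factor into the language only when the occurrence is at an even position, and you have not shown such occurrences exist. The misaligned case --- showing that a minimal-square factorization starting inside a block either resynchronizes or collapses $\sqrt{\infw{u}}$ onto $(S')^\omega$ with $S'$ conjugate to $S$ --- is precisely the hard content (it is what \autoref{prp:sl_periodic} packages for SL-words), and it is left entirely as a plan; the synchronization property of the primitive word $S$ by itself does not deliver it, since one must analyse how the six minimal squares of $\Lang{\oa,\ob}$ straddle the $S$/$L$ block boundaries. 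So the proposal proves that $\infw{w}$ is a fixed point and, after repairs, that $\infw{w}$ is aperiodic, but the dichotomy for $\sqrt{\infw{u}}$, $\infw{u} \in \Omega$, remains an outline rather than a proof.
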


  \autoref{thm:sl_main_result} means that at least certain subshifts $\Omega$ obtained from type II solutions fail to
  be invariant. This applies even more generally; see \autoref{prp:sl_periodic}.

  \begin{definition}
    Let $S$ be a reversed standard word with $\abs{S} > \abs{S_6}$, and let $L = L(S)$. An infinite word $\infw{w}$ is
    an \emph{SL-word} if $\infw{w} \in \{S, L\}^\omega \setminus \sigma(S^\omega)$. A subshift $\Omega$ is an
    \emph{SL-subshift} if there exist fixed $S$ and $L$ such that for each $\infw{w}$ in $\Omega$ there exists an
    SL-word $\infw{u}$ in $\Omega$ such that $\infw{w}$ is a shift of $\infw{u}$.
  \end{definition}

  Notice that if $S$ is a reversed standard word, then $S$ and $L(S)$ are conjugate
  \cite[Proposition~6]{2017:a_square_root_map_on_sturmian_words}.

  \begin{proposition}\label{prp:sl_periodic}\cite[Thm.~2.11]{2020:more_on_the_dynamics_of_the_symbolic_square_root}, \cite[Lemma~48]{2017:a_square_root_map_on_sturmian_words}
    Let $\infw{w}$ be an SL-word. Then there exists a shift $\infw{u}$ of $\infw{w}$ such that $\sqrt{\infw{u}}$ is
    purely periodic with minimum period conjugate to $S$.
  \end{proposition}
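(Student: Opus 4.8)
The plan is to work entirely with the block decomposition of $\infw w$ over the two-symbol alphabet $\{S, L\}$ and to exploit the conjugacy of $S$ and $L$. Write $\infw w = C_0 C_1 C_2 \dotsm$ with each $C_i \in \{S, L\}$. By \autoref{lem:sl_square_root} each two-block product $C_{2i} C_{2i+1}$ lies in $\Pi(\oa, \ob)$ and satisfies $\sqrt{C_{2i} C_{2i+1}} = C_{2i}$; since a word of $\Pi(\oa, \ob)$ has a unique factorization into minimal squares (no minimal square is a prefix of another), these products are exactly the factors of the minimal-square factorization of $\infw w$, so that
\[
  \sqrt{\infw w} = C_0 C_2 C_4 \dotsm,
\]
the subsequence of even-indexed blocks. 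The same formula describes the square root of every block-aligned shift $T^{k\abs{S}} \infw w$, namely $C_k C_{k+2} C_{k+4} \dotsm$. If for some $k$ this subsequence is constant, it equals $S^\omega$ or $L^\omega$, both purely periodic with minimum period conjugate to $S$ (as $L = L(S)$ is conjugate to $S$ by \cite[Proposition~6]{2017:a_square_root_map_on_sturmian_words} and $S$ is primitive), and this settles that case.

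The real content is the case in which no block-aligned subsequence is constant, so that the block-aligned square roots stay aperiodic. Here I would use a shift $\infw u = T^m \infw w$ that is \emph{not} block-aligned. Writing the conjugacy as $S = xy$ and $L = yx$, such a shift begins part-way through a block, and its minimal-square factorization resynchronizes across the old block boundaries. The goal is to pick $m$ so that the roots $X_1, X_2, \dotsm$ read off from $\infw u = X_1^2 X_2^2 \dotsm$ multiply to $(S')^\omega$ for some conjugate $S'$ of $S$. This can happen even though $\infw u$ is aperiodic: because the minimal square roots $S_1, \dotsm, S_6$ are prefixes of one another, the periodic word $(S')^\omega$ admits many distinct factorizations into minimal square roots, and the aperiodicity of $\infw u$ is carried entirely by the choice of factorization while its image under $\sqrt{\cdot}$ collapses onto $(S')^\omega$.

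I expect the main obstacle to be exactly this bookkeeping: showing that for a suitable offset $m$ the greedy minimal-square factorization of $T^m \infw w$ yields a root sequence of period $\abs{S}$. My plan to control it is to track the prefix-sum curve of \autoref{lem:narrow_tube}, which confines the factorization to a narrow tube, together with the explicit placement of $S_1, \dotsm, S_6$ inside $S = xy$ and $L = yx$; the key computation is essentially the one already carried out for the constructed fixed points in \autoref{thm:sl_main_result}, and the general statement is established in \cite[Lemma~48]{2017:a_square_root_map_on_sturmian_words} and \cite[Thm.~2.11]{2020:more_on_the_dynamics_of_the_symbolic_square_root}. It then remains only to check that the period of the resulting periodic word is exactly $\abs{S}$, which is immediate from the primitivity of $S$, so that the minimum period is a genuine conjugate of $S$ rather than a proper power.
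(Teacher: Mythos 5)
Your block-level reduction is correct as far as it goes: writing $\infw{w} = C_0 C_1 C_2 \dotsm$ with $C_i \in \{S, L\}$, prefix-freeness of the six minimal squares together with \autoref{lem:sl_square_root} does give $\sqrt{T^{k\abs{S}}\infw{w}} = C_k C_{k+2} C_{k+4} \dotsm$, which settles the case where some even-step block subsequence is constant. Two problems remain, though. First, this paper never proves \autoref{prp:sl_periodic} at all: it is imported verbatim, the proof residing in \cite[Lemma~48]{2017:a_square_root_map_on_sturmian_words} and \cite[Thm.~2.11]{2020:more_on_the_dynamics_of_the_symbolic_square_root}, so the only internal ``proof'' is the citation itself. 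Second, and decisively, your case split pushes the entire content of the proposition into the case you do not prove. Take $\oa = 1$, $\ob = 0$, $S = 01010010$, $L = L(S) = 10010010$, and $\infw{w} = (LSS)^\omega$. Every even-step block subsequence of $\infw{w}$ cycles through $L, S, S$ (step $2$ is invertible modulo $3$), so your first case never occurs; worse, the three block-aligned shifts have square roots $(LSS)^\omega$, $(SLS)^\omega$, $(SSL)^\omega$, whose minimum periods have length $3\abs{S}$ by \autoref{lem:pat_primitive}, so no block-aligned shift can ever satisfy the conclusion. The proposition holds for this word only because of the letter-level collapse you postpone: a direct computation with the squares \eqref{eq:min_squares} shows $\sqrt{T\infw{w}} = (01001001)^\omega$, and $01001001 = 010010 \cdot 01$ is conjugate to $S = 01 \cdot 010010$. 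So the ``bookkeeping'' case is not a residual verification; it is the theorem.

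For that main case you offer a goal, a tool, and a citation, and none of the three is a proof. The goal (``pick $m$ so that the roots multiply to $(S')^\omega$'') merely restates the claim. The tool is misapplied: \autoref{lem:narrow_tube} concerns finite solutions of \eqref{eq:square}, with weights normalized by the solution itself, and gives no control over the minimal-square factorization of a shifted infinite SL-word. The citation of \cite[Lemma~48]{2017:a_square_root_map_on_sturmian_words} and \cite[Thm.~2.11]{2020:more_on_the_dynamics_of_the_symbolic_square_root} is circular in this exercise: those are precisely the results to be proved (and exactly what the paper cites), and if invoking them is allowed, the proposition is immediate and your block analysis is superfluous. What is genuinely missing is the resynchronization computation: write $S = ab\,c$ and $L = ba\,c$, where $ab \in \{01, 10\}$ and $c$ is the common central word, observe that a suitable letter-level shift rewrites $\infw{w}$ as a short prefix followed by a product of the standard words $c\,01$ and $c\,10$, and then verify, square by square against the list \eqref{eq:min_squares}, that the unique minimal-square factorization of any such product has square root equal to a single conjugate of $S$ repeated forever, independently of the $S/L$ pattern. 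Your closing intuition --- that the aperiodicity survives only in the choice of preimage factorization while the image collapses --- is the right picture, but carrying out this computation is the entire substance of the cited lemmas, and nothing in your proposal does it.
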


  The above proposition states that if $\Omega$ is an aperiodic SL-subshift, then it cannot be invariant. The solutions
  of type II give rise to SL-subshifts, so type II solutions do not help to find non-Sturmian invariant subshifts. This
  led the authors of
  \cite{2017:a_square_root_map_on_sturmian_words,2020:more_on_the_dynamics_of_the_symbolic_square_root} to formulate
  \autoref{conjecture} stating that a minimal and invariant subshift is necessarily Sturmian. Notice that it was
  observed in \cite[Proposition~60]{2017:a_square_root_map_on_sturmian_words} that without the assumption that $\Omega$
  is minimal the claim is false.

  Using \autoref{thm:main}, we prove \autoref{conjecture} for a natural class of subshifts.

  \begin{theorem}\label{thm:main_square_root}
    Let $\Omega$ be a minimal and aperiodic subshift whose language contains infinitely many solutions to
    \eqref{eq:square}. Then either
    \begin{enumerate}[(i)]
      \item $\Omega$ is Sturmian and $\sqrt{\Omega} \subseteq \Omega$ or
      \item $\Omega$ is an SL-subshift and $\sqrt{\Omega} \not\subseteq \Omega$.
    \end{enumerate}
  \end{theorem}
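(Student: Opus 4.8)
The plan is to split on the size of the reversed standard words attached to the solutions, and to treat each branch by a soft limiting argument inside the compact subshift $\Omega$. First I would fix a clean supply of solutions. Since the alphabet is finite, there are finitely many words of each length, so the hypothesis forces solutions of unbounded length into $\mathcal{L}(\Omega)$; moreover, as $\Omega$ is aperiodic it cannot contain arbitrarily high powers of a fixed word (a compactness limit would give a periodic point of $\Omega$), so there are infinitely many \emph{distinct primitive} solutions in $\mathcal{L}(\Omega)$, necessarily of unbounded length. Call them $w_k$, with $\abs{w_k}\to\infty$. Each $w_k$ of length $>1$ contains both letters, so by \autoref{prp:sl_product} there is a unique reversed standard word $S_k$ with $w_k\in\{S_k,L(S_k)\}^+$, and by \autoref{thm:main} either $w_k=S_k$ (type I) or $w_k=\pat[S_k]{u_k}$ with $\abs{S_k}>\abs{S_6}$ and $u_k$ a nontrivial primitive pattern word (type II). I would then split according to whether $\sup_k\abs{S_k}=\infty$ or not.

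Suppose first that $\abs{S_k}\to\infty$ along a subsequence. Since $S_k$ is a factor of $w_k$, the language $\mathcal{L}(\Omega)$ contains reversed standard words of unbounded length. Here is the key step: because $\Omega$ is minimal it is uniformly recurrent, so for each $m$ there is $R_m$ such that every length-$m$ factor of $\Omega$ occurs in every factor of length $R_m$; choosing a reversed standard factor $R$ with $\abs{R}\ge R_m$, the length-$m$ factors of $\Omega$ are exactly those of $R$. But $R$ is a factor of the reversal of a standard Sturmian word, whose complexity is $m+1$, so $R$ has at most $m+1$ factors of each length $m$. Hence the factor complexity of $\Omega$ satisfies $p(m)\le m+1$; as $\Omega$ is aperiodic, $p(m)\ge m+1$, and so $\Omega$ is Sturmian. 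This is alternative (i), and \autoref{thm:sturmian_invariant} then yields $\sqrt{\Omega}\subseteq\Omega$.

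Suppose instead that $\abs{S_k}$ is bounded. The finitely many possible values of $S_k$ force some reversed standard word $S$ to occur for infinitely many $k$; since the long solutions are type II, we have $\abs{S}>\abs{S_6}$. Writing $L=L(S)$, the language $\mathcal{L}(\Omega)$ contains words $\pat[S]{u_k}\in\{S,L\}^+$ of unbounded length. Because the block size $\abs{S}=\abs{L}$ is fixed, a compactness extraction aligned at block boundaries produces a point $\infw{v}\in\Omega\cap\{S,L\}^\omega$; aperiodicity gives $\infw{v}\notin\sigma(S^\omega)$, so $\infw{v}$ is an SL-word and $\Omega=\sigma(\infw{v})$ by minimality. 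Finally, every $\infw{w}\in\Omega$ is a limit of shifts $T^i\infw{v}$; writing $i=q\abs{S}+r$ with $0\le r<\abs{S}$ and using that $T^{q\abs{S}}\infw{v}$ stays in the closed set $\{S,L\}^\omega$, I pass to a subsequence with $r$ constant to obtain $\infw{w}=T^r\infw{u}$ with $\infw{u}\in\{S,L\}^\omega$, and aperiodicity again gives $\infw{u}\notin\sigma(S^\omega)$. Thus every word of $\Omega$ is a shift of an SL-word, so $\Omega$ is an SL-subshift, which is alternative (ii), and \autoref{prp:sl_periodic} with aperiodicity yields $\sqrt{\Omega}\not\subseteq\Omega$.

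I expect the main obstacle to be the Sturmian branch: the substance is the complexity bound $p(m)\le m+1$, and the delicate point is justifying that a long reversed standard factor contributes no extra factors, i.e., that it genuinely sits inside a (reversed) standard Sturmian word of complexity $m+1$, so that uniform recurrence can transfer this bound to all of $\Omega$. By contrast, the orbit-closure identification in the SL branch is routine once one observes that the block length is constant; the only care needed there is the block alignment in the compactness extraction and the repeated use of aperiodicity to exclude $\sigma(S^\omega)$.
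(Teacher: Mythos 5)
Your proof is correct, and while your SL branch matches the paper's argument, your Sturmian branch takes a genuinely different route. The paper's dichotomy is on whether $\mathcal{L}(\Omega)$ contains infinitely many reversed standard words; in the affirmative case it takes a limit of points of $\Omega$ having longer and longer reversed standard words as prefixes, asserts that this limit is a standard Sturmian word lying in $\Omega$, and concludes by minimality plus \autoref{thm:sturmian_invariant}. You instead split on whether the reversed standard words $S_k$ attached to the primitive solutions via \autoref{prp:sl_product} have bounded length, and in the unbounded case you never construct a Sturmian point at all: you use uniform recurrence (from minimality) to trap all length-$m$ factors of $\Omega$ inside a single long reversed standard factor, count at most $m+1$ of them, and conclude that $\Omega$ is Sturmian via Morse--Hedlund and aperiodicity. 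This costs two extra classical inputs (uniform recurrence of minimal subshifts, Morse--Hedlund) but buys robustness: it sidesteps the paper's somewhat delicate identification of the limit word as a standard Sturmian word, needing only a factor count for one finite word. On that count, your phrase ``factor of the reversal of a standard Sturmian word'' should be repaired---reversing an infinite word is not defined---to the statement you actually need: a reversed standard word has exactly as many length-$m$ factors as the standard word itself (reversal is a bijection on factor sets), and a standard word is a prefix of a standard Sturmian word, hence has at most $m+1$ of them. In the SL branch your argument coincides with the paper's---compactness extraction of a point of $\Omega\cap\{S,L\}^\omega$, aperiodicity to exclude $\sigma(S^\omega)$, then \autoref{prp:sl_periodic} with shift-invariance of $\Omega$---except that you spell out the block-alignment argument showing that every element of $\Omega$ is a shift of an SL-word, a step the paper compresses into the single sentence ``Hence minimality implies that $\Omega$ is an SL-subshift''; your expansion is a worthwhile addition. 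Finally, your up-front reduction to infinitely many distinct primitive solutions (via the ``power of a primitive solution'' clause of \autoref{thm:main}, compactness, and aperiodicity) is performed in the paper inside its second case rather than at the outset; both orderings are sound.
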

  \begin{proof}
    If the language of $\Omega$ contains infinitely many reversed standard words, then there exists an infinite word
    $\infw{u}$ in $\Omega$ having arbitrarily long reversed standard words as prefixes. In other words, the word
    $\infw{u}$ is a standard Sturmian word. Then the subshift $\Omega$ is Sturmian because the subshift generated by
    $\infw{u}$ equals $\Omega$ by minimality. Then $\sqrt{\Omega} \subseteq \Omega$ by
    \autoref{thm:sturmian_invariant}. Thus we may suppose that the language of $\Omega$ contains only finitely many
    reversed standard words.

    Since the language of $\Omega$ contains infinitely many solutions to \eqref{eq:square}, we can find a sequence
    $(Z_n)$ of solutions converging to a word $\infw{w}$ in $\Omega$. Since the language of $\Omega$ contains only
    finitely many reversed standard words and $\Omega$ is aperiodic, all but finitely many solutions in the sequence
    $(Z_n)$ are powers of type II solutions by \autoref{thm:main}. There must exist a fixed reversed standard word $S$
    such that $Z_n \in \{S, L(S)\}^+ \setminus (S^+ \cup L(S)^+)$ for all $n$ large enough; otherwise there exists
    infinitely many reversed standard words in the language of $\Omega$. It follows that
    $\infw{w} \in \{S, L(S)\}^\omega \setminus \sigma(S^\omega)$, that is, the word $\infw{w}$ is an SL-word. Hence
    minimality implies that $\Omega$ is an SL-subshift. Then by \autoref{prp:sl_periodic}, there exists a word
    $\infw{v}$ in $\Omega$ such that $\sqrt{\infw{v}}$ is purely periodic. Thus $\sqrt{\infw{v}} \notin \Omega$ since
    $\Omega$ is aperiodic by assumption. In other words, we see that $\sqrt{\Omega} \not\subseteq \Omega$.
  \end{proof}

  Let us end this section by making a few remarks on attacking the remaining cases of \autoref{conjecture}. One of the
  first things that comes to mind is to ask if an invariant subshift $\Omega$ must necessarily contain a fixed point.
  So far our only method for constructing fixed points is to use solutions to \eqref{eq:square}, so if a fixed point
  exists and a fixed point must be constructed in this way, \autoref{thm:main_square_root} would show that
  \autoref{conjecture} is true. Moreover, we show in the next lemma that every square prefix of a fixed point must
  correspond to a solution to \eqref{eq:square}. However, below in \autoref{prp:no_square_prefix}, we construct an
  aperiodic fixed point having finitely many square prefixes, which casts some doubt on the workability of these ideas.

  \begin{lemma}
    Let $\infw{w}$ an optimal squareful word that is a fixed point. If $X^2$ is a prefix of $\infw{w}$, then $X$ is a
    solution to \eqref{eq:square}.
  \end{lemma}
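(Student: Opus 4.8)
The plan is to reduce the claim to the already established \autoref{lem:square_prefix}, which proves the analogous statement for finite words. The key is that the fixed point hypothesis lets us truncate $\infw{w}$ to a finite prefix while preserving exactly the hypotheses that lemma requires.

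First I would write $\infw{w}$ as its product of minimal squares $\infw{w} = X_1^2 X_2^2 \dotsm$. Since $\infw{w}$ is a fixed point, $\sqrt{\infw{w}} = X_1 X_2 \dotsm = \infw{w}$. Given the square prefix $X^2$ of $\infw{w}$, I would choose an index $m$ large enough that $X^2$ is a prefix of the finite word $w' = X_1^2 \dotsm X_m^2$; such an $m$ exists because the prefixes $X_1^2 \dotsm X_m^2$ exhaust $\infw{w}$ as $m \to \infty$.

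Next I would verify that $w'$ meets the hypotheses of \autoref{lem:square_prefix}. Being a prefix of the optimal squareful word $\infw{w}$, the word $w'$ lies in $\Lang{\oa,\ob}$, and since it is by construction a product of minimal squares, it lies in $\Pi(\oa,\ob)$. Moreover $\sqrt{w'} = X_1 \dotsm X_m$ is a prefix of $\sqrt{\infw{w}} = \infw{w}$. The crucial point is the length comparison $\abs{\sqrt{w'}} = \abs{w'}/2 < \abs{w'}$: both $\sqrt{w'}$ and $w'$ are prefixes of $\infw{w}$, with the former no longer than the latter, so $\sqrt{w'}$ is in fact a prefix of $w'$. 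With both hypotheses in hand and $X^2$ a prefix of $w'$, \autoref{lem:square_prefix} immediately yields that $X$ is a solution to \eqref{eq:square}.

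I do not expect a serious obstacle here. The only subtlety is the step that upgrades ``$\sqrt{w'}$ is a prefix of $\infw{w}$'' to ``$\sqrt{w'}$ is a prefix of $w'$,'' which is precisely what the finite version needs; this is where the fixed point hypothesis does its work, guaranteeing that the square-root prefixes $X_1 \dotsm X_m$ stay aligned with $\infw{w}$ rather than drifting away from it.
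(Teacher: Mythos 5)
Your proof is correct and is exactly the argument the paper intends: the paper's own proof of this lemma is simply ``This follows immediately from \autoref{lem:square_prefix},'' and your truncation to a finite prefix $w' = X_1^2 \dotsm X_m^2$ together with the observation that the fixed-point property makes $\sqrt{w'}$ a prefix of $w'$ is the natural way to spell out that reduction.
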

  \begin{proof}
    This follows immediately from \autoref{lem:square_prefix}.
  \end{proof}

  \begin{proposition}\label{prp:no_square_prefix}
    There exists an optimal squareful word $\infw{w}$ that is a fixed point and has exactly one square prefix.
  \end{proposition}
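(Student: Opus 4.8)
The plan is to realize $\infw{w}$ as an SL-word: fix a reversed standard word $S$ with $\abs{S} > \abs{S_6}$, put $L = L(S)$, and take $\infw{w} = \pat[S]{\mathbf u}$ for an infinite pattern $\mathbf u = c_0 c_1 c_2 \dotsm \in \{S,L\}^\omega$ chosen to be \emph{doubling-invariant}, i.e. $c_{2i} = c_i$ for all $i$. Using $\sqrt{c_{2i}c_{2i+1}} = c_{2i}$ from \autoref{lem:sl_square_root} together with the uniqueness of the minimal-square factorization, one computes $\sqrt{\infw{w}} = c_0 c_2 c_4 \dotsm$, so doubling-invariance is exactly the condition making $\infw{w}$ a fixed point. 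First I would arrange that $\mathbf u$ is aperiodic (whence $\infw{w}$ is aperiodic, and optimal squareful since $S, L \in \Lang{\oa,\ob}$ and the product stays in the language) with $c_0 = S$ and $c_1 = L$; a doubling-invariant sequence is determined by its values on odd indices, leaving ample freedom to secure aperiodicity.

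The central tool is the preceding lemma: every square prefix $u^2$ of $\infw{w}$ has $u$ a solution to \eqref{eq:square}. By \autoref{prp:sl_product} such a $u$ lies in $\{S', L(S')\}^+$ for the unique reversed standard word $S'$ of slope $\slope(u)$, and I would split into cases according to whether $\slope(u) = \slope(S)$. If $\slope(u) = \slope(S)$, then $S' \in \{S, L\}$ (reversed standard words are determined up to $L$ by their slope), so $u$ is block-aligned and $u^2$ is a prefix exactly when $\mathbf u$ begins with the square of its length-$(\abs{u}/\abs{S})$ prefix; hence I would choose $\mathbf u$ to have \emph{no} square prefix, which annihilates every block-aligned square and is compatible with doubling-invariance and aperiodicity (since $c_0 = S$ is the only letter that can initiate a match, one only needs later occurrences of $S$ not to complete a square, achievable by a greedy choice of the odd-index values). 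For the off-slope case $\slope(u) \ne \slope(S)$ I would invoke the weight machinery with the weights normalized to $\slope(S)$: applying \autoref{lem:narrow_tube} blockwise gives $\psw(\infw{w}) \subseteq [\omega_0, \omega_1]$, and since each block is zero-sum the running sum resets at every boundary. As $\Sigma(u^2) = 2\Sigma(u) \in [\omega_0,\omega_1]$, we get $\abs{\Sigma(u)} < \tfrac12$; writing $\slope(u) = c'/d'$, $\slope(S) = c/d$ and $\abs{u} = m\abs{S'}$ this forces $\abs{m(c'd - cd')} < d/2$, so such squares are pinned to the very beginning of $\infw{w}$.

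It then remains to pick $S$ so that exactly one off-slope square occurs and that it is the minimal square at position $0$. Concretely, for $\oa = 1$, $\ob = 0$ I would take $S = 10010010$, a reversed standard word of slope $3/8$ with $\abs{S} = 8 > 5 = \abs{S_6}$: a direct check shows its only square prefix is $(100)^2 = S_5^2$, and that the prefix $S\,L = 10010010\,01010010$ introduces no further short square (in particular the $S$–$L$ boundary destroys the potential square $(100)^4$). Thus $X_1 = S_5$ and $X_1^2 = (100)^2$ is the unique square prefix. I expect the main obstacle to be the off-slope case: the narrow-tube bound limits the \emph{defect} $\abs{c'd - cd'}$ but not directly the length $\abs{S'}$, so ruling out a long reversed standard word $S'$ of slope close to $\slope(S)$ whose square opens $\infw{w}$ will require the synchronization (border) structure of the blocks together with the absence of square prefixes in $\mathbf u$. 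That step, and the explicit verification of the finitely many short squares for the chosen $S$, are where the argument must be carried out carefully.
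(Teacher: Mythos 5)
Your fixed-point construction is sound, and it is genuinely different from the paper's: the paper does not use SL-words here at all, but instead takes (for $\ob = 0$) the word $\infw{w} = S_5^2 S_6^2 \prod_{i \geq 0} (S_3^{2^i} S_6^{2^i})^2$, whose doubling exponents make the fixed-point property a telescoping computation and whose lopsided block structure is what makes the ``no long square prefixes'' argument tractable. Within your route, the reduction of same-slope square prefixes to square prefixes of the pattern $\infw{u}$ is correct — block alignment is automatic because $\abs{S} = \abs{L(S)}$ forces any factorization over $\{S, L\}$ to have boundaries at multiples of $\abs{S}$ — and a suitable pattern exists (the Thue--Morse word is doubling-invariant, aperiodic, and has no square prefix), so the fixed-point and on-slope parts can be completed.

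The off-slope case, however, is not a verification to be ``carried out carefully''; it is the crux of the proposition, and your argument does not close it. The narrow-tube estimate gives $\abs{m(c'd - cd')} \leq d/2$, which bounds only the number $m$ of $S'$-blocks in $u$, not the block length $d' = \abs{S'}$; since there are infinitely many reduced fractions $c'/d'$ with $\abs{c'd - cd'} = 1$ and $d'$ unbounded (take $c'd - cd' = \pm 1$), the candidate squares have unbounded length, so they are \emph{not} pinned to the beginning of $\infw{w}$ — they all start at position $0$ by definition, and what is unbounded is how far into $\infw{w}$ they reach. The threat is concrete in your own example: for $S = 10010010$ (slope $3/8$), the length-$11$ prefix of $\infw{w} = \pat[S]{SLL\dotsm}$, namely $10010010010$, is itself a reversed standard word, of slope $4/11$ and defect $\abs{4 \cdot 8 - 3 \cdot 11} = 1$, and whether its square is a prefix of $\infw{w}$ is decided only at position $19$, inside the third $\{S,L\}$-block; nothing in your argument controls such coincidences, and they must be excluded for \emph{every} reversed standard word of nearby slope simultaneously. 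This is precisely the role played in the paper's proof by the marker word $(10^{\oa+1})^3 10^\oa$, whose occurrences in the paper's $\infw{w}$ are so constrained that all square prefixes $u^2$ with $\abs{u} \geq 2\abs{S_5S_6S_3S_6}$ are killed at once, leaving finitely many short squares to check by hand. Until you supply an analogue of that synchronization step for your SL-word, the proposition is not proved.
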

  \begin{proof}
    Let $\ob = 0$, and define an infinite word $\infw{w}$ as follows:
    \begin{equation*}
      \infw{w} = S_5^2 S_6^2 \cdot \prod_{i = 0}^\infty (S_3^{2^i} S_6^{2^i})^2.
    \end{equation*}
    We show that $\infw{w}$ has the claimed properties. First of all, we have
    $S_5^2 S_6^2 (S_3 S_6)^2 \in \Pi(\oa, \ob)$, and
    \begin{equation*}
      \sqrt{S_5^2 S_6^2 S_3 S_6 S_3 S_6} = S_5 S_6 \cdot \sqrt{010^\oa 10^{\oa+1}10^\oa 010^\oa 10^{\oa+1}10^\oa} = S_5 S_6 \cdot S_2 S_1 S_6 = S_5^2 S_6^2.
    \end{equation*}
    Clearly $(S_3^{2^i} S_6^{2^i})^2 \in \Pi(\oa, \ob)$ when $i \geq 1$, and then
    \begin{equation*}
      \sqrt{(S_3^{2^i} S_6^{2^i})^2} = (S_3^{2^{i-1}} S_6^{2^{i-1}})^2.
    \end{equation*}
    Therefore $\infw{w}$ is optimal squareful and $\sqrt{\infw{w}} = \infw{w}$. Let $u^2$ be a prefix of $\infw{w}$
    with $\abs{u} \geq 2\abs{S_5 S_6 S_3 S_6}$. In particular, the word $S_5^2 S_6^2 0$ is a prefix of $u$. The suffix
    $u$ of $u^2$ occurs in $\infw{w}$ in some concatenation
    $\smash[t]{(S_3^{2^i} S_6^{2^i})^2 (S_3^{2^{i+1}} S_6^{2^{i+1}})^2}$ of two blocks for some $i \geq 1$. Let
    $z = (10^{\oa+1})^3 10^\oa$. Now $u$ has prefix $z$. No word of the form $\smash[t]{S_6^{2j}}$ with $j \geq 1$
    contains $(10^{\oa+1})^2 1$, so either $z$ is a suffix of $\smash[t]{S_3^{2^i}}$ or $\smash[t]{S_3^{2^{i+1}}}$ or
    the prefix $\smash[t]{10^{\oa+1}10^\oa}$ of $z$ is a suffix of $\smash[t]{S_6^{2^i}}$. The prefix $z$ is followed in $u$ by $S_6 0$.
    Since $S_6 0$ is not a prefix of
    $S_6^2$, it follows that the latter option is true: the prefix $10^{\oa+1}10^\oa$ of $z$ is a suffix of
    $\smash[t]{S_6^{2^i}}$. This in turn means that $\smash[t]{S_6^{2^i}}$ is followed by $010^{\oa+1} 10^\oa 10^{\oa+1} 10^{\oa+1}$. This
    word is not a prefix of $S_3^2 S_6^2$ or $S_3^4$, so we obtain a contradiction. The conclusion is that $u$ does not
    exist, and for each square prefix $v^2$ of $\infw{w}$, we have $\abs{v} < 2\abs{S_5 S_6 S_3 S_6}$. It is a
    straightforward task to verify that $\infw{w}$ has exactly one square prefix.
  \end{proof}

  Notice that the word $\infw{w}$ of the proof of \autoref{prp:no_square_prefix} is not a counterexample to
  \autoref{conjecture}. First of all, the subshift generated by $\infw{w}$ is not minimal. Secondly, the subshift
  generated by $\Omega$ is not invariant. Indeed, consider the square root of the word $\infw{z}$ obtained from
  $\infw{w}$ by removing its first $\abs{S_6}$ letters. We have
  \begin{equation*}
    \sqrt{\infw{z}} = 010^\oa 10^{\oa+1} 10^{\oa+1} 10^\oa 10^{\oa+1} 10^{\oa+1} 10^\oa 10^{\oa+1} 10^{\oa+1} 10^\oa 1 \dotsm.
  \end{equation*}
  Similar to the proof of \autoref{prp:no_square_prefix}, it is straightforward to verify that this visible prefix is
  not a factor of $\infw{w}$.

  We considered above only square prefixes, but arbitrary positions could be considered as well. While an optimal
  squareful word contains an occurrence of a square at each position, it is unclear if long squares need to appear.
  Every position of a Sturmian word begins with arbitrarily long squares
  \cite[Lemma~8]{2001:transcendence_of_sturmian_or_morphic_continued_fractions}, and the language of the word
  $\infw{w}$ constructed in the proof of \autoref{prp:no_square_prefix} contains infinitely many squares.

  \begin{question}
    Does there exist an optimal squareful word whose language contains only finitely many squares?
  \end{question}

  \section{Remark on Periodic Points}\label{sec:remarks}
  The results presented in this paper concern fixed points of the square root map, but other periodic points could be
  considered as well. Clearly solutions to \eqref{eq:square} are not helpful in constructing $p$-periodic points when
  $p > 1$. We are not aware of a general construction method, but we show next that aperiodic proper $2$-periodic
  points exist. By $\sqrt[2]{\infw{w}}$ we mean the second iterate of $\infw{w}$.

  \begin{proposition}\label{prp:2-periodic_point}
    There exists an optimal squareful word $\infw{w}$ such that $\sqrt{\infw{w}} \neq \infw{w}$ and
    $\sqrt[2]{\infw{w}} = \infw{w}$.
  \end{proposition}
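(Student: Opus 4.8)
The plan is to work entirely inside the alphabet $\{S, L\}$, where $S$ is a fixed reversed standard word with $\abs{S} > \abs{S_6}$ and $L = L(S)$. First I would record the crucial reduction: by \autoref{lem:sl_square_root} every length-two block $\mathcal{P}_S(AB)$ with $A, B \in \{S, L\}$ lies in $\Pi(\oa, \ob)$ and satisfies $\sqrt{\mathcal{P}_S(AB)} = \mathcal{P}_S(A)$. Hence, for any $\infw{u} = u_0 u_1 u_2 \dotsm \in \{S, L\}^\omega$, grouping $\infw{w} = \mathcal{P}_S(\infw{u})$ into consecutive blocks of two letters exactly as in the proof of \autoref{prp:pattern_solution} shows that the minimal-square factorization of $\infw{w}$ respects these block boundaries, so $\sqrt{\infw{w}} = \mathcal{P}_S(u_0 u_2 u_4 \dotsm)$. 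Thus, on $\{S,L\}^\omega$, the square root map is conjugate via $\mathcal{P}_S$ to the even-index decimation $D\colon (u_n)_{n\geq 0} \mapsto (u_{2n})_{n\geq 0}$.

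With this reduction in hand, the proposition becomes a purely combinatorial statement about sequences: I must produce an aperiodic sequence $(u_n)_{n \geq 0}$ over $\{S, L\}$ that is fixed by $D^2$ but not by $D$. Since $D^2\colon (u_n) \mapsto (u_{4n})$, the requirement $\sqrt[2]{\infw{w}} = \infw{w}$ becomes $u_{4n} = u_n$ for all $n$, while $\sqrt{\infw{w}} \neq \infw{w}$ becomes $u_{2m} \neq u_m$ for some $m$ (using that $\mathcal{P}_S$ is injective since the first two letters of $S$ are distinct). Aperiodicity of $(u_n)$ transfers to aperiodicity of $\infw{w}$ because $S$ and $L$ have equal length, and together with $\{S,L\}^\omega \subseteq \Lang{\oa,\ob}$ this guarantees that $\infw{w}$ is optimal squareful.

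For the construction I would set $u_0 = S$ and, for $n \geq 1$, let $u_n = S$ if $\nu_2(n)$ is even and $u_n = L$ if $\nu_2(n)$ is odd, where $\nu_2$ denotes the $2$-adic valuation. Since $\nu_2(4n) = \nu_2(n) + 2$ and $\nu_2(2n) = \nu_2(n) + 1$, this immediately gives $u_{4n} = u_n$ for all $n$ and $u_{2n} \neq u_n$ for every $n \geq 1$, so both dynamical requirements hold at once. The only genuine obstacle is aperiodicity, and I would resolve it by observing that $(u_n)_{n \geq 1}$ satisfies $u_{2n} = \overline{u_n}$ (where $\overline{S} = L$, $\overline{L} = S$) and $u_{2n+1} = S$, which identifies it, after grouping into pairs, with the fixed point of the primitive uniform substitution $\tau\colon S \mapsto SL,\ L \mapsto SS$. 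This is precisely the period-doubling substitution, whose fixed point is well known to be aperiodic, and prepending the single letter $u_0$ does not affect this. Assembling the pieces then yields an optimal squareful word $\infw{w} = \mathcal{P}_S(\infw{u})$ with $\sqrt{\infw{w}} \neq \infw{w}$ and $\sqrt[2]{\infw{w}} = \infw{w}$, as required.
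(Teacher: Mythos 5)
Your proposal is correct, but it takes a genuinely different route from the paper's proof. The paper argues by explicit computation: with $\ob = 0$ it builds the word $\infw{w} = S_2^2 S_1^2 \cdot \prod_{n}(S_6^2)^{r(n)}(S_3^2)^{s(n)}$ with exponents growing by factors of $4$, verifies $\sqrt{\infw{w}} \neq \infw{w}$ and $\sqrt[2]{\infw{w}} = \infw{w}$ by hand using identities among the minimal square roots (such as $S_3 S_6^2 = S_2^2 S_1^2 S_4^2 S_3$), and gets aperiodicity from the growing exponents. You instead work in the SL-setting of \autoref{sec:applications}: via $\mathcal{P}_S$, the square root map on $\{S,L\}^\omega$ becomes the decimation $(u_n)_{n\geq 0} \mapsto (u_{2n})_{n\geq 0}$ --- the infinite-word version of the block argument in \autoref{prp:pattern_solution}, justified by \autoref{lem:sl_square_root} together with uniqueness of factorizations into minimal squares --- so the problem reduces to exhibiting an aperiodic sequence with $u_{4n} = u_n$ but $u_{2n} \neq u_n$, which the parity of $\nu_2(n)$ supplies at once; your identification of that sequence with the fixed point of the primitive substitution $S \mapsto SL$, $L \mapsto SS$ (period-doubling) correctly settles aperiodicity, and prepending $u_0$ is harmless.

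Two remarks. The one soft spot is the assertion ``$\{S,L\}^\omega \subseteq \Lang{\oa,\ob}$'': read as intended (every factor of every word of $\{S,L\}^\omega$ lies in $\Lang{\oa,\ob}$), this is true but is proved nowhere in this paper, and it does not follow from \autoref{lem:sl_square_root} alone, since an arbitrary infinite product of minimal squares need not have all factors in $\Lang{\oa,\ob}$ (already $S_4^2 S_1^2$ contains $0^{\oa+2}$). It is a fact about SL-words imported from the cited earlier papers and used implicitly throughout \autoref{sec:applications}; since the paper's own proof ends with ``it is straightforward to check that $\infw{w}$ is optimal squareful'', your level of rigor here matches the paper's. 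On the other hand, your approach buys strictly more than the paper's. The paper laments afterwards that its $\infw{w}$ is not uniformly recurrent, whereas your word lies in the image under $\mathcal{P}_S$ of the minimal period-doubling subshift and, granted a short extra argument that prepending $S$ keeps the sequence inside that subshift, is uniformly recurrent. Moreover, replacing the parity of $\nu_2(n)$ by $\nu_2(n) \bmod p$ (say, letter $L$ exactly when $\nu_2(n) \equiv 0 \pmod{p}$) yields proper $p$-periodic points for every $p > 1$ by the same decimation argument, which speaks directly to the Question closing \autoref{sec:remarks}.
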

  \begin{proof}
    Define two integer sequences $(r(n))_n$ and $(s(n))_n$ as follows: $r(1) = s(1) = 2$, $r(2) = 6$, $s(2) = 8$ and
    $r(n+1) = 4r(n)$, $s(n+1) = 4s(n)$ for all $n \geq 2$. Let $\ob = 0$, and define
    \begin{equation*}
      \infw{w} = S_2^2 S_1^2 \cdot \prod_{n=1}^\infty (S_6^2)^{r(n)} (S_3^2)^{s(n)},
    \end{equation*}
    so that
    \begin{equation*}
      \sqrt{\infw{w}} = S_2 S_1 \cdot \prod_{n=1}^\infty S_6^{r(n)} S_3^{s(n)} = S_2 S_1 S_6^2 S_3^2 \cdot \prod_{n=2}^\infty S_6^{r(n)} S_3^{s(n)} = S_2^2 S_1^2 S_4^2 S_3^2 S_3 \cdot \prod_{n=2}^\infty S_6^{r(n)} S_3^{s(n)}.
    \end{equation*}
    Since $S_2^2 S_1^2 S_4^2$ is not a prefix of $\infw{w}$, we see that $\sqrt{\infw{w}} \neq \infw{w}$. Notice that
    the word $S_2 S_1 S_4$ is a solution to \eqref{eq:square}. Since $S_3 S_6^2 = S_2^2 S_1^2 S_4^2 S_3$, we see that
    $S_3 (S_6^2)^i = (S_2^2 S_1^2 S_4^2)^i S_3 = (S_2 S_1 S_4)^{2i} S_3$ for all $i$. It follows that
    \begin{equation*}
      S_2^2 S_1^2 S_4^2 S_3^2 S_3 \cdot S_6^6 S_3^8 = S_2^2 S_1^2 S_4^2 S_3^2 \cdot (S_2^2 S_1^2 S_4^2)^3 S_3 \cdot S_3^8,
    \end{equation*}
    so
    \begin{align*}
      \sqrt[2]{\infw{w}} &= S_2 S_1 S_4 S_3 (S_2 S_1 S_4)^3 S_3^4 \cdot \sqrt{\prod_{n=3}^\infty S_3 S_6^{r(n)} S_3^{s(n)-1}} \\
                         &= S_2^2 S_1^2 (S_6^2)^2 S_3^3 \cdot \sqrt{\prod_{n=3}^\infty (S_2^2 S_1^2 S_4^2)^{r(n)/2} S_3^{s(n)}} \\
                         &= S_2^2 S_1^2 (S_6^2)^2 S_3^3 \cdot \prod_{n=3}^\infty (S_2 S_1 S_4)^{r(n)/2} S_3^{s(n)/2} \\
                         &= S_2^2 S_1^2 (S_6^2)^2 S_3^3 \cdot \prod_{n=3}^\infty (S_2^2 S_1^2 S_4^2)^{r(n)/4} (S_3^2)^{s(n)/4} \\
                         &= S_2^2 S_1^2 (S_6^2)^2 S_3^4 \cdot \prod_{n=3}^\infty (S_6^2)^{r(n)/4} (S_3^2)^{s(n)/4} \\
                         &= S_2^2 S_1^2 (S_6^2)^2 (S_3^2)^2 \cdot \prod_{n=2}^\infty (S_6^2)^{r(n)} (S_3^2)^{s(n)} \\
                         &= \infw{w}.
    \end{align*}
    Since the sequences $(r(n))_n$ and $(s(n))_n$ are increasing, the word $\infw{w}$ is aperiodic. It is
    straightforward to check that $\infw{w}$ is optimal squareful.
  \end{proof}

  We do not how to produce $p$-periodic points for $p > 2$. Moreover, we are not completely satisfied with the word
  $\infw{w}$ of the proof of \autoref{prp:2-periodic_point} because the subshift generated by $\infw{w}$ is not
  minimal, that is, the word $\infw{w}$ is not uniformly recurrent.

  \begin{question}
    Does there exist an optimal squareful word that is a proper $p$-periodic point for all $p > 1$? Can such words be
    taken as uniformly recurrent?
  \end{question}

  \printbibliography
	
\end{document}